\documentclass[a4paper,11pt]{article}

\usepackage{fourier}

\usepackage{amsmath, amssymb, amsthm, amssymb,  color, courier, float, graphicx, MnSymbol, microtype, multirow, verbatim, url}
\usepackage[normalem]{ulem}

\usepackage{xcolor}
\usepackage[colorlinks = true,
            linkcolor = red,   	
            urlcolor  = blue, 	
            citecolor = blue, 	
            anchorcolor = blue 	
            ]{hyperref}

\usepackage{authblk} 
\usepackage{rotating} 

\usepackage{tikz}
\usepackage{caption}

\usepackage{algorithm2e}

\newtheorem{theorem}{Theorem}
\newtheorem{lemma}[theorem]{Lemma}
\newtheorem{corollary}[theorem]{Corollary}
\newtheorem{conjecture}[theorem]{Conjecture}
\newtheorem{definition}[theorem]{Definition}

\newtheorem{proposition}[theorem]{Proposition}
\newtheorem{question}[theorem]{Question}

\newtheorem{claim}[theorem]{Claim}
\numberwithin{theorem}{section}

\makeatletter
\newtheorem*{rep@theorem}{\rep@title}
\newcommand{\newreptheorem}[2]{%
\newenvironment{rep#1}[1]{%
 \def\rep@title{#2 \ref{##1}}%
 \begin{rep@theorem}}%
 {\end{rep@theorem}}}
\makeatother
\newreptheorem{theorem}{Theorem}
\newreptheorem{lemma}{Lemma}
\newreptheorem{corollary}{Corollary}
\newreptheorem{conjecture}{Conjecture}
\newreptheorem{definition}{Definition}
\newreptheorem{observation}{Observation}
\newreptheorem{problem}{Problem}
\newreptheorem{proposition}{Proposition}
\newreptheorem{question}{Question}
\newreptheorem{remark}{Remark}

\newcommand \minc{min\text{-}cost}
\newcommand \maxc{max\text{-}cost}

\newcommand\red[1]{{\color{red}#1}}
\newcommand\blue[1]{{\color{blue}#1}}


\title{Graph properties in node-query setting:  effect of breaking symmetry}

\author[1]{Nikhil Balaji 
}
\author[2]{Samir Datta 
}
\author[3]{Raghav Kulkarni 
}
\author[4]{\\Supartha Podder 
}
\affil[1,2]{\footnotesize Chennai Mathematical Institute,  India\\ \{nikhil, sdatta\}@cmi.ac.in}
\affil[3,4]{\footnotesize Centre for Quantum Technologies, Singapore\\ \{kulraghav, supartha\}@gmail.com}

\begin{document}
\date{}
\maketitle

\begin{abstract}
The query complexity of graph properties is well-studied when queries are on edges.
We investigate the same when queries are on nodes. 
In this setting a graph  $G = (V, E)$ on $n$ vertices
and a property $\mathcal{P}$  are given. 
A black-box access to an unknown subset $S \subseteq V$ is 
provided via queries of the form `Does $i \in S$?'.
 We are interested in the minimum
number of queries needed in worst case in order to determine whether $G[S]$ -- the subgraph of $G$ induced on $S$ --
satisfies $\mathcal{P}$.

Apart from being combinatorially rich, this setting appears to be a natural abstraction of several scenarios in areas including
computer networks and social networks where one is interested in properties of the underlying sub-network on a set of active nodes.

Another reason why we found this setting interesting is because it allows us to initiate
a systematic study of breaking symmetry in the context of query complexity of graph properties. In particular,
we focus on hereditary graph properties -- the properties closed under deletion of vertices as well as edges.
The famous Evasiveness Conjecture asserts that
even with a minimal symmetry assumption on $G$, namely that of vertex-transitivity, the query complexity for any hereditary graph property in our setting is the worst
possible, i.e., $n$. 

We show that in the absence of any symmetry on $G$ it can fall as low
as $O(n^{1/(d + 1) })$ where $d$ denotes the minimum possible degree of a minimal forbidden sub-graph for $\mathcal{P}$. In particular, every hereditary property benefits at least quadratically. 
The main question left  open is: 
can it go exponentially low for some hereditary property?
 We show that the answer is no for any hereditary property with {finitely many} forbidden subgraphs by exhibiting a bound of $\Omega(n^{1/k})$ for some constant $k$ depending only on the property. For general ones we rule out the possibility of the query complexity falling down to constant by showing $\Omega(\log n/ \log \log n)$ bound. Interestingly, our lower bound proofs rely on the famous Sunflower Lemma due to Erd\"os and Rado.

\noindent {\bf Keywords:} {\small Query Complexity, Graph Properties, Symmetry and Computation, Forbidden Subgraph}

\end{abstract}

\pagebreak
\section{Introduction}
\label{sec:intro}

\subsection{The query model}
The decision tree model (aka query model), perhaps due to its  simplicity and fundamental nature, has been extensively studied in the past
and still remains a rich source of many fascinating investigations.
In this paper we focus on Boolean functions, i.e., the functions of the form $f : \{0,1\}^n \to \{0, 1\}.$
 A deterministic decision tree $D_f$ for $f$
takes $x = (x_1, \ldots, x_n)$ as an input and determines the value
of $f(x_1, \ldots, x_n)$ using queries of the form 
$\text{``is } x_i = 1? \text{"}$.  Let $C (D_f, x)$ denote the cost of the computation, that is the number of queries made by $D_f$ on
input $x.$ The {\em deterministic decision tree complexity} (aka deterministic query complexity) of $f$ is defined as
\[D(f) = \mathop{\min}_{D_f} \max_x C (D_f, x). \]

We encourage the reader to see an excellent survey by Buhrman and de Wolf \cite{bw} 
on decision tree complexity. 
We also note that the randomized and the quantum variants \cite{bw} of decision trees have also been extensively studied
in the past. Several different variants such as parity decision trees have been studied in connection to communication complexity, learning,
and property testing \cite{sz, km, bck}.

\subsection*{Why are query models important?}
Variants of the decision tree model are fundamental for several reasons.
Firstly they occur naturally in connection to the other models of computation
such as communication complexity\cite{sz}, property testing\cite{bck}, learning\cite{km}, circuit complexity\cite{imp} etc.
Secondly decision tree models are much simpler to analyse compared to 
other models such as circuits. Thus one can actually hope to use them as 
a tool in the study of other models.
Thirdly these models are mathematically rich and beautiful. Several connections to algebra, combinatorics, topology, Fourier analysis, and number theory  \cite{odo, bbkn} make the decision tree models interesting on their own right.
Finally there remain some fascinating open questions \cite{kul-survey} in query complexity 
that have attracted the attention of generations of researchers
over the last few decades by their sheer elegance and notoriety.

\subsection{Graph properties in node-query setting}
In this paper 
we investigate the query complexity of graph properties. In particular we focus on
the following setting:
A graph  $G = (V, E)$
and a property $\mathcal{P}$  are fixed. 
We have access to $S \subseteq V$ via queries of the form ``Does $i \in S$?".
 We are interested in the minimum
number of queries needed in the worst case in order to determine whether $G[S]$ -- the subgraph of $G$ induced on $S$ --
satisfies $\mathcal{P}$, which we denote by $cost(\mathcal{P}, G)$. One may define a similar notion of cost 
for randomized and quantum 
models.

We call $G$ the base graph for $\mathcal{P}$. We say that a vertex $i$ of $G$ is relevant for $\mathcal{P}$ if there exists
some $S$ containing $i$ such that exactly one of $G[S]$  and $G[S - \{i\}]$ satisfies $\mathcal{P}$. 
 We say that $G$ is relevant for $\mathcal{P}$ if all its vertices are relevant for $\mathcal{P}$.
The minimum possible cost of  $\mathcal{P}$, denoted by\footnote{We slightly abuse this notation by omitting the subscript $n$.}
 $min\text{-}cost(\mathcal{P})$, 
is defined as follows: 
\[min\text{-}cost_n(\mathcal{P}) = \min_G\{cost(\mathcal{P}, G) \mid  \text {$G$ is relevant for $\mathcal{P}$ } \& \ |V(G)| = n\}. \]
Similarly one can define $max\text{-}cost(\mathcal{P})$ as follows:
\[max\text{-}cost_n(\mathcal{P}) = \max_G\{cost(\mathcal{P}, G) \mid  \text {$G$ is relevant for $\mathcal{P}$ } \& \ |V(G)| = n\}. \]
The $max\text{-}cost$ is a more natural notion of complexity when one is interested in studying the universal upper bounds.
Investigating the  $max\text{-}cost$ in our setting can indeed be a topic of an independent interest.
However, for the purpose of this paper, the notion of $min\text{-}cost$ will be more relevant as we are interested in finding how low 
can the universal lower bound on query complexity go under broken symmetry
(Refer to Section~\ref{sec:eff-sym-break} for more on symmetry).
  It turns out that in the presence of symmetry this bound  is $\Omega(n)$
for most of the properties and it is conjectured to be $\Omega(n)$ for any hereditary property in our setting.
Recall that  a hereditary property is a property of graphs, which is closed under deletion of vertices as well as edges. For instance 
acyclicity, bipartiteness, planarity, and containing
a triangle are hereditary properties whereas connectedness and containing a perfect matching are not.  Every hereditary property can be described by a 
(not necessarily finite) 
collection  of its 
forbidden subgraphs.\footnote{In our setting, every hereditary property is a monotone Boolean function.}
\footnote{We would like to highlight that although we didn't explicitly define $min\text{-}cost(\mathcal{P})$ or $max\text{-}cost(\mathcal{P})$ for randomized query model, all our lower bound proofs are based on sensitivity arguments and hence work even for randomized case.}

\subsection*{Why node-query setting might be interesting?}
Below we illustrate with some examples why it might be interesting to investigate
several complexity measures of a graph property in the node-query setting.
\subsubsection*{Example 1}
Consider a graph that models the associations in a social network, say the Facebook graph (where two nodes are adjacent if they are Facebook friends). At any given time, users can be online or offline. We might be interested in finding out if there is any user who is online and is influential,
in the sense that he/she has many neighbors (friends) who are also online at that time.
This problem can be formulated in our setting as whether the induced subgraph has a vertex of large degree or not. 
(Appendix~\ref{sec:bounded-degree})
\subsubsection*{Example 2}
Consider a physical network with several nodes and links between them. At any given time, the nodes of the network can be either
active or inactive. One way to find out if a node is active is to ping it (possibly by physically going to the site), which comes with some fixed cost.
For example, the underlying network could be the network of routers which are physically connected by wires. Some of the routers
may go on and off over time. At any given time, we want to know whether a message can be sent between two specified nodes
via the active routers. 
This problem can be formulated in our setting as whether the subgraph induced by active routers has a path between two specified 
nodes $s$ and $t$ or not. (Appendix~\ref{connectivity})

\subsubsection*{Example 3}
Consider a chemical lab which performs experiments with certain basic ingredients to build medicines. 
Suppose a concoction comes out of an experiment and one wants to know whether it is harmful or not. There are tests available for testing the presence of an ingredient in the concoction. The lab also has a table of which two ingredients together form a harmful combination. So the 
goal is to perform the tests for presence of individual ingredients to check if any of the harmful combination is present.
This problem can be formulated in our setting as whether the induced subgraph is empty or not.
(Appendix~\ref{sec:ind})

It appears that our setting is a natural abstraction of these  type of scenarios, where 
 one is interested in  the properties of subgraph induced
by active nodes in a network. 
To the best of our knowledge, no systematic study of node-query setting has been yet undertaken.  
Here we initiates such a line of inquiry for graph properties. In particular,  we focus on the role of presence and absence of {\em symmetry}.

\subsection{Effect of breaking symmetry}
\label{sec:eff-sym-break}
Another reason why our setting is interesting is that it allows us to study the effect of breaking symmetry on query complexities of graph properties. In particular, our setting provides a platform to compare
 the complexity of $\mathcal{P}$ when the base
graph $G$ has certain amount of symmetry with the complexity of $\mathcal{P}$ when $G$ has no symmetry whatsoever.
To formalize this, we define the notion of $\mathcal{G}\text{-}min\text{-}cost(\mathcal{P})$
for a class of graphs $\mathcal{G}$ by restricting ourselves only to graphs in $\mathcal{G}$.
\[\mathcal{G}\text{-}min\text{-}cost_n(\mathcal{P}) = \min_{G \in \mathcal{G}} \{cost(\mathcal{P}, G) \mid  \text {$G$ is relevant for $\mathcal{P}$ } \& \ |V(G)| = n\}. \]

When $\mathcal{G}$ has the highest amount of symmetry, i.e., when $\mathcal{G}$ is the class of complete graphs, then it is easy to see that for every hereditary $\mathcal{P}$, $\mathcal{G}\text{-}min\text{-}cost(\mathcal{P})$ is nearly the worst possible
,  i.e., $\Omega(n)$.  It  turns out that one does not require the whole symmetry of the complete graph
to guarantee the $\Omega(n)$ bound. Even weaker symmetry assumptions on graphs in $\mathcal{G}$, for instance being Caley graphs
of some group, indeed suffices. Thus it is natural to ask how much symmetry 
is required to guarantee  the $\Omega(n)$ bound.  
In fact, the famous Evasiveness Conjecture implies that even under the weakest form of symmetry on $\mathcal{G}$, i.e.,  when $\mathcal{G}$ is the class
of transitive graphs, for any hereditary property $\mathcal{P}$  the $\mathcal{G}\text{-}min\text{-}cost(\mathcal{P})$ would remain the highest possible, i.e., $n$.  
So for the complexity
to fall down substantially 
we might have to let go of the transitivity of $\mathcal{G}$. This is exactly what we do. In particular we take $\mathcal{G}$ to be the class of all graphs, i.e., we assume no
symmetry whatsoever. Note that in this case  $\mathcal{G}\text{-}min\text{-}cost(\mathcal{P}) = min\text{-}cost(\mathcal{P})$
that we defined earlier.  
Now a natural question is how low can
$min\text{-}cost(\mathcal{P})$ go in the absence of any symmetry? This is the main question addressed by our paper.
In particular, we show that  for any hereditary property $\mathcal{P}$, the $min\text{-}cost(\mathcal{P})$ falls down
at least quadratically, i.e, to $O(\sqrt n)$.  For some properties, it can go even further below (polynomially down) with polynomials of arbitrary constant degree,
i.e. to $O(n^{1/k})$ where $k$ is a constant depending only on the property. 
The main question left  open by our work is: does there exist a hereditary property $\mathcal{P}$ for which
 $min\text{-}cost(\mathcal{P})$ is exponentially low?  In other words:
\begin{question}
Is it true that for every hereditary property  $\mathcal{P}$ there exists an integer $k_\mathcal{P}>0$ such that
 \[ min\text{-}cost(\mathcal{P}) = \Omega(n^{1/k_\mathcal{P}})?\]
 \label{question:exponential}
\end{question}

\subsection*{Related work}
Understanding the effect of symmetry on computation is a very well-studied theme in the past. 
Perhaps its roots can also be traced back to 
the non-solvability of quintic equations by radicals -- the legendary work of 
Galois \cite{wiki}. In the context of query complexity, again there has been a 
substantial amount of effort invested in understanding
the role of symmetry. A recurrent theme here is to exploit the symmetry and some other structure \cite{ks} of the underlying functions to prove good lower bounds
on their query complexity. For instance the famous Andera-Rosenberge-Karp Conjecture \cite{kss} asserts that every non-trivial monotone graph property of $n$ vertex graphs (in the edge-query model) must  be evasive, i.e., its query complexity is $n \choose 2$.  While a weaker bound of  $\Omega(n^2)$ is known,
the conjecture remains widely open to this date. Several special cases 
of the conjecture have also been studied \cite{cks}. The randomized query complexity of monotone graph properties is also conjectured to be $\Omega(n^2)$ \cite{fsw}. The generalizations
of these conjectures for arbitrary transitive Boolean functions are also studied: In particular, recently Kulkarni  \cite{kul} has formulated the Weak-Evasiveness Conjecture for monotone transitive functions, which vastly generalize monotone graph properties. In the past, 
Lov\'  asz had conjectured \cite{pclg} the evasiveness of checking independence of $S$ exactly in our setting. Sun,Yao, and Zhang
\cite{syz}
 study query complexity of graph properties and several transitive functions including the circulant ones.
Their motivation was to investigate how low can the query complexity go if one drops the assumption of monotonicity or lower the amount of symmetry. In this paper, we follow their footsteps and ask the same question under no symmetry assumption whatsoever. The main difference between the past
works and this one is that most of the previous work exploit the symmetry to prove (or to conjecture) a good lower bound, whereas we investigate the consequences 
of breaking the symmetry for the query complexity. 

\subsection{Our main results} 
In this section we summarize our main results. 
Let $\mathcal{P}$ be a hereditary graph property
and $d_\mathcal{P}$ denote the minimum possible degree
of a minimal forbidden subgraph for $\mathcal{P}$.

\begin{theorem}
For any  hereditary graph property $\mathcal{P}$: \[min\text{-}cost(\mathcal{P}) = O(n^{1/(d_\mathcal{P} + 1)}).\] \label{thm:upper-d}
\label{thm:upper}
\end{theorem}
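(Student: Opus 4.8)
\emph{Overall plan.} It suffices to exhibit, for every large $n$, a base graph $G$ on $n$ vertices that is relevant for $\mathcal P$ and on which ``$G[S]\in\mathcal P$?'' can be decided adaptively with $O(n^{1/(d_{\mathcal P}+1)})$ queries; this bounds $\minc(\mathcal P)$. Abbreviate $d:=d_{\mathcal P}$ (we may assume $d\ge 1$, else the bound is vacuous). Fix once and for all a minimal forbidden subgraph $H$ of $\mathcal P$ with $\delta(H)=d$, a vertex $v$ of $H$ with $\deg_H(v)=d$, its neighbourhood $N_H(v)=\{u_1,\dots,u_d\}$, and $H_0:=H-v$; by minimality $H_0\in\mathcal P$, and $|V(H_0)|=h$ where $h+1=|V(H)|$. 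Let $t_0$ be the largest $t$ with $K_t\in\mathcal P$; one checks $d\le t_0\le h$ (no forbidden subgraph has at most $d$ vertices, since a minimal one has minimum degree $\ge d$; and $K_{h+1}\supseteq H$).

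\emph{The graph.} I would take $V(G)=A\sqcup B$ with $A$ a clique of size $a:=\lceil n^{1/(d+1)}\rceil$ and $B$ an independent set on the remaining $n-a$ vertices, and then spread $B$ as evenly as possible over the $\binom a d=\Theta(n^{d/(d+1)})$ groups $B_\sigma$ indexed by the $d$-subsets $\sigma\subseteq A$ — so that each group is nonempty and $|B_\sigma|=O(n^{1/(d+1)})$ — joining every $b\in B_\sigma$ to exactly the $d$ vertices of $\sigma$ and to nothing else. The intuition is that $A$ is small enough to probe entirely, while each ``shell'' vertex sees only $d$ ``core'' vertices, so a forbidden configuration through a shell vertex must sit on top of a $d$-subset of $A$ that is present in $S$.

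\emph{The decision tree and why it is correct.} First probe all of $A$ (that is $a$ queries), revealing $S\cap A$; since $G[S\cap A]=K_{|S\cap A|}$, if this clique is already $\notin\mathcal P$ (i.e.\ $|S\cap A|>t_0$) I answer ``$G[S]\notin\mathcal P$''. Otherwise $|S\cap A|\le t_0$, so only the $\le\binom{t_0}{d}=O(1)$ groups $B_\sigma$ with $\sigma\subseteq S\cap A$ can be relevant; probe all of them ($O(n^{1/(d+1)})$ further queries), and output whether $G[S^\star]\in\mathcal P$, where $S^\star$ is the set of probed vertices found to lie in $S$ (no further queries are needed, since $G[S^\star]$ is now fully known). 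Correctness reduces to $G[S]\notin\mathcal P\iff G[S^\star]\notin\mathcal P$: ``$\Leftarrow$'' is closure of $\mathcal P$ under vertex deletion; for ``$\Rightarrow$'' take a minimal forbidden $H'\subseteq G[S]$, so $\delta(H')\ge d$, and observe that a shell vertex $b\in V(H')\cap B_{\sigma_b}$ satisfies $d=|\sigma_b|\ge\deg_{G[S]}(b)\ge\deg_{H'}(b)\ge\delta(H')\ge d$, forcing $\deg_{G[S]}(b)=d$, hence $\sigma_b\subseteq S$, hence $\sigma_b\subseteq S\cap A$ and $b\in S^\star$; since also $V(H')\cap A\subseteq S^\star$, we get $H'\subseteq G[S^\star]$. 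The total cost is $O(n^{1/(d+1)})=O(n^{1/(d_{\mathcal P}+1)})$.

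\emph{Relevance, and the main obstacle.} Core vertices are immediately relevant: for $x\in A$ take $S$ to be a $(t_0+1)$-subset of $A$ containing $x$, so $G[S]=K_{t_0+1}\notin\mathcal P$ but $G[S\setminus\{x\}]=K_{t_0}\in\mathcal P$. The delicate case is a shell vertex $b\in B_\sigma$, where I would aim to plant, through $b$ and induced in $G$, a configuration $J$ with $J\notin\mathcal P$ but $J-b\in\mathcal P$: choose an independent set $I$ of degree-$d$ vertices of $H_0$ disjoint from $\{u_1,\dots,u_d\}$ (possibly $I=\emptyset$) such that the graph $\Gamma$ obtained from $H_0$ by completing $C:=V(H_0)\setminus I$ into a clique still lies in $\mathcal P$, realise $\Gamma$ inside $G$ by placing $C$ on a clique of $A$ with $u_i$ mapped onto $\sigma$ and $I$ on suitable shell vertices, and then add $b$ (playing the role of $v$). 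Then $G[S]\supseteq H$ (so $\notin\mathcal P$) while $G[S\setminus\{b\}]=\Gamma\in\mathcal P$, witnessing relevance of $b$. \textbf{The one step that needs real care --- the crux of the proof --- is establishing that such an $I$ always exists (equivalently, that there is a $\Gamma\in\mathcal P$ carrying a $d$-clique whose one-vertex extension re-creates a forbidden subgraph)}; this is where the minimality of $H$, and the freedom to choose $H$ and $v$, are used. When $h=t_0$ one may simply take $I=\emptyset$ and $\Gamma=K_h$, and this difficulty disappears.
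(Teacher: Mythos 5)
Your construction and algorithm are essentially identical to the paper's: a clique core of size $\Theta(n^{1/(d_{\mathcal P}+1)})$ with pendant vertices of degree exactly $d_{\mathcal P}$ attached to the $d_{\mathcal P}$-subsets of the core, followed by the two-phase strategy (probe the whole core; if at most $t_0$ core vertices are present, probe only the constantly many pendant groups sitting over present $d_{\mathcal P}$-subsets). Your correctness argument --- an unprobed pendant vertex has fewer than $d_{\mathcal P}$ present neighbours and hence cannot lie in any minimal forbidden subgraph of $G[S]$ --- is exactly the paper's, written out more carefully, and is fine.

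The step you flag as the crux is a genuine gap, and in fact it cannot be closed for this construction in general. The paper disposes of it with the single sentence ``Note that every vertex of $G$ is relevant for $\mathcal P$,'' but that assertion can fail. Take $\mathcal F_{\mathcal P}=\{K_4,\,C_7\}$, so $d_{\mathcal P}=2$ and the shell vertices are cherries attached to pairs of core vertices. A cherry tip $b$ has degree $2$ in $G$, so it lies in no $K_4$; and any $C_7$ in $G$ contains at most $3$ shell vertices (they form an independent set, and an independent set in a $7$-cycle has size at most $3$), hence at least $4$ core vertices, hence a $K_4$ disjoint from $b$. Thus every $T\ni b$ with $G[T]\notin\mathcal P$ also has $G[T\setminus\{b\}]\notin\mathcal P$: no shell vertex is relevant, so this $G$ does not witness the bound. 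Your proposed repair (move an independent set $I$ of degree-$d$ vertices of $H_0$ onto other shell vertices and complete the rest into a clique $\Gamma\in\mathcal P$) fails on the same example: with $H=C_7$ one needs $|V(H_0)\setminus I|\le 3$ to keep $\Gamma$ free of $K_4$, i.e.\ an independent set of three internal vertices of a $6$-vertex path, which does not exist. So your instinct about where the real work lies is correct; making the shell vertices relevant requires a property-dependent gadget (for $\{K_4,C_7\}$ one can, e.g., attach length-$6$ paths to pairs of core vertices, so a pendant path plus one core edge forms a $C_7$ using only two core vertices) rather than the universal degree-$d$ pendant. As written, neither your argument nor the paper's establishes the relevance claim for arbitrary hereditary $\mathcal P$; everything else in your proposal matches the paper and is sound.
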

 
\begin{corollary}
\label{cor:upper-d}
For any hereditary graph property $\mathcal{P}$:  \[min\text{-}cost(\mathcal{P}) = O(\sqrt n).\] 
\end{corollary}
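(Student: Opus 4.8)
The plan is to read Corollary~\ref{cor:upper-d} straight off Theorem~\ref{thm:upper-d}; the only thing to do is bound the exponent. First I would check that $d_\mathcal{P}\ge 1$ for every non-trivial hereditary property $\mathcal{P}$. Recall that $d_\mathcal{P}$ is the minimum of $\delta(H)$ over all minimal forbidden subgraphs $H$ of $\mathcal{P}$, so it suffices to show that no minimal forbidden subgraph has an isolated vertex. Suppose $H$ is minimal forbidden and $v$ is isolated in $H$. Then $H-v$ is a proper subgraph of $H$, hence $H-v\in\mathcal{P}$ by minimality. But $H$ is obtained from $H-v$ by adjoining one isolated vertex, and every hereditary property considered here (forests, bipartiteness, planarity, $F$-freeness for a fixed graph $F$, and so on) is invariant under that operation; so $H\in\mathcal{P}$, contradicting that $H$ is forbidden. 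Hence $\delta(H)\ge 1$ for every minimal forbidden $H$, and therefore $d_\mathcal{P}\ge 1$.

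Next I would invoke the elementary fact that $t\mapsto n^{1/(t+1)}$ is non-increasing for $t\ge 0$. Thus $d_\mathcal{P}\ge 1$ gives
\[
n^{1/(d_\mathcal{P}+1)}\ \le\ n^{1/2}\ =\ \sqrt n ,
\]
and substituting into the bound $\minc(\mathcal{P})=O\!\left(n^{1/(d_\mathcal{P}+1)}\right)$ of Theorem~\ref{thm:upper-d} yields $\minc(\mathcal{P})=O(\sqrt n)$, with the implied constant depending only on $\mathcal{P}$, exactly as in the theorem. The bound $O(\sqrt n)$ is simply the $d_\mathcal{P}=1$ instance of Theorem~\ref{thm:upper-d}, witnessed for example by the ``edgeless'' property, whose unique minimal forbidden subgraph is $K_2$.

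There is essentially no obstacle beyond Theorem~\ref{thm:upper-d} itself; the corollary is bookkeeping. The single point that deserves a sentence of care is the inequality $d_\mathcal{P}\ge 1$, which relies on $\mathcal{P}$ being closed under adjoining isolated vertices. This holds for all properties studied in this paper, and fails only for degenerate hereditary properties whose forbidden family already contains a graph with an isolated vertex (such as ``having at most $k$ vertices''), where $d_\mathcal{P}=0$ and Theorem~\ref{thm:upper-d} gives only the trivial $O(n)$; such properties lie outside the intended scope. If one wishes to sidestep this altogether, one can state Corollary~\ref{cor:upper-d} for hereditary $\mathcal{P}$ with $d_\mathcal{P}\ge 1$, after which the two displayed inequalities above are the whole proof.
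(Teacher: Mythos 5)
Your proposal is correct and matches the paper's own (one-line) derivation: the corollary is read off Theorem~\ref{thm:upper-d} by lower-bounding $d_\mathcal{P}$, and your justification that $d_\mathcal{P}\ge 1$ (no minimal forbidden subgraph of a hereditary property closed under adding isolated vertices has an isolated vertex) is exactly what is needed for the exponent $1/(d_\mathcal{P}+1)\le 1/2$. The only difference is that the paper asserts $d_\mathcal{P}\ge 2$ for non-trivial $\mathcal{P}$, which is inconsistent with its own example of Emptiness (forbidden subgraph $K_2$, so $d_\mathcal{P}=1$, matching the $\Theta(\sqrt n)$ entry in the table); your weaker bound $d_\mathcal{P}\ge 1$ is the correct one and suffices.
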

We note that the above upper bound does not hold for general graph properties. For instance Connectivity has cost $\Theta(n)$, so does containment of a Perfect Matching, which are both non-hereditary properties
(See Appendix~\ref{connectivity}).

As a partial answer to Question~\ref{question:exponential} 
we prove the following theorem.

\begin{theorem}
\label{thm:forbidden-H}
\label{containing-H}
Let $H$ be a fixed graph on $k$ vertices and let $\mathcal{P}_H$ denote the property of containing $H$ as a subgraph. Then,
\[ \minc(\mathcal{P}_H) = \Omega(n^{1/k}).\]
\end{theorem}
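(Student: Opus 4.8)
The plan is to combine a block–sensitivity (adversary) argument with the Sunflower Lemma. Fix any base graph $G$ on $n$ vertices that is relevant for $\mathcal{P}_H$, and let $f\colon\{0,1\}^{V(G)}\to\{0,1\}$ be the associated Boolean function, so that $f(S)=1$ iff $G[S]$ contains a copy of $H$ as a subgraph; note $f$ is monotone and $cost(\mathcal{P}_H,G)=D(f)$. First I would use relevance to force $G$ to contain many copies of $H$: if a vertex $i$ is relevant there is some $S\ni i$ with $f(S)=1$ and $f(S\setminus\{i\})=0$, so by monotonicity every copy of $H$ inside $G[S]$ must use $i$, and in particular $i$ lies in a copy of $H$ in $G$. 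Hence, letting $\mathcal{F}$ be the family of vertex sets of copies of $H$ in $G$ — a family of $k$-element subsets of $V(G)$ — the union of $\mathcal{F}$ is all of $V(G)$, and therefore $|\mathcal{F}|\ge n/k$.

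Next I would invoke the Erd\"os--Rado Sunflower Lemma. Since $\mathcal{F}$ consists of more than $k!\,(r-1)^k$ sets of size $k$ as soon as $(r-1)^k< n/(k\cdot k!)$, it contains a sunflower with $r=\Omega(n^{1/k})$ petals, where the hidden constant $(k\cdot k!)^{-1/k}$ depends only on $k$: that is, distinct copies $C_1,\dots,C_r$ of $H$ with a common core $Y\subseteq C_j$ and pairwise disjoint petals $P_j=C_j\setminus Y$. Because the $C_j$ are distinct and all have size $k$, the core satisfies $|Y|\le k-1$, so each petal is nonempty (of size $k-|Y|\ge 1$) and $Y\cup P_j=C_j$ for every $j$.

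Finally I would read the bound off this configuration. On the input $S=Y$ we have $f(Y)=0$, since $|Y|<k=|V(H)|$ and a graph on fewer than $k$ vertices cannot contain $H$; but for each $j$, flipping the coordinates in the block $P_j$ turns $Y$ into $Y\cup P_j=C_j$, and $f(C_j)=1$. As the blocks $P_1,\dots,P_r$ are pairwise disjoint, $\mathrm{bs}_Y(f)\ge r$, and hence $cost(\mathcal{P}_H,G)=D(f)\ge \mathrm{bs}(f)\ge r=\Omega(n^{1/k})$. (Equivalently and more directly: run any decision tree on input $Y$; before it can output $0$ the answers it has collected must force $f=0$, which means the vertices it queried and found absent from $S$ form a hitting set for the copies of $H$ in $G$; since a vertex of $C_j$ lying outside $Y$ must lie in $P_j$, the tree queried at least one vertex from each disjoint petal $P_j$ and thus made at least $r$ queries.) Since $G$ was an arbitrary relevant base graph, this yields $\minc(\mathcal{P}_H)=\Omega(n^{1/k})$.

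The argument is short, so essentially all of its content sits in two observations. The first is that relevance alone already buys $\Omega(n)$ distinct copies of $H$ in the base graph. The second — which I expect is the real point — is that the Sunflower Lemma converts this abundance of $k$-sets into precisely the structure needed for a sensitivity bound: a common core that is a $0$-input of $f$ together with $\Omega(n^{1/k})$ pairwise-disjoint petals, each of which completes the core to a $1$-input. The only things to check carefully are that the number of petals depends (through the lemma) only on $k$ and $|\mathcal{F}|$, and that distinctness of the copies forces the core to be a proper subset of each $C_j$, so that the petals are genuinely nonempty and $f(Y)=0$.
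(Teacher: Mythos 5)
Your proposal is correct and follows essentially the same route as the paper's proof: relevance gives $|\mathcal{F}|\ge n/k$ copies of $H$, the Erd\H{o}s--Rado Sunflower Lemma extracts a sunflower with $\Omega(n^{1/k})$ petals, and fixing the core (which induces a $0$-input since $|C|<k$) forces any algorithm to query a vertex in each disjoint petal. Your explicit block-sensitivity phrasing and the check that the petals are nonempty are just slightly more careful write-ups of the same argument.
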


Interestingly our proof of Theorem \ref{thm:forbidden-H} uses the famous Sunflower Lemma due to Erd\"os and Rado~\cite{er}. Moreover it generalizes to any fixed number of forbidden subgraphs each on at most $k$ vertices.

We note that both Theorem \ref{thm:upper} and Theorem~\ref{thm:forbidden-H} are not tight. However, we do prove tight bounds for several hereditary properties. We summarize few such interesting bounds in the Table below.

\addtocounter{footnote}{0} 
 \stepcounter{footnote}\footnotetext{assuming Weak Evasiveness.}
 \stepcounter{footnote}\footnotetext{when $d(G) \geq 7$.}
\addtocounter{footnote}{-2} 

\begin{table}[H]
\resizebox{1.2\textwidth}{!}{
\begin{tabular}{|c|c c|c|c|}
\hline 
& \multicolumn{2}{|c|}{Properties} & With Symmetry\footnote{assuming Weak Evasiveness} & Without Symmetry \\
 \hline 
 \hline
 
\multirow{10}{*}{\begin{turn}{90}\ \ \ \ \ \ \ \ \ Finite\end{turn}} & Independence/Emptiness   & [Thm.~\ref{thm:ind-lower}]  & $\Theta(n)$ & $\Theta(\sqrt{n})$\\
\cline{2-5}

& Bounded Degree   & [Thm.~\ref{thm:bounded-degree}]  & $\Theta(n)$ &  $\Theta(\sqrt{n})$\\
\cline{2-5}

 & Triangle-freeness   & [Thm.~\ref{thm:triangle-free}]  & $\Theta(n)$ & $\Theta(n^{1/3})$ \\
\cline{2-5}

 & Containing $K_t$   & [Thm.~\ref{thm:upper-d}][Thm.~\ref{containing-H}]  & $\Theta(n)$ & $\Theta(n^{1/t})$\\
\cline{2-5}

 & Containing $P_t$   & [Thm.~\ref{thm:upper-d}][Thm.~\ref{containing-H}]  & $\Theta(n)$ & $O(\sqrt n)$,$\Omega(n^{1/t})$ \\
\cline{2-5}

 & Containing $C_t$   & [Thm.~\ref{thm:upper-d}][Thm.~\ref{containing-H}]  & $\Theta(n)$ & $O(n^{1/3})$, $\Omega(n^{1/t})$\\
\cline{2-5}

 & Containing $H$: $V(H)=k$    &  [Thm.~\ref{thm:H-symmetry}][Thm.~\ref{thm:upper-d}][Thm.~\ref{containing-H}] & $\Theta(n)$ & 
 $O(n^{1/(d_{min}+1)})$, $\Omega(n^{1/k})$ \\
\hline
\hline

\multirow{4}{*}{\begin{turn}{90}Infinite\end{turn}} & Acyclicity   & [Thm.~\ref{thm:acyc-trans-lower}]  &  $\Theta(n)$ & $O(n^{1/3})$
\\
\cline{2-5}

 & Bi-partiteness   & [Thm.~\ref{thm:upper-d}]  & Open   & $O(n^{1/3})$\\
\cline{2-5}

 & 3-colorability   & [Thm.~\ref{thm:upper-d}]  & Open  & $O(n^{1/4})$\\
\cline{2-5}

& Planarity   & [Thm.~\ref{thm:appx:symmetry-planar}]  & $\Theta(n)$\footnote{when $d(G) \geq 7$} &  $O(n^{1/4})$\\

\hline

\end{tabular}
}
 \label{table-results}
 \caption{Summary of Results for Finite/Infinite Forbidden Subgraphs}
\end{table}

Finally we note a non-constant lower bound, which holds for {\em any} hereditary property. Our proof again
relies on the Sunflower Lemma.
\begin{theorem} 
For any hereditary graph property $\mathcal{P}$
\[ \minc(\mathcal{P}) = \Omega \left( \frac{\log n}{\log \log n}\right).\]
\label{thm:lower}
\label{thm:any-her-lower}
\end{theorem}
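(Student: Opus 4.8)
The plan is to show that for \emph{every} base graph $G$ on $n$ vertices that is relevant for $\mathcal{P}$ one has $cost(\mathcal{P},G)=\Omega(\log n/\log\log n)$; since $\minc(\mathcal{P})$ is the minimum of these quantities over all relevant $G$, this proves the theorem. So fix a relevant $G$ and let $f\colon 2^{V}\to\{0,1\}$ be given by $f(S)=1$ iff $G[S]$ satisfies $\mathcal{P}$. As observed in the text $f$ is monotone decreasing (deleting a vertex preserves hereditary properties), and ``$G$ relevant'' says precisely that $f$ depends on all $n$ variables. (Deterministically, a tree computing a function of all $n$ variables has at least $n$ internal nodes and hence depth $\ge\log_{2}n$; to keep the bound valid for randomized query complexity we argue instead through sensitivity $s(f)$ and block sensitivity $bs(f)$, using only $D(f)\ge bs(f)\ge s(f)$ together with $bs(f)=\Omega(R(f))$.)

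The heart of the argument is a dichotomy governed by a parameter $t$ fixed at the end. For each vertex $v$, relevance gives $S_{v}\ni v$ with $f(S_{v})\neq f(S_{v}\setminus\{v\})$, and monotonicity forces $f(S_{v})=0$ and $f(S_{v}\setminus\{v\})=1$. Let $C_{v}\subseteq S_{v}$ be $\subseteq$-minimal with $f(C_{v})=0$. Then $v\in C_{v}$ (otherwise $C_{v}\subseteq S_{v}\setminus\{v\}$ would give $f(S_{v}\setminus\{v\})=0$), and by minimality every $C'\subsetneq C_{v}$ has $f(C')=1$; in graph terms $G[C_{v}]$ is a vertex-minimal forbidden induced subgraph, so $|C_{v}|$ can be large -- this is exactly why no $\Omega(n^{\varepsilon})$ bound is available for all hereditary properties, and it is what forces the case split.

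\emph{Case A: $|C_{v}|>t$ for some $v$.} At the input $\mathbf{1}_{C_{v}}$ every coordinate of $C_{v}$ is sensitive, since $f(C_{v})=0$ while $f(C_{v}\setminus\{w\})=1$ for every $w\in C_{v}$; hence $s(f)\ge|C_{v}|>t$ and $cost(\mathcal{P},G)=D(f)\ge bs(f)\ge s(f)>t$. \emph{Case B: $|C_{v}|\le t$ for all $v$.} Since $v\in C_{v}$, the family $\{C_{v}:v\in V\}$ covers $V$, so it has at least $n/t$ distinct members, each of size $\le t$. If $n/t>t!\,t^{\,t}$, the Erd\"os--Rado Sunflower Lemma produces a sunflower on $t+1$ of these sets with some core $Y$; discarding the at most one member equal to $Y$ leaves distinct $C_{v_{1}},\dots,C_{v_{t}}$, each \emph{properly} containing $Y$, whose petals $P_{i}:=C_{v_{i}}\setminus Y$ are nonempty and pairwise disjoint. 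Since $Y$ is a proper subset of $C_{v_{1}}$ and $C_{v_{1}}$ was chosen minimal with $f(C_{v_{1}})=0$, minimality gives $f(Y)=1$; and $f(Y\cup P_{i})=f(C_{v_{i}})=0$ for each $i$. Thus $P_{1},\dots,P_{t}$ witness $bs(f)\ge t$ at $\mathbf{1}_{Y}$, so $cost(\mathcal{P},G)=D(f)\ge bs(f)\ge t$.

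Finally take the petal count to be $t$ as above. In Case A we already have $cost>t$; in Case B we have $cost\ge t$ whenever $n/t>t!\,t^{\,t}$, which holds as soon as $n>t^{2t+1}$ (because then $n/t>t^{2t}\ge t!\,t^{\,t}$). Choosing $t$ maximal with $t^{2t+1}<n$ gives $t=\Theta(\log n/\log\log n)$, hence $cost(\mathcal{P},G)\ge t$ for every relevant $G$ and therefore $\minc(\mathcal{P})=\Omega(\log n/\log\log n)$. The step I expect to require the most care is Case B: one must be sure the sunflower core is a \emph{proper} subset of each surviving set (so that monotonicity together with minimality of the certificates yields $f(Y)=1$), which is where the ``at most one set equals the core'' bookkeeping and the distinctness of the $C_{v}$'s enter; and one must verify that the single parameter $t$ can be pushed all the way up to $\Theta(\log n/\log\log n)$ so that the two cases combine without loss.
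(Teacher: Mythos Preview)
Your argument is correct and essentially identical to the paper's: both split on whether $G$ contains a large minimal forbidden induced subgraph (yielding high sensitivity at that input) or only small ones (yielding many sunflower petals and hence high block sensitivity at the core), and then balance the threshold at $\Theta(\log n/\log\log n)$. The only cosmetic differences are that the paper first buckets the minimal certificates by exact size before applying the uniform Sunflower Lemma (Lemma~\ref{lemma:sunflower}), whereas you invoke it directly on sets of size $\le t$ (which is fine via the standard padding trick but worth stating), and your parenthetical ``$bs(f)=\Omega(R(f))$'' has the $\Omega$ reversed---you mean $R(f)=\Omega(bs(f))$.
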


As we use sensitivity argument all our lower bounds work for randomized case as well.

\subsection*{Organization}   
Section~\ref{sec:prelims} covers some preliminaries. 
Section~\ref{sec:symmetry} contains Weak Evasiveness results under symmetry. Section~\ref{sec:symmetry-breaking} contains proofs of Theorem~\ref{thm:upper} and Theorem~\ref{thm:forbidden-H}.
Proof of Theorem~\ref{thm:lower} and proof of some tight bounds for Theorem~\ref{thm:upper} are deferred to Appendix~\ref{appx:her-lower} and Appendix~\ref{appx:tight-bound} respectively. 
Next in Section~\ref{sec:restrictions} we state some results on restricted graph classes. 
Proofs of these results are in Appendix~\ref{appx:restrict}. 
Finally in Section~\ref{sec:open} we discuss some open directions.

\section{Preliminaries}\label{sec:prelims}
Let $[n] := \{1, \ldots, n\}$.
\subsection*{Randomized query complexity}
A randomized decision tree $\mathcal{T}$ is simply a probability distribution on the deterministic decision trees $\{T_1, T_2, \ldots\}$ where the tree $T_i$ occurs with
probability $p_i.$ We say that $\mathcal{T}$ computes $f$ correctly if for every input
$x$: $\Pr_{i}[T_i(x) = f(x)] \geq 2/3$. The depth of $\mathcal{T}$ is the maximum depth
of a $T_i$. The (bounded error) randomized query complexity of $f$, denoted by $R(f)$,  is the minimum possible
depth of a randomized tree computing $f$ correctly on all inputs.

\subsection*{Some classes of Boolean function}
A Boolean function $f:\{0,1\}^n \to \{0,1\}$ is said to be {\em monotone}
increasing if for any $x \leq y$, we have $f(x) \leq f(y)$, where $x \leq y$ means
$x_i \leq y_i$ for all $i \in [n]$. Similarly one can define a monotone decreasing function. A Boolean function $f(x_1, \ldots, x_n)$ is said to be
{\em transitive} if there exists a group $G$ that acts transitively on the variables $x_i$s such that $f$ is invariant under this action, i.e., 
for every $\sigma \in G$: $f(x_{\sigma_1}, \ldots, f_{\sigma_n}) =
f(x_1, \ldots, x_n)$. A Boolean function $f:\{0,1\}^n \to \{0, 1\}$ 
is said to be {\em evasive} if $D(f) = n$.

\subsection*{Hereditary graph properties}
A property $\mathcal{P}$ of graphs is simply a collection of graphs.
The members of $\mathcal{P}$ are said to satisfy $\mathcal{P}$ and non-members are said to fail $\mathcal{P}$. A property is hereditary if it is
closed under deletion of vertices as well as edges\footnote{vertex-hereditary: closed only under vertex-deletion (e.g. being chordal).}. For instance: acyclicity, planarity,
and $3$-colorability are hereditary properties, whereas connectivity and containing
a perfect matching are not. Every hereditary property $\mathcal{P}$ can be uniquely expressed as a (possibly infinite) family $\mathcal{F}_\mathcal{P}$ of its forbidden subgraphs. For instance: acyclicity can be described as forbidding all cycles. 
Given a graph $G$, the hitting set $S_{G,\mathcal{P}}$ for $\mathcal{P}$ is a subset of $V(G)$ such that removing $S_{G,\mathcal{P}}$ from $G$ would make the property $\mathcal{P}$ absent\footnote{such that every graph in $\mathcal{F}_{\mathcal{P}}$ shares a node with $S_{\mathcal{G}, \mathcal{P}}$.}.
Hereditary graph properties in node-query setting are monotone decreasing Boolean functions.
Sometimes we refer hereditary properties by their negation. For instance: containing triangle.

\subsection*{Sensitivity and block-sensitivity \cite{hkp}}
The $i^{th}$ bit of an input $x \in \{0,1\}^n$ is said to be sensitive for $f:\{0,1\}^n \to \{0, 1\}$ if
$f(x_1, \ldots, x_i, \ldots, x_n) \neq f(x_1, \ldots, 1 - x_i, \ldots, x_n)$.
The sensitivity of $f$ on $x$, denoted by $s_{f,x}$ is the total number of sensitive
bits of $x$ for $f$. The sensitivity of $f$, denoted by $s(f)$, is the maximum
of $s_{f,x}$ over all possible choices of $x$. A block $B \subseteq [n]$ of variables
is said to be sensitive for $f$ on input $x$, if flipping the values of all $x_i$ such that $i \in B$ and keeping the remaining $x_i$ the same,  results in flipping the output of $f$. The block sensitivity of $f$ on an input $x$,
denoted by $bs_{f,x}$ is the maximum number of {\em disjoint} sensitive blocks for $f$ on $x$. The block sensitivity of a function $f$, denoted by $bs(f)$, is the maximum value of $bs_{f,x}$ over all possible choices of $x$. It is known that
$D(f) \geq R(f) \geq bs(f) \geq s(f)$. For monotone functions, $bs(f) = s(f)$.
\section{Presence of symmetry in node-query setting: \\ does it guarantee weak-evasiveness?}
\label{sec:symmetry}
In edge-query setting, Aanderaa-Rosenberg-Karp Conjecture \cite{kss, cks} asserts that any non-trivial monotone graph property must be evasive, i.e., one must query
all $n \choose 2$ edges in worst-case. The following generalization of the ARK Conjecture asserts that only monotonicity and modest amount of symmetry, namely transitivity, suffices to guarantee
the evasiveness \cite{lutz}.
\begin{conjecture}
[Evasiveness Conjecture] Any non-constant monotone transitive function $f$ on $n$ variables must have $D(f) = n$.
\end{conjecture}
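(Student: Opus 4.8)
\medskip
\noindent\textbf{Proof proposal (plan of attack).}
The statement above is the Evasiveness Conjecture, which generalises the Aanderaa--Rosenberg--Karp conjecture and is a notorious open problem; so what follows is a plan along the one route known to yield real progress --- the topological/group-action method of Kahn, Saks and Sturtevant --- together with an honest account of where it stalls. The plan is to argue by contradiction. Suppose $f$ is non-constant, transitive, and (after possibly complementing all input bits, an operation that relabels the cube and hence preserves $D(f)$) monotone \emph{decreasing}, and suppose $D(f) < n$, i.e.\ $f$ is non-evasive.

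First I would pass to the abstract simplicial complex $\Delta_f = \{\, S \subseteq [n] : f(\mathbf{1}_S) = 1 \,\}$, where $\mathbf{1}_S$ is the characteristic vector of $S$. Monotone decreasingness makes $\Delta_f$ downward closed, hence a genuine complex; non-constancy makes it non-empty ($\emptyset \in \Delta_f$, as $f(\mathbf{0}) = 1$) and proper ($[n] \notin \Delta_f$, as $f(\mathbf{1}) = 0$). The key classical input is that non-evasiveness of $f$ forces $\Delta_f$ to be \emph{collapsible}: by induction on a depth-$(<n)$ decision tree for $f$, each pivot variable $i$ expresses $\Delta_f$ in terms of its deletion and link at $i$, both collapsible by induction, which glues to a collapse of $\Delta_f$. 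Collapsibility implies contractibility, hence $\mathbb{Z}$-acyclicity; in particular the reduced Euler characteristic vanishes, $\tilde\chi(\Delta_f) = 0$.

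Next I would feed in the symmetry. The transitive group $\Gamma$ acts simplicially on $\Delta_f$; by transitivity on the $n$ vertices, any \emph{non-empty} $\Gamma$-invariant face would be a union of vertex-orbits, i.e.\ all of $[n]$, which is not a face. For the easiest case $n = p$ prime: by Cauchy $\Gamma$ contains an element of order $p$, which in $S_p$ must be a $p$-cycle; the $\mathbb{Z}_p$ it generates acts freely on the non-empty faces of $\Delta_f$ (a fixed non-empty face would again have to be $[p]$), so those faces split into orbits of size $p$, giving $\tilde\chi(\Delta_f) \equiv -1 \pmod p$ from the lone empty face of dimension $-1$ --- contradicting $\tilde\chi(\Delta_f) = 0$. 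For $n = p^k$ one replaces $\mathbb{Z}_p$ by the one-dimensional affine group of $\mathbb{F}_{p^k}$ and invokes Oliver's fixed-point theorem, which produces a non-empty fixed subcomplex for actions on $\mathbb{Z}$-acyclic complexes by groups carrying a normal $p$-subgroup with cyclic quotient --- again contradicting the no-invariant-face observation. This is exactly the route establishing evasiveness for prime-power $n$, and the mechanism behind the weak-evasiveness entries in our table (Independence/Emptiness, bounded degree, triangle-freeness, acyclicity, planarity).

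The hard part --- and the reason the conjecture remains open --- is that a general transitive permutation group on $[n]$ need simply \emph{not} have the subnormal structure Oliver's theorem requires: there are finite groups (already $A_5$) acting on $\mathbb{Z}$-acyclic complexes with empty fixed-point set, so the topological argument cannot by itself exclude a non-evasive $f$ whose automorphism group is such a group. Closing the gap would demand either a genuinely new fixed-point principle tailored to transitive actions on small vertex sets, or a non-topological lower bound on $D(f)$; lacking either, we use the conjecture here only as motivation, verify weak-evasiveness case by case for the symmetric properties above, and devote the rest of the paper to the opposite, \emph{asymmetric}, regime.
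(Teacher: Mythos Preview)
Your proposal is appropriate and accurate, but note that there is nothing to compare it against: the paper states this as a \emph{conjecture} and offers no proof whatsoever --- it is invoked only as motivation for the node-query investigation, exactly as you say in your last paragraph. So your ``plan of attack'' is not in competition with any argument in the paper.

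That said, as a survey of the known partial progress your write-up is essentially correct. The Kahn--Saks--Sturtevant collapsibility argument, the free $\mathbb{Z}_p$-action for prime $n$, the use of Oliver's fixed-point theorem for prime-power $n$, and the obstruction coming from groups like $A_5$ that can act on acyclic complexes without fixed points are all stated accurately. One small correction: the paper's weak-evasiveness results in the node-query setting (Theorems~\ref{thm:H-symmetry}, \ref{thm:acyc-trans-lower}, \ref{thm:appx:symmetry-planar}) do \emph{not} go through the topological machinery you describe; they are proved by elementary sensitivity/hitting-set arguments (Lemma~\ref{lem:mon-trans}, Lemma~\ref{lem:d-max}) that exploit vertex-transitivity directly to get $\Omega(n)$ bounds. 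So your remark that the KSS mechanism is ``behind the weak-evasiveness entries in our table'' overstates its role here.
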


This conjecture appears to be notoriously hard to prove even in several interesting special cases. Recently Kulkarni \cite{kul} formulates: 
\begin{conjecture}[Weak Evasiveness Conjecture]
If $f_n$ is a sequence of monotone transitive functions on $n$ variables then for every $\epsilon > 0$:
 \[D(f_n) = \Omega(n^{1-\epsilon}).\]
\end{conjecture}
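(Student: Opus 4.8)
The plan is to push through the topological/group-theoretic method of Kahn, Saks and Sturtevant~\cite{kss} for all composite $n$, at the cost of an $n^{\epsilon}$ slack. Given a non-constant monotone increasing function $f_n$ (the decreasing case being dual) invariant under a transitive group $\Gamma_n \le S_n$, associate to it the abstract simplicial complex
\[
\Delta_{f_n} = \{\, A \subseteq [n] : f_n(\mathbf{1}_A) = 0 \,\},
\]
which is genuinely a complex because $f_n$ is monotone, and on which $\Gamma_n$ acts simplicially. The first step is the familiar reduction: if $D(f_n) < n$ then $\Delta_{f_n}$ is combinatorially non-evasive, hence collapsible, hence $\mathbb{Z}$-acyclic (in fact contractible). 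So it suffices to show that $\Gamma_n$-invariance forces $\Delta_{f_n}$ to be far from acyclic unless $D(f_n)$ is already $\Omega(n^{1-\epsilon})$. Note that cheap tools are hopeless here: for a transitive $f_n$ one can only squeeze out $D(f_n) \ge bs(f_n) = s(f_n) = \Omega(n^{1/3})$ from sensitivity, so genuine topology seems unavoidable.

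The second step is the fixed-point obstruction. A transitive action has no invariant face other than $\emptyset$ and possibly the whole vertex set, so after the standard cone/deletion bookkeeping one arrives at a fixed-point-free action of $\Gamma_n$ on a $\mathbb{Z}$-acyclic complex, which is impossible \emph{provided} $\Gamma_n$ belongs to the class of groups known to have the fixed-point property on finite $\mathbb{Z}$-acyclic complexes — $p$-groups by Smith theory, and iterated extensions of a $p$-group by a cyclic group by a $q$-group by Oliver's theorem. When $n = p^k$, a transitive $\Gamma_n$ contains a transitive $p$-subgroup and Smith theory already closes the argument, yielding full evasiveness; this is the prime-power case we would use as a black box.

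To recover a \emph{weak} bound for arbitrary $n$, the plan is dimension reduction plus prime-gap estimates. Choose a prime $m \le n$; by the Baker--Harman--Pintz bound there is one with $m > n - n^{0.525} = n^{1-o(1)} \ge n^{1-\epsilon}$ for large $n$. Then one wants to extract from $f_n$ a non-constant monotone function $g$ on $m$ variables, invariant under a transitive group on $[m]$, with $D(f_n) \ge D(g)$; applying the prime-power case to $g$ gives $D(f_n) \ge m \ge n^{1-\epsilon}$. For graph properties this extraction is essentially free: a non-trivial monotone graph property on $v$ vertices restricts to a non-trivial monotone graph property on any $v' \le v$ vertices and remains transitive since $S_{v'} \hookrightarrow S_v$ — which is exactly the route to near-evasiveness for graph properties.

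The main obstacle — and the reason this remains a conjecture — is precisely this last step for \emph{general} transitive functions: transitivity is not preserved by the natural ways of discarding variables (fixing coordinates, quotienting by a block system), and the symmetry group can be "exotic", e.g.\ an almost-simple group with no nontrivial block structure, so there is no canonical way to descend from $f_n$ to a non-constant transitive function on a prime power number of variables. Consequently the realistic deliverable is: (i) an unconditional $n^{1-o(1)}$ bound for monotone graph properties and for transitive functions whose symmetry group is imprimitive with prime-power block sizes (or otherwise falls in Oliver's class); and (ii) a conditional reduction of the remaining cases to a fixed-point theorem for arbitrary finite groups acting on finite $\mathbb{Z}$-acyclic complexes — which fails in full generality, so only $\mathbb{Z}$-acyclicity (not contractibility) and only the weak, not the strong, form of evasiveness may survive. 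Closing the gap between (i)--(ii) and the full statement is the part I expect to be genuinely hard.
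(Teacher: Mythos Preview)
The statement is a \emph{conjecture}, and the paper does not prove it; it merely states it (attributing the formulation to Kulkarni~\cite{kul}) and immediately records the observation that the Weak Evasiveness Conjecture is \emph{equivalent} to the full Evasiveness Conjecture. There is thus no proof in the paper to compare your proposal against.

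You are honest that your plan does not close, and you correctly locate the obstruction: descending from a general transitive function on $n$ variables to one on a prime (or prime-power) number of variables while preserving both transitivity and non-constancy. That diagnosis is sound. The restriction trick works for graph properties precisely because the full symmetric group on vertices is available, which is why Rivest--Vuillemin~\cite{rv} succeeds there; for an arbitrary transitive $\Gamma_n$ one has no reason to expect a usable block system, and Oliver's class does not cover all finite groups.

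The one point you appear to be underweighting is the equivalence itself. Your opening sentence promises to ``push through the topological/group-theoretic method \ldots\ at the cost of an $n^{\epsilon}$ slack'', and you later suggest that ``only the weak, not the strong, form of evasiveness may survive''. But since Weak~EC $\Leftrightarrow$ EC, the slack buys you nothing: any argument yielding $D(f_n) = \Omega(n^{1-\epsilon})$ for \emph{all} monotone transitive $f_n$ already yields $D(f_n) = n$ for all such $f_n$. So the programme you outline is, whether or not you intended it, an attack on the full Evasiveness Conjecture, and the obstacle you isolate is the same one that has blocked that conjecture for decades. In particular, hoping that weakening the target from $n$ to $n^{1-\epsilon}$ will let a topological argument through is not a viable strategy in itself.
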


Although Weak EC appears to be seemingly weaker, Kulkarni \cite{kul} observes that it is equivalent to the EC itself. His results hint towards the possibility that
disproving Weak EC might be as difficult as separating $TC^0$ from $NC^1$. However: proving special cases of Weak EC appears to be relatively 
less difficult. In fact, Rivest and Vuillemin \cite{rv} confirm the Weak EC  for graph properties and recently Kulkarni, Qiao, and Sun \cite{kqs} confirm Weak EC for 3-uniform hyper graphs and Black \cite{black} extends this result to $k$-uniform hyper graphs.
All these results are studied in the edge-query setting. It is natural to ask whether the Weak EC becomes tractable in node-query setting.
The monotone functions in node-query setting translate precisely to the hereditary graph properties.
Here we show that it does become tractable for several hereditary graph properties.
But first we need the following lemma \cite{sc05, syz}:

\begin{lemma} Let $f$ be a monotone transitive function.  Let $k$ be the size of a 1-input with minimal number of 1s. Then:
$D(f) = \Omega(n/k^2)$.
\label{lem:mon-trans}
\end{lemma}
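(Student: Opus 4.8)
The plan is to bound $D(f)$ from below by the block sensitivity of $f$ at the all‑zeros input, exploiting the chain $D(f)\ge bs(f)\ge bs_{f,0^n}$ recorded in the preliminaries. Since $f$ is monotone and non‑constant we have $f(0^n)=0$, and a block $B\subseteq[n]$ is sensitive for $f$ at $0^n$ precisely when the input $\mathbf{1}_B$ (that is $1$ on $B$ and $0$ elsewhere) satisfies $f(\mathbf{1}_B)=1$, i.e. when $B$ contains a minterm of $f$. Hence any family of pairwise disjoint minterms of $f$ is a family of disjoint sensitive blocks at $0^n$, and it suffices to exhibit $\Omega(n/k^2)$ pairwise disjoint minterms; note that each minimal $1$-input has exactly $k$ ones by hypothesis.

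To build such a family, fix one minimal $1$-input and let $M_0\subseteq[n]$ be the set of its $1$-coordinates, so $|M_0|=k$. Let $\Gamma$ be a group acting transitively on the coordinates under which $f$ is invariant, and let $\mathcal{O}=\{\sigma(M_0):\sigma\in\Gamma\}$ be the orbit of $M_0$. Every member of $\mathcal{O}$ is again a minterm of size exactly $k$ (a symmetry of $f$ carries minimal $1$-inputs to minimal $1$-inputs), and by transitivity every coordinate lies in at least one member of $\mathcal{O}$. Moreover, for coordinates $i,j$ an element of $\Gamma$ taking $i$ to $j$ permutes $\mathcal{O}$ and induces a bijection between the members of $\mathcal{O}$ containing $i$ and those containing $j$; thus each coordinate lies in the same number $m\ge 1$ of orbit‑minterms. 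Double counting the incident pairs $(M,j)$ with $M\in\mathcal{O}$ and $j\in M$ gives $|\mathcal{O}|\cdot k = n\cdot m$, so $m=|\mathcal{O}|\,k/n$.

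Now pick, greedily, a maximal family $M_1,\dots,M_t\in\mathcal{O}$ of pairwise disjoint orbit‑minterms and set $T=M_1\cup\cdots\cup M_t$, so $|T|=tk$. By maximality every member of $\mathcal{O}$ intersects $T$. On the other hand the number of members of $\mathcal{O}$ meeting $T$ is at most $\sum_{j\in T}(\text{number of orbit-minterms through }j)=|T|\cdot m = tk\cdot|\mathcal{O}|\,k/n$. Comparing the two estimates yields $tk^2|\mathcal{O}|/n\ge|\mathcal{O}|$, hence $t\ge n/k^2$. Therefore $bs_{f,0^n}\ge t\ge n/k^2$, and $D(f)=\Omega(n/k^2)$.

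I expect the only genuine content is the double‑counting step that forces the per‑coordinate multiplicity $m$ within the orbit to be uniform — this is exactly where transitivity is used — combined with the maximality argument bounding $t$; the rest is bookkeeping. The minor points to handle with care are the degenerate cases (one needs $f$ non‑constant so that minterms exist and $f(0^n)=0$) and the claim that the $\Gamma$‑translates of a minimal $1$-input are again minimal $1$-inputs of the same size.
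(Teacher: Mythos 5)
Your proof is correct. The paper does not prove Lemma~\ref{lem:mon-trans} itself but cites it from \cite{sc05, syz}, and your argument is exactly the standard one behind that citation: translate the question to disjoint sensitive blocks at $0^n$, use transitivity to show the orbit of a minimum-weight minterm covers each coordinate with uniform multiplicity $m = |\mathcal{O}|k/n$, and combine a maximal disjoint packing with double counting to get $t \geq n/k^2$. One immaterial slip: your remark that ``each minimal $1$-input has exactly $k$ ones by hypothesis'' is not what the hypothesis says (minterms of a monotone function may have different weights; $k$ is only the minimum), but this does not matter since you work only with the orbit of a single weight-$k$ minterm, whose members all have size exactly $k$.
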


Let 
$\mathcal{G}_{\mathcal{T}}$
denote the class of transitive graphs.  Let $H$ be  a fixed graph.  Let $\mathcal{P}_H$ denote the property of containing $H$ as a subgraph.
The following theorem directly follows from Lemma~\ref{lem:mon-trans}.
\begin{theorem}
\label{thm:H-symmetry}
\[\mathcal{G}_{\mathcal{T}}\text{-}\minc(\mathcal{P}_H) = \Omega(n). \]

\end{theorem}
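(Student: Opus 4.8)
The plan is to derive Theorem~\ref{thm:H-symmetry} as an immediate consequence of Lemma~\ref{lem:mon-trans}, which is the reason the excerpt phrases it as ``directly follows.'' First I would observe that $\mathcal{P}_H$, the property of containing a fixed graph $H$ as a subgraph, is a monotone \emph{increasing} Boolean function on the node-indicator vector: adding a vertex to $S$ can only create more induced subgraphs, never destroy a copy of $H$. (Strictly, in the paper's convention hereditary properties are monotone decreasing, so here we work with the negation/complement; sensitivity, block sensitivity and deterministic complexity are all invariant under complementation, so this is cosmetic.) Second, I would restrict attention to a transitive base graph $G \in \mathcal{G}_{\mathcal{T}}$ on $n$ vertices that is relevant for $\mathcal{P}_H$ and on which $G[S]$ can actually contain $H$ — e.g. take $G$ to be a transitive graph that itself contains $H$ as a subgraph (a large enough Cayley graph, or $K_n$ itself when one wants the cleanest instance), so that the induced function $f_G(S) = [\,H \subseteq G[S]\,]$ is non-constant. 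Transitivity of $G$ makes $f_G$ a transitive Boolean function on $n$ variables.

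The key step is then to identify the parameter $k$ in Lemma~\ref{lem:mon-trans}: the size of a $1$-input with the minimal number of $1$s. A $1$-input of $f_G$ is a set $S$ with $H \subseteq G[S]$, and the smallest such $S$ has size exactly $|V(H)| = k$ (it cannot be smaller, since a copy of $H$ needs $k$ vertices; and it is achieved whenever $G$ contains $H$). Hence the minimal $1$-witness size is the constant $k$, independent of $n$. Plugging into Lemma~\ref{lem:mon-trans} gives $cost(\mathcal{P}_H, G) = D(f_G) = \Omega(n/k^2) = \Omega(n)$ since $k$ is fixed. Because this lower bound holds for the chosen transitive relevant $G$, and $\mathcal{G}_{\mathcal{T}}\text{-}\minc$ is a minimum over all transitive relevant base graphs, we must be slightly careful: the minimum could a priori be attained on some \emph{other} transitive $G'$ with smaller cost. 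So the argument must show the $\Omega(n/k^2)$ bound for \emph{every} transitive relevant $G$, not just one — but this is fine, because the computation above used only transitivity of $G$ and the fact that the minimal $1$-witness of $f_{G}$ has size at most $k$ (any relevant $G$ on which $G[S]$ ever contains $H$ has such a witness of size exactly $k$), so Lemma~\ref{lem:mon-trans} applies uniformly.

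I would therefore structure the write-up as: (i) note $f_G$ is monotone and transitive for transitive $G$; (ii) note its minimal $1$-input has size $k = |V(H)|$, a constant; (iii) apply Lemma~\ref{lem:mon-trans} to conclude $D(f_G) = \Omega(n/k^2) = \Omega(n)$ for every relevant transitive $G$; (iv) take the minimum over such $G$ to get $\mathcal{G}_{\mathcal{T}}\text{-}\minc(\mathcal{P}_H) = \Omega(n)$. The only genuine subtlety — the ``main obstacle,'' though it is mild — is the bookkeeping around relevance and non-constancy: one must make sure that on every relevant transitive $G$ the induced function is non-constant and has a $k$-element $1$-witness, so that Lemma~\ref{lem:mon-trans} is actually applicable with the stated $k$; relevance of $G$ for $\mathcal{P}_H$ precisely guarantees that some copy of $H$ can appear and disappear, which yields both the non-constancy and the witness of size $k$. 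Everything else is a direct substitution into the quoted lemma, matching the excerpt's claim that the theorem ``directly follows.''
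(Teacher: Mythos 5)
Your proposal is correct and follows exactly the route the paper intends: the paper offers no written proof beyond the remark that the theorem ``directly follows'' from Lemma~\ref{lem:mon-trans}, and your write-up supplies precisely the missing details (transitivity of $G$ makes $f_G(S)=[H\subseteq G[S]]$ a monotone transitive Boolean function, relevance guarantees a minimal $1$-input of constant size $k=|V(H)|$, and the bound $\Omega(n/k^2)=\Omega(n)$ holds uniformly over all relevant transitive base graphs, hence for the minimum). No gaps.
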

The above result can be generalized for any finite family of forbidden subgraphs. We do not yet know how to prove it for infinite family in general.
However below we illustrate a proof for one specific case when infinite family is the family of cycles. First we need the following lemma:

\begin{lemma}
Let $G$ be a graph on $n$ vertices,  $m$ edges,  and maximum degree $d_{max}$. Let $\mathcal{C}$ denote the property of being acyclic. Then,
\[cost(\mathcal{C}, G) \geq (m-n)/d_{max}.\]
\label{lem:d-max}
\end{lemma}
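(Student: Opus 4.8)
The plan is to bound the cost of determining acyclicity from below by exhibiting, for a well-chosen input $S$, many disjoint sensitive blocks (equivalently, many sensitive coordinates, since the property is monotone and $bs(f)=s(f)$), and then invoke $D(f)\ge bs(f)$. The natural candidate for the worst-case input is $S=V(G)$ itself: we want $G[S]=G$ to be \emph{barely} cyclic, so that removing a small set of vertices destroys all cycles. Concretely, the first step is to locate a spanning forest $F$ of $G$; the $m-n$ (or more precisely $m-n+c$, where $c$ is the number of components) edges outside $F$ are the ``extra'' edges, and each such edge closes a cycle with $F$. I would then argue that one can find a large collection of \emph{disjoint} vertex sets, each of which, when removed from $V(G)$, turns $G$ into a forest — or rather, each of which is individually sensitive at the input $S=V(G)$.

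The key combinatorial step is the following: greedily select extra edges $e_1,e_2,\dots$ one at a time; having chosen $e_1,\dots,e_j$, delete from $G$ every vertex incident to any $e_i$ already chosen together with all of their neighbors (a ball of radius one), and pick the next extra edge disjoint from everything removed so far. Since each deletion step removes at most $O(d_{max})$ vertices and hence kills at most $O(d_{max})$ of the extra edges (an extra edge survives unless one of its two endpoints is removed, and each removed vertex is incident to at most $d_{max}$ edges), after $t$ steps we have destroyed at most $O(t\cdot d_{max})$ of the $m-n$ extra edges, so we can continue for $t=\Omega((m-n)/d_{max})$ steps. This yields $t=\Omega((m-n)/d_{max})$ pairwise ``independent'' extra edges $e_1,\dots,e_t$ in the sense that no two of them share a vertex \emph{and} there is no edge of $G$ between their endpoint sets. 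For each such $e_i$, pick one endpoint $v_i$; the single-vertex block $B_i=\{v_i\}$ is sensitive for acyclicity at $S=V(G)$ provided $v_i$ lies on some cycle, which is arranged by choosing $v_i$ to be an endpoint of the cycle-closing chord $e_i$ — we may need to be slightly more careful and let $B_i$ be the vertex set of a shortest cycle through $e_i$, but by the independence of the $e_i$'s these blocks are disjoint, and flipping all of $B_i$ to $0$ removes that cycle while (by independence) not creating or affecting cycles associated to other blocks relative to the all-ones input. Hence $bs_{\mathcal{C},V(G)}\ge t=\Omega((m-n)/d_{max})$.

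Finally, combining $D(f)\ge bs(f)\ge bs_{f,x}$ with the bound just obtained gives $cost(\mathcal{C},G)\ge\Omega((m-n)/d_{max})$; absorbing the additive component-count term and the hidden constant yields the stated $(m-n)/d_{max}$ up to constants (and one should double-check whether the intended statement is literally $(m-n)/d_{max}$ or $\Omega((m-n)/d_{max})$, since the greedy argument naturally loses a constant factor — if an exact inequality is wanted one argues slightly more tightly by charging each destroyed extra edge to a distinct removed vertex). The step I expect to be the main obstacle is making precise the claim that the chosen blocks are \emph{simultaneously} sensitive at the all-ones input, i.e. that each $B_i$ individually flips the output: one must verify that after deleting $B_i$ the graph still contains the other $t-1$ chords' cycles (so the output stays ``cyclic'' — wait, that is the wrong direction) — the correct reading is that we need each $B_i$, removed \emph{alone}, to make $G[V\setminus B_i]$ acyclic only if $G$ had exactly one cycle, which is false in general. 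So the actual argument must instead show each $\{v_i\}$ is sensitive, meaning $G$ is cyclic but $G-v_i$ has \emph{fewer} cycles in a way that flips the indicator — which forces us to choose the input $S$ more cleverly (e.g. $S$ inducing a single long cycle plus structure), and this choice of sensitive input, rather than the counting, is where the real care is needed.
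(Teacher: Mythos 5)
Your proposal has a genuine gap, and you have in fact diagnosed it yourself in your final paragraph: at the base input $S=V(G)$ the graph generally contains many cycles, so flipping a single block $B_i$ (one chord's cycle, or one vertex $v_i$) does \emph{not} change the output --- the graph stays cyclic because the other cycles survive. At the all-ones input a sensitive block must be an entire feedback vertex set, and disjoint feedback vertex sets are few, so the block-sensitivity count you are aiming for cannot be extracted from that input. The greedy selection of ``independent'' chords is therefore effort spent in the wrong place; it also only yields $\Omega((m-n)/d_{max})$ rather than the stated clean bound.

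The missing idea is to put the base input on the \emph{acyclic} side. Let $F$ be a minimum feedback vertex set of $G$. Your edge-counting is the right tool for bounding $|F|$: a forest on $n$ vertices has at most $n-1$ edges, so at least $m-n$ edges must be destroyed, and deleting one vertex destroys at most $d_{max}$ edges, whence $|F|\ge (m-n)/d_{max}$ --- no greedy process and no constant loss. Now take the adversary input in which every vertex of $V\setminus F$ is present and every vertex of $F$ is absent; $G[V\setminus F]$ is acyclic. By \emph{minimality} of $F$, for each $v\in F$ the set $F\setminus\{v\}$ is not a feedback vertex set, so $G-(F\setminus\{v\})$ contains a cycle, and since $G-F$ is acyclic that cycle passes through $v$ with all its other vertices lying in $V\setminus F$, i.e.\ present. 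Hence adding $v$ alone flips the output: each $v\in F$ is a sensitive coordinate at this single input, giving sensitivity (hence deterministic and randomized cost) at least $|F|\ge (m-n)/d_{max}$. This is exactly the paper's argument; your ``choose $S$ more cleverly'' instinct was correct, and the clever choice is the complement of a minimum feedback vertex set.
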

\begin{proof} To make $G$ acyclic one must remove at least $m-n$ edges. Removing one vertex can remove at most $d_{max}$ edges.
Thus the size of minimum feedback vertex set (fvs)  is at least $(m-n)/d_{\max}$. The adversary answers all vertices outside this fvs
to be present. Now the algorithm must query every vertex in the minimum fvs.
\end{proof}

\begin{theorem} 
\label{thm:acyc-trans-lower}
 \[ \mathcal{G}_{\mathcal{T}}\text{-}\minc(\mathcal{C}) = \Omega(n).\]

\end{theorem}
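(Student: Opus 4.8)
The plan is to combine Lemma~\ref{lem:d-max} with a suitable choice of a transitive base graph having many edges and small maximum degree. Concretely, I would exhibit, for infinitely many $n$, a vertex-transitive graph $G_n$ on $n$ vertices with $m = \Theta(n)$ edges, maximum degree $d_{\max} = O(1)$, and having more than $n$ edges (say $m \geq 2n$), so that Lemma~\ref{lem:d-max} yields $cost(\mathcal{C}, G_n) \geq (m-n)/d_{\max} = \Omega(n)$. The most natural candidate is a constant-degree circulant graph: take $V = \mathbb{Z}_n$ and connect $i$ to $i \pm 1, i \pm 2$ (mod $n$), a $4$-regular vertex-transitive graph with $2n$ edges. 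More generally any $r$-regular circulant with $r \geq 3$ works, giving $m = rn/2 \geq 3n/2 > n$.

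First I would verify relevance: every vertex of $G_n$ must be relevant for $\mathcal{C}$, i.e. there is some $S$ for which deleting the vertex changes acyclicity. This is immediate for a connected graph with a cycle through (or near) every vertex — in a circulant each vertex lies on a short cycle, so picking $S$ to be the vertex set of such a cycle shows that removing any one of its vertices destroys that cycle (and, choosing the cycle to be an induced one, makes the induced subgraph acyclic). Since $G_n$ is vertex-transitive this holds for all vertices, so $G_n$ is relevant and qualifies in the definition of $\mathcal{G}_{\mathcal{T}}\text{-}\minc(\mathcal{C})$.

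Next I would invoke Lemma~\ref{lem:d-max} directly: the minimum feedback vertex set of $G_n$ has size at least $(m-n)/d_{\max} \geq (2n - n)/4 = n/4$, and by the adversary argument in that lemma the algorithm must query every vertex of this fvs, so $cost(\mathcal{C}, G_n) \geq n/4 = \Omega(n)$. Taking the minimum over transitive relevant graphs only makes the bound an upper bound on what we claim — wait, we need a lower bound on the minimum, so actually the point is the reverse: $\mathcal{G}_{\mathcal{T}}\text{-}\minc(\mathcal{C})$ is a minimum over \emph{all} transitive relevant $G$, so exhibiting one good $G_n$ does not suffice. Here I would instead argue that \emph{every} relevant transitive graph $G$ for $\mathcal{C}$ has high cost: if $G$ is relevant for acyclicity then $G$ contains a cycle, hence (being vertex-transitive, so regular of some degree $d$, and connected on its support) it is $d$-regular with $d \geq 2$; if $d = 2$ then $G$ is a disjoint union of cycles and one checks the cost directly is $\Omega(n)$, while if $d \geq 3$ then $m = dn/2 \geq 3n/2$, so Lemma~\ref{lem:d-max} gives $cost(\mathcal{C}, G) \geq (m-n)/d \geq (dn/2 - n)/d = n/2 - n/d \geq n/6$.

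The main obstacle is the edge case $d_{\max} = 2$ (unions of cycles), where $m - n$ may be zero or negative and Lemma~\ref{lem:d-max} is vacuous; there I would handle acyclicity of a union of cycles by a separate direct adversary argument (the adversary keeps all but one vertex of a cycle present, forcing the algorithm to probe a constant fraction of each component). A secondary subtlety is confirming that a vertex-transitive graph is regular and that its components are isomorphic, so the per-component analysis is uniform — this is standard but should be stated. Assembling these cases gives $\mathcal{G}_{\mathcal{T}}\text{-}\minc(\mathcal{C}) = \Omega(n)$.
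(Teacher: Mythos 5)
Your final argument is correct and is essentially the paper's proof: transitivity implies $d$-regularity, so $m = dn/2$ and $d_{\max} = d$, and Lemma~\ref{lem:d-max} gives the $\Omega(n)$ bound (after you rightly discarded the initial idea of exhibiting a single good graph, which would only bound the minimum from above). You are in fact more careful than the paper, which silently skips the $d=2$ case where $(m-n)/d_{\max}=0$ and Lemma~\ref{lem:d-max} is vacuous; your separate adversary argument for disjoint unions of cycles correctly closes that gap.
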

\begin{proof}
Since $G$ is transitive, $G$ is $d$ regular for some $d$ \cite{agt}. Therefore $m = dn/2$ and $d_{max} = d$. Hence from Lemma~\ref{lem:d-max} we get the desired bound.
\end{proof}

We also show similar bound for the property of being planar (See Appendix~\ref{appx:symmetry}).
 Following special case of Weak EC remains open:
\begin{conjecture} For any hereditary property $\mathcal{P}$, for any $\epsilon >0$:
\[\mathcal{G}_{\mathcal{T}}\text{-}\minc(\mathcal{P}) = \Omega(n^{1-\epsilon}).\]

\end{conjecture}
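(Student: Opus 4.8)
\medskip
\noindent\emph{Proof proposal.}
This is the restriction of the Weak Evasiveness Conjecture to the monotone transitive functions of the form ``$G[S]\in\mathcal P$'', and the plan is to exploit that form: first squeeze the elementary tools, then isolate the hard core. Fix a hereditary $\mathcal P$ and a transitive graph $G$ on $n$ vertices that is relevant for $\mathcal P$, and let $\kappa=\kappa(G)$ be the least size of a vertex set $T$ with $G[T]\notin\mathcal P$; relevance guarantees such a $T$ exists. Two lower bounds on $cost(\mathcal P,G)$ are then immediate. First, each of the $\kappa$ vertices of a minimum such $T$ is sensitive at the input $\mathbb 1_T$ --- $G[T]\notin\mathcal P$ while every proper induced subgraph of $G[T]$ lies in $\mathcal P$ --- so $cost(\mathcal P,G)\ge s(f_{\mathcal P,G})\ge\kappa$. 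Second, $\neg f_{\mathcal P,G}$ is monotone transitive with lightest $1$-input of weight $\kappa$, so Lemma~\ref{lem:mon-trans} gives $cost(\mathcal P,G)=\Omega(n/\kappa^2)$. Taking the better of the two already yields the unconditional bound $\mathcal G_{\mathcal T}\text{-}\minc(\mathcal P)=\Omega(n^{1/3})$, and it subsumes the finite-forbidden-family case, where $\kappa=O(1)$ and the second bound alone gives $\Omega(n)$.

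To push the exponent from $1/3$ towards $1-\epsilon$, split on $\kappa$. If $\kappa\le n^{\epsilon/2}$ then $\Omega(n/\kappa^2)=\Omega(n^{1-\epsilon})$ and we are done, so one may assume $\kappa>n^{\epsilon/2}$; equivalently $G$ is ``locally $\mathcal P$'' (every induced subgraph on fewer than $n^{\epsilon/2}$ vertices satisfies $\mathcal P$), which forces $\mathcal F_{\mathcal P}$ to be infinite. I would first peel off the disconnected base graphs: when $G$ splits into transitive gadgets and $\mathcal P$ is additive, $f_{\mathcal P,G}$ decomposes as a read-once $\mathrm{AND}$ of the gadget functions, so $cost(\mathcal P,G)$ is the sum of the gadget costs and one is either finished at once (bounded gadgets) or reduced to a connected transitive base graph. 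For connected, locally-$\mathcal P$, transitive $G$ there are two complementary routes. The combinatorial route handles properties whose members are forced to be sparse (acyclicity, planarity on high-degree base graphs): there transitivity makes $G$ regular and globally far from $\mathcal P$, and an edge-counting / feedback-set estimate in the spirit of Lemma~\ref{lem:d-max}, exactly as in Theorem~\ref{thm:acyc-trans-lower}, gives $\Omega(n)$. The topological route targets the rest: pass to the simplicial complex of vertex sets inducing a $\mathcal P$-subgraph, use the vertex-transitive action of $\mathrm{Aut}(G)$ on it, and attempt a fixed-point / non-collapsibility argument in the style of Kahn, Saks and Sturtevant~\cite{kss} to force evasiveness, or near-evasiveness, of $f_{\mathcal P,G}$.

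I expect the topological route --- the connected, locally-$\mathcal P$, non-sparse case --- to be the real obstacle, and I would not expect a routine resolution. When neither ``spreading a small forbidden witness'' nor the sparsity argument applies (bipartiteness being the prototype, with intricately structured yet highly symmetric base graphs such as hypercubes as natural stress tests) one is left with precisely the transitive monotone Boolean functions for which the plain Evasiveness Conjecture is already open, and controlling the associated complex uniformly over all hereditary $\mathcal P$ and all transitive relevant $G$ looks essentially as hard as that conjecture. A realistic intermediate target, reachable along the routes above, is the full $\Omega(n^{1-\epsilon})$ bound for every additive hereditary property whose members have bounded degeneracy.
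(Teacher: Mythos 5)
This statement is not a theorem of the paper but an explicitly open conjecture --- the paper introduces it with ``Following special case of Weak EC remains open'' --- so there is no proof of it in the paper to compare yours against. Your proposal, to its credit, does not claim to close it: you derive genuine partial results and then candidly identify the residual case as being essentially the Evasiveness Conjecture itself. The partial results are correct. The sensitivity bound $cost(\mathcal P,G)\ge\kappa$ at the indicator of a minimum-size forbidden set is sound (minimality of $|T|$ makes every coordinate of $T$ sensitive), and applying Lemma~\ref{lem:mon-trans} to the monotone increasing transitive function $\neg f_{\mathcal P,G}$ with lightest $1$-input of weight $\kappa$ gives $\Omega(n/\kappa^2)$; taking the maximum yields an unconditional $\Omega(n^{1/3})$ for every hereditary property on transitive relevant base graphs. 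That is a clean observation the paper does not state, and it correctly subsumes the paper's Theorem~\ref{thm:H-symmetry} (finite forbidden families, $\kappa=O(1)$) and is consistent with the sparsity-based arguments of Theorem~\ref{thm:acyc-trans-lower} and Theorem~\ref{thm:appx:symmetry-planar}.

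The gap is that everything beyond $\Omega(n^{1/3})$ remains a plan rather than an argument. The case split on $\kappa$ only helps when $\kappa\le n^{\epsilon/2}$; for large $\kappa$ your ``combinatorial route'' covers only properties whose members are forced to be globally sparse (mirroring the paper's acyclicity and planarity results), and your ``topological route'' is an intention to run a Kahn--Saks--Sturtevant style fixed-point argument, with no identified group action or prime-power structure to make it go through for an arbitrary vertex-transitive $G$ and arbitrary hereditary $\mathcal P$. You say as much yourself. So the proposal should be read as: a correct unconditional $\Omega(n^{1/3})$ plus a correct diagnosis of why the full $\Omega(n^{1-\epsilon})$ is open --- which is exactly the status the paper assigns to this conjecture --- not as a proof of the stated bound.
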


\section{Absence of symmetry in node-query setting: \\ how low can query complexity go?}
\label{sec:symmetry-breaking}

\subsection{A general upper bound}
\label{sec:upper}

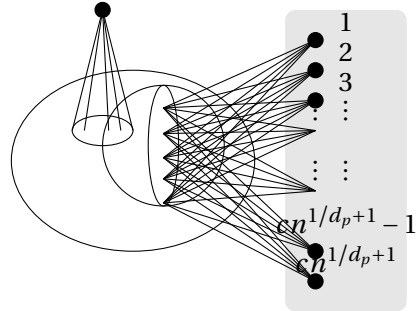
\begin{figure}[H]
 \begin{center}

\begin{tikzpicture}[scale=0.4]

 \tikzstyle{nnode} = [circle, draw=black, fill=black, inner sep=0pt, minimum size = 2mm]
 \tikzstyle{txt} = [above, text width=2cm, text centered]

\fill [rounded corners, gray!20] (5,-5) rectangle (9,5);

\draw (0,0) ellipse (4cm and 3cm);
\draw (-1,1) ellipse (1cm and 0.5cm);
\draw (1,0.5) circle (2cm);
\draw (1,0.5) ellipse (0.5cm and 2cm);

\node (u) [nnode] at (-1,5){};

\foreach \y in {4,3,2,-3,-4}
\node () [nnode] at (6,\y){};
\node () [txt] at (6,-1){$\vdots$};
\node () [txt] at (6,1){$\vdots$};

\foreach \y/\ytext in {4/1,3/2,2/3,1/\vdots,-1/\vdots, -2.7/cn^{1/d_p+1}-1, -4/cn^{1/d_p+1}}{
\node () [txt] at (7,\y){$\ytext$};
};

\foreach \x in {-2,-1.6,-0.8,-0.4,0}{
\node () [txt] at (\x,1){};
\draw (u)--(\x,1);
};

\foreach \x in {1.75,0.9,0.1,-0.6,-1.4}
\foreach \y in {4,3,2,1,-1,-3,-4}
{
\node () [txt] at (1,\x){};
\draw (6,\y)--(1,\x);
};

 \end{tikzpicture}

 \end{center}
\caption{
 Construction of $G$ for a general upper bound  
}\label{fig:upper-bound}
\label{fig:general}
\end{figure}

Let $\mathcal{P}$ be a hereditary graph property
and $d_\mathcal{P}$ denote the minimum possible degree
of a minimal forbidden subgraph for $\mathcal{P}$.

{\sf Proof of Theorem~\ref{thm:upper}:} 
Let $k = c \cdot n^{1/(d_\mathcal{P}+1)}$ where we choose the constant $c$ appropriately.
Construct a graph $G$ on $n$ vertices as follows (See Figure~\ref{fig:upper-bound}):
\begin{itemize}
\item Start with a clique on vertices $v_1, \ldots, v_k$.
\item For every $S \subseteq [k]$ such that $|S| = d_\mathcal{P}$
 \begin{itemize}
\item add $k$ new vertices $u_1^S, \ldots, u_k^S$ and 
\item connect each $u_i^S$ to every $v_i : i \in S$.
\end{itemize}
\end{itemize}

Note that every vertex of $G$ is relevant for $\mathcal{P}$. Now we describe an algorithm (See Algorithm~\ref{alg}) to  determine $\mathcal{P}$ in $O(n^{1/(d_\mathcal{P} +1)})$ queries.
Let $c_\mathcal{P}$ denote the smallest integer such that the clique on $c_\mathcal{P}$ vertices satisfies $\mathcal{P}$.

\RestyleAlgo{boxruled}
\LinesNumbered
\begin{algorithm}[ht]
  \caption{ \label{alg}}
  
 \begin{itemize}
\item Query $v_1, \ldots, v_k$.
\item If at least $c_\mathcal{P}$ of these vertices are present then $\mathcal{P}$ must fail.
\item Otherwise there are at most $c_\mathcal{P}-1$ vertices present\\ (wlog: $v_1, \ldots, v_{c_\mathcal{P}-1}$).
\begin{itemize}
\item For every subset $S \subseteq [c_\mathcal{P}-1]$ such that $|S| = d_\mathcal{P}$, query $u_1^S, \ldots, u_k^S$.
\item If the graph induced on the nodes present (after all these\\ ${{c_\mathcal{P}-1} \choose d_\mathcal{P}} \times k$ queries) satisfies 
 $\mathcal{P}$ then answer Yes.\\ Otherwise answer No.
\end{itemize}
\end{itemize}
\end{algorithm}

\subsubsection*{Correctness and complexity}
Note that any vertex that is not queried by the above algorithm can have at most $d_\mathcal{P} - 1$ edges to the vertices in the clique $v_1, \ldots, v_k$.
Since $d_\mathcal{P}$ is the minimum degree of a minimal forbidden subgraph for $\mathcal{P}$, these vertices now become irrelevant for $\mathcal{P}$.
Thus the algorithm can correctly declare the answer based on only the queries it has made.
It is easy to check that the query complexity of the above algorithm is $O(k)$ which is $O(n^{1/(d_\mathcal{P}+1)})$.

\hfill $\Box$

 This completes the proof of Theorem~\ref{thm:upper-d}. Corollary~\ref{cor:upper-d} follows from this by observing that $d_\mathcal{P} \geq 2$ for any non-trivial $\mathcal{P}$.

\subsection{A general lower bound}
\label{sec:lower1}
In this section we prove Theorem~ \ref{thm:forbidden-H}.

\begin{figure}[H]
 \centering
    \begin{tikzpicture}[scale=0.4]
    \tikzstyle{rect}=[rectangle, thick, 
    draw=black, rounded corners, node distance = 2cm]
    \tikzstyle{circ}=[circle, thick, 
    draw=black, rounded corners, node distance = 1cm,
    font=\tiny]
    \tikzstyle{txt} = [above, text width=2cm, text centered]

    \fill[gray!20, rotate = 30] (0.01,2) ellipse (0.5cm and 1.5cm);
    \fill[gray!30, rotate = 70] (0.01,2) ellipse (0.5cm and 1.5cm);
    \fill[gray!40, rotate = 110] (0.01,2) ellipse (0.5cm and 1.5cm);
    \fill[gray!50, rotate = 150] (0.01,2) ellipse (0.5cm and 1.5cm);
    \fill[gray!60, rotate = 190] (0.01,2) ellipse (0.5cm and 1.5cm);
    \fill[gray!70, rotate = 230] (0.01,2) ellipse (0.5cm and 1.5cm);
    \fill[gray!50, rotate = 270] (0.01,2) ellipse (0.5cm and 1.5cm);
    \fill[gray!40, rotate = 310] (0.01,2) ellipse (0.5cm and 1.5cm);
    \fill[gray!30, rotate = 350] (0.01,2) ellipse (0.5cm and 1.5cm);
    
   \fill[gray!5] (0,0) circle (1.2cm);
    \node at (0,0){$C$};

    \foreach \x/\xtext in {36/s_1, 75/s_2, 118/s_3, 160/\cdots, 195/\cdots, 230/s_i, 265/s_{i+1}, 308/s_{p-1}, 346/s_p}
    \node [txt] at (\x:3.8){$\xtext$}; 
\end{tikzpicture}
   
\end{figure}

\begin{definition}[Sunflower]
A sunflower with core set $C$ and $p$ petals 
 is a collection of sets $S_1, \ldots, S_p$ such that
for all $i \neq j$: $S_i \cap S_j = C$ and $(S_i - C) \cap (S_j - C) = \emptyset$.

\end{definition}

We use the following lemma due to Erd\"os and Rado \cite{er}.
\begin{lemma}[Sunflower Lemma]
Let $\mathcal{F}$ be a family of sets of cardinality $k$ each. 
If $|\mathcal{F}| > k!(p-1)^k$ then $\mathcal{F}$ contains a sunflower
with $p$ petals.
\label{lemma:sunflower}
\end{lemma}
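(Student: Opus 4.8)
The statement to be proved is the classical Sunflower Lemma of Erd\H{o}s and Rado, and the plan is the standard argument: induction on the common cardinality $k$ of the members of $\mathcal{F}$.

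First I would dispose of the base case $k=1$. Here the hypothesis reads $|\mathcal{F}| > 1!\,(p-1)^1 = p-1$, so $\mathcal{F}$ consists of more than $p-1$ distinct singletons; picking any $p$ of them gives $p$ pairwise disjoint sets, which form a sunflower with core $C=\emptyset$ and $p$ petals.

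For the inductive step I would assume the lemma for families of $(k-1)$-element sets and take $\mathcal{F}$ to consist of $k$-sets with $|\mathcal{F}| > k!\,(p-1)^k$. The key move is to choose a \emph{maximal} subfamily $A_1,\dots,A_t \in \mathcal{F}$ of pairwise disjoint sets. If $t \ge p$, then $A_1,\dots,A_p$ already form a sunflower with empty core and we are done, so I may assume $t \le p-1$ and set $Y = A_1 \cup \cdots \cup A_t$, so that $|Y| \le tk \le (p-1)k$. By maximality every member of $\mathcal{F}$ meets $Y$, hence summing over $y\in Y$ the number of members containing $y$ is at least $|\mathcal{F}|$, so by averaging there is a point $y\in Y$ lying in more than
\[
\frac{|\mathcal{F}|}{|Y|} \;\ge\; \frac{|\mathcal{F}|}{(p-1)k} \;>\; \frac{k!\,(p-1)^k}{(p-1)k} \;=\; (k-1)!\,(p-1)^{k-1}
\]
members of $\mathcal{F}$. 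Deleting $y$ from each such member produces a family $\mathcal{F}' = \{\, S\setminus\{y\} : y\in S\in\mathcal{F} \,\}$ of $(k-1)$-sets; since each such $S$ contains $y$, the map $S\mapsto S\setminus\{y\}$ is injective on this subcollection, so $|\mathcal{F}'| > (k-1)!\,(p-1)^{k-1}$ and the inductive hypothesis yields a sunflower $S_1\setminus\{y\},\dots,S_p\setminus\{y\}$ in $\mathcal{F}'$ with some core $C'$. Finally I would restore $y$ and check that $S_1,\dots,S_p$ is a sunflower in $\mathcal{F}$ with core $C = C'\cup\{y\}$: for $i\neq j$ we have $S_i\cap S_j = \big((S_i\setminus\{y\})\cap(S_j\setminus\{y\})\big)\cup\{y\} = C'\cup\{y\}$, and $(S_i\setminus C)\cap(S_j\setminus C) = \big((S_i\setminus\{y\})\setminus C'\big)\cap\big((S_j\setminus\{y\})\setminus C'\big) = \emptyset$ by the sunflower property of $\mathcal{F}'$. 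This closes the induction.

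The only point that demands care is the numerology that makes the bound propagate exactly: it is essential that $|Y| \le (p-1)k$, which is precisely why one extracts a \emph{maximal} pairwise-disjoint subfamily rather than an arbitrary hitting set for $\mathcal{F}$, and one must note that passing from $\mathcal{F}$ to $\mathcal{F}'$ loses no sets because $S\mapsto S\setminus\{y\}$ is injective on members containing $y$. Everything else — the averaging step and the verification that the new core is $C'\cup\{y\}$ — is routine, so I expect no real obstacle beyond keeping these constants aligned.
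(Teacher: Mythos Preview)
Your argument is the standard Erd\H{o}s--Rado induction and is correct as written; the numerology, the maximal disjoint subfamily, the averaging over $Y$, and the restoration of $y$ to the core are all handled properly.

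There is nothing to compare against, however: the paper does not prove this lemma. It is quoted as a black-box tool with a citation to Erd\H{o}s and Rado, and then applied in the proofs of Theorems~\ref{thm:forbidden-H} and~\ref{thm:any-her-lower}. So your write-up supplies a proof where the paper supplies only a reference.
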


 {\sf Proof of Theorem~\ref{thm:forbidden-H}:} Let $G$ be a graph on $n$ vertices such that every vertex of $G$ is relevant for the property of containing $H$. Let \[\mathcal{F} := \{S \mid |S| = k \ \&\  H \text{ is a subgraph of } G[S]\}.\]
 Since every vertex of $G$ is relevant for $\mathcal{P}_H$, we have: 
 $|\mathcal{F}| \geq n/k$. Now from Lemma~\ref{lemma:sunflower} we can conclude that
 $\mathcal{F}$ contains a sunflower on at least $|\mathcal{F}|^{1/k} / k = \Omega(n^{1/k})$ petals. Let $C$ be the core of this sunflower. We consider the restriction of $\mathcal{P}_H$
 on $G$ where every vertex in $C$ is present. Since $|C| < k$, $G[C]$ does
 not contain $H$. Now it is easy to check that one must query at least one vertex from
 each petal in order to determine $\mathcal{P}_H$.
 
 \hfill$\Box$
 
 Using similar technique we prove Theorem~\ref{thm:lower} (proof in Appendix~\ref{appx:her-lower}) showing that $\minc(\mathcal{P})$ for any hereditary $\mathcal{P}$ can not fall to a constant.

\subsection{Some tight bounds}

We manage to show that Theorem~\ref{thm:upper} is tight for several special properties like Independence, Triangle-freeness, Bounded-degree etc. In Appendix~\ref{appx:tight-bound} we present them in detail. In order to prove the tight bounds, we show several inequalities which might be of independent interest combinatorially.
We present one such inequality below.

\begin{theorem}\label{thm-n-chi}
Let $\mathcal{I}$ denote the property of being an independent subset of nodes (equivalently the property of being an empty graph). Then,

\[\mathcal{G}\text{-}\minc(\mathcal{I}) \geq n/ \chi.\]
where $\chi$ is the maximum chromatic number of a graph $G \in \mathcal{G}$.
\label{thm:n/chi}
\end{theorem}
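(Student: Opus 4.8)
The plan is to establish the per-graph inequality $cost(\mathcal{I}, G) \ge n/\chi(G)$ for every graph $G$ on $n$ vertices that is relevant for $\mathcal{I}$, and then minimise over $G \in \mathcal{G}$. Note first that a vertex is relevant for $\mathcal{I}$ (emptiness) precisely when it has a neighbour — the independent set consisting of a single neighbour witnesses this — so ``$G$ is relevant for $\mathcal{I}$'' is the same as ``$G$ has no isolated vertex''. Since $\chi(G) \le \chi$ for every $G \in \mathcal{G}$, the per-graph bound immediately yields $cost(\mathcal{I}, G) \ge n/\chi(G) \ge n/\chi$ for all relevant $G \in \mathcal{G}$, and hence $\mathcal{G}\text{-}\minc(\mathcal{I}) \ge n/\chi$.

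To get the per-graph bound I would fix an optimal proper colouring of $G$, i.e.\ a partition $V(G) = V_1 \sqcup \cdots \sqcup V_{\chi(G)}$ into independent sets, and first prove the auxiliary estimate
\[ cost(\mathcal{I}, G) \ \ge\ |N(V_j)| \qquad \text{for every colour class } V_j, \]
where $N(V_j)$ denotes the set of vertices having a neighbour in $V_j$ (which is disjoint from $V_j$ since $V_j$ is independent). I would prove this with a non-adaptive adversary that answers a query $v$ by ``present'' if $v \in V_j$ and by ``absent'' otherwise. Suppose a deterministic algorithm halts after querying a set $Q$ with some $w \in N(V_j) \setminus Q$, and pick a neighbour $u \in V_j$ of $w$. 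Consider the two inputs $S_0 := V_j \cap Q$ and $S_1 := S_0 \cup \{u, w\}$: both are consistent with the answers recorded on $Q$ — they agree on $Q$ because $w \notin Q$ and, in the only problematic case $u \in Q$, one has $u \in V_j \cap Q = S_0$ — yet $G[S_0]$ is empty (a subset of the independent set $V_j$) while $G[S_1]$ contains the edge $uw$. Thus the algorithm's output must be wrong on one of $S_0, S_1$, so any correct algorithm queries all of $N(V_j)$, which is the displayed estimate.

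To finish, I would average the auxiliary estimate over the $\chi(G)$ colour classes. Since $G$ has no isolated vertex, every vertex $v$ has a neighbour, which lies in some $V_j$, so $v \in N(V_j)$ for at least one $j$; summing over $v$ gives $\sum_{j=1}^{\chi(G)} |N(V_j)| \ge |V(G)| = n$. Hence $\max_j |N(V_j)| \ge n/\chi(G)$, and combining, $cost(\mathcal{I}, G) \ge \max_j |N(V_j)| \ge n/\chi(G) \ge n/\chi$, as required. Since the whole argument is an adversary argument, the bound also applies to the randomised cost, consistent with the introduction's remark.

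I expect the adversary step in the middle to be the only delicate point: one must choose the ambiguous pair $(S_0, S_1)$ so that both inputs are simultaneously consistent with the recorded transcript and force opposite answers, and the single subtlety there is checking consistency in the case where the chosen common neighbour $u$ already lies in $Q$. The reduction to the per-graph statement, the relevance characterisation, and the double-counting $\sum_j |N(V_j)| \ge n$ are all routine.
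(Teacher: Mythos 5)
Your proof is correct. The relevance characterisation, the adversary step (including the $u\in Q$ subtlety, which you handle correctly since then $u\in V_j\cap Q=S_0$), and the double counting $\sum_j |N(V_j)|\ge n$ all go through, and the conclusion $cost(\mathcal{I},G)\ge \max_j|N(V_j)|\ge n/\chi(G)\ge n/\chi$ follows.

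The route differs from the paper's in the bookkeeping, though the underlying adversary is the same (reveal one independent colour class as ``present'', so that every un-queried vertex with a neighbour in that class leaves the answer ambiguous). The paper fixes a $\chi$-colouring chosen to maximise the size of the largest class $C_{\max}$ and then splits into two cases: if $|C_{\max}|\le(1-1/\chi)n$ it reveals $C_{\max}$ and uses the maximality of that class to argue every outside vertex has a neighbour in $C_{\max}$, forcing $n-|C_{\max}|\ge n/\chi$ queries; if $|C_{\max}|>(1-1/\chi)n$ it counts edges leaving $C_{\max}$, finds another class $C_i$ receiving at least $n/\chi$ of them, reveals $C_i$, and forces those $n/\chi$ vertices of $C_{\max}$ to be queried. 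Your version replaces this case analysis by the uniform lemma $cost(\mathcal{I},G)\ge|N(V_j)|$ for \emph{every} colour class of an arbitrary optimal colouring, followed by the covering bound $\sum_j|N(V_j)|\ge n$. This buys a shorter argument that needs neither the extremal choice of colouring nor the edge-counting in Case~2, at no loss of strength; the paper's case split buys nothing extra here beyond making the sensitive blocks slightly more explicit.
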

\begin{proof}  
 Let $G \in \mathcal{G}$ be a graph on $n$ vertices such that every vertex of $G$ is relevant
 for $\mathcal{I}$, i.e., $G$ does not contain any isolated vertices. 
 Consider a coloring of vertices of $G$ with $\chi$ colors. Let $C_i$ denote the set of
 vertices colored with color $i$. We pick a coloring that maximizes $\max_{i \leq \chi}\{|C_i|\}$.
  Let $C_{max}$ denote such a color class with maximum number of vertices in this coloring.  Thus $|C_{\max}| \geq n/ \chi$. We consider the following two cases: 
 
  {\sf Case 1:} $|C_{\max}| \leq (1-\frac{1}{\chi})n$

 In this case, the adversary answers all the vertices in $C_{\max}$ to be present. Since $C_{\max}$ is maximal
  and $G$ does not contain any isolated vertices, every vertex outside $C_{\max}$ must
  be connected to some vertex in $C_{\max}$.  As long as any of these outside vertices are present there will be an edge. Hence we get a lower bound of
   $n - |C_{\max}|\geq n/\chi$.
  
    {\sf Case 2:} $|C_{\max}| > (1-\frac{1}{\chi})n$
    
    Since there are no isolated vertices in $G$,  every vertex in  $C_{\max}$ must have an edge to some vertex in $C_i \neq C_{\max}$. Furthermore as $|C_{\max}| > (1-\frac{1}{\chi})n$, there are totally at least $(1-\frac{1}{\chi})n$ edges incident on $C_{\max}$.  
    
    Now the  vertices outside $C_{\max}$ are colored with $(\chi -1)$ colors. Thus there must exists a $C_i$ such that at least $\frac{(1-\frac{1}{\chi})n}{\chi-1} = n/\chi$ edges incident on $C_{\max}$ are also incident on $C_i$. Now the adversary answers all the vertices in that $C_i$ to be present. Then one must check at least $n/\chi$ vertices from $C_{\max}$ because as soon as any one of them is present we have an edge in the graph.
 \end{proof}

\section{Results on restricted graph classes}
\label{sec:restrictions}

All the proofs of this section are deferred to Appendix~\ref{appx:restrict}.

\subsection{Triangle-freeness in planar graphs}
A graph $G$ is called {\em inherently sparse} if every subgraph of $G$ on $k$
nodes contains $O(k)$ edges.

\begin{theorem}\label{lemma-sparse-tri}
Let $\mathcal{G}_s$ be a family of inherently sparse graphs on $n$ vertices and $\mathcal{T}$ denote the property of being triangle-free. Then,

\[\mathcal{G}_s\text{-}\minc(\mathcal{T}) = \Omega(\sqrt{n}).\]

\end{theorem}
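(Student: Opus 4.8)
\textbf{Setup.} Let $\mathcal{G}_s$ be a family of inherently sparse graphs, so there is a constant $\alpha$ such that every $k$-vertex subgraph of any $G\in\mathcal{G}_s$ has at most $\alpha k$ edges; in particular the whole graph $G$ on $n$ vertices has at most $\alpha n$ edges and average degree at most $2\alpha$. The plan is the standard adversary/sensitivity argument: starting from $G\in\mathcal{G}_s$ in which every vertex is relevant for triangle-freeness $\mathcal{T}$, I will exhibit an input $S$ on which the induced-subgraph triangle-freeness function has block-sensitivity (indeed sensitivity, since the function is monotone) $\Omega(\sqrt n)$, which by the chain $D\ge R\ge bs$ gives the bound for deterministic and randomized cost alike.

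\textbf{Main idea.} Relevance of a vertex $v$ for $\mathcal{T}$ means there is some $S\ni v$ with $G[S]$ triangle-free but $G[S-v]$... wait, that is the wrong direction; more usefully it means $v$ lies in some triangle of $G$ (deleting $v$ can destroy a triangle). So every vertex of $G$ is in at least one triangle; fix for each $v$ a triangle $T_v=\{v,a_v,b_v\}$. I want to find a large collection of these triangles that are ``independent'' in the sense that knocking out one designated vertex from each is forced. Build an auxiliary conflict graph $\mathcal{H}$ on the $n$ triangles where $T_v\sim T_w$ if they share a vertex. Because $G$ is inherently sparse, the triangles cannot overlap too much: a careful count (the number of triangles through a fixed vertex, summed, is controlled by sparsity of local neighbourhoods) shows $\mathcal{H}$ has bounded-ish degree on average, so it contains an independent set of size $\Omega(\sqrt n)$ — this is the crucial quantitative step, and where inherent sparsity is really used. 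Concretely, sparsity bounds the number of edges inside any vertex's neighbourhood and hence the number of triangles on any vertex or edge; taking a maximal set of pairwise vertex-disjoint triangles greedily, sparsity forces it to have size $\Omega(\sqrt n)$ because a small disjoint family would leave too many vertices uncovered while each chosen triangle can ``block'' only $O(\sqrt n)$ others.

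\textbf{The adversary strategy.} Let $T_1,\dots,T_m$ with $m=\Omega(\sqrt n)$ be pairwise vertex-disjoint triangles. Take $S_0$ to be the union of all $m$ triangles together with no other vertices (or: answer ``present'' for exactly $\bigcup_j V(T_j)$). Then $G[S_0]$ contains $m$ disjoint triangles. Now consider flipping, for a single $j$, one vertex of $T_j$ to ``absent'': this destroys the $j$-th triangle, and since the triangles are vertex-disjoint no other triangle is affected — but $G[S_0]$ still has the other $m-1$ triangles, so the output does not flip. That is the wrong base point. Instead pick $S_1$ so that exactly \emph{one} triangle, say $T_1$, survives and all others are already broken; more cleanly, for each $j$ choose a vertex $x_j\in T_j$ and let $S^\ast=\bigcup_j V(T_j)\setminus\{x_2,\dots,x_m\}$, so $G[S^\ast]$ contains exactly the triangle $T_1$ and is otherwise triangle-free on the selected vertices. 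Hmm, this only gives sensitivity $1$. The right move is the classical one: let the adversary answer ``absent'' on all vertices outside $\bigcup_j V(T_j)$ and, within, keep all $2m$ vertices $\{a_j,b_j\}$ present for every $j$ and all apexes $v_j$ absent; now $G[S]$ is triangle-free, and for each $j$ the block $\{v_j\}$ (or whichever single vertex completes $T_j$) is sensitive, and these $m$ singleton blocks are disjoint. Hence $bs_{f,S}\ge m=\Omega(\sqrt n)$. One must also check that making these outside vertices absent cannot create new triangles — it cannot, since we are only deleting vertices — and that $G[S]$ as constructed is genuinely triangle-free, which holds because each remaining potential triangle is one of the $T_j$ and its apex is absent, while inherent sparsity plus disjointness prevents ``accidental'' triangles among the kept endpoints (if needed, thin the family further to kill any such edges, losing only a constant factor).

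\textbf{Expected obstacle.} The delicate part is the combinatorial extraction: proving that inherent sparsity forces $\Omega(\sqrt n)$ pairwise vertex-disjoint triangles (equivalently a large independent set in the triangle-intersection graph), and then cleaning up the constructed input so that it is provably triangle-free with $m$ disjoint sensitive blocks. I expect to handle the disjoint-triangle count by a greedy/averaging argument: if a maximal vertex-disjoint triangle family has size $t$, its $3t$ vertices form a hitting set for all triangles, so every one of the $n$ relevant vertices has a triangle meeting these $3t$ vertices; bounding, via sparsity, the number of triangles incident to any fixed vertex by $O(t)$ — wait, that needs $t=\Omega(\sqrt n)$ to be self-consistent — actually sparsity bounds triangles per vertex by a constant times the max degree, which for planar/sparse graphs is not constant, so I will instead bound triangles through a vertex $v$ by the number of edges among $N(v)$, which is $O(\deg v)$ by sparsity, and sum: total triangle-incidences $\le \sum_v O(\deg v)=O(n)$; since each of the $3t$ hitting vertices covers all $\ge n$ vertices' triangles this forces $t\cdot O(\text{something})\ge n$, giving $t=\Omega(\sqrt n)$ after balancing the degree bound (a vertex of degree $\Theta(\sqrt n)$ is the extremal case). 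That balancing is the one genuinely non-routine estimate, and I would present it carefully rather than wave at it.
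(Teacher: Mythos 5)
Your adversary step is fine, but the combinatorial extraction it rests on is false: an inherently sparse graph in which every vertex lies in a triangle need \emph{not} contain $\Omega(\sqrt n)$ pairwise vertex-disjoint triangles. The windmill (friendship) graph --- a hub $v$ joined to $n-1$ vertices that are paired off into $(n-1)/2$ disjoint edges --- is planar, hence inherently sparse, every vertex is in a triangle, yet every triangle contains $v$, so the maximum vertex-disjoint triangle packing has size $1$. Your attempted rescue in the last paragraph does not work either: the bound ``triangles through $v$ is at most the number of edges in $N(v)$, hence $O(\deg v)$'' allows a single vertex of degree $\Theta(n)$ to carry all $\Theta(n)$ triangles, so the ``balancing'' can only conclude $t=\Omega(1)$, not $t=\Omega(\sqrt n)$. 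There is no extremal case at degree $\Theta(\sqrt n)$; the degenerate case is degree $\Theta(n)$, and it kills the disjoint-triangle route outright.

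The theorem is still true in that degenerate case, but by a different mechanism, and this is exactly why the paper's proof is a case analysis on a minimal triangle hitting set $S$ rather than a packing argument. If $|S|\ge\sqrt n$, the adversary exposes everything outside $S$ and forces all of $S$ to be queried. If $|S|<\sqrt n$, then since each of the $\Omega(n)$ vertices outside $S$ lies in a triangle meeting $S$ in a vertex or an edge, and sparsity bounds the number of edges inside $S$ by $O(\sqrt n)$, some vertex or edge of $S$ supports $\Omega(\sqrt n)$ triangles. For a heavy edge one queries the $\Omega(\sqrt n)$ apexes directly (this is your disjoint-singleton-blocks argument, and it is sound here). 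For a heavy vertex $v$ the triangles through $v$ may share further vertices, so one cannot take disjoint blocks naively; instead the problem reduces to emptiness testing on the graph induced on $N(v)$, which has $\Omega(\sqrt n)$ non-isolated vertices and, by sparsity, constant chromatic number, so the $n/\chi$ lower bound for emptiness (Theorem~\ref{thm-n-chi}) finishes. You need to add this hitting-set dichotomy and the reduction to emptiness in the heavy-vertex case; without them the proof does not go through.
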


As a consequence  we obtain the same for the class of planar graphs.

\subsection{Acyclicity in planar graphs}
\begin{theorem}\label{planar:acyclic}
Let $\mathcal{G}_{\mathcal{P}_3}$ be a family of $3$-connected planar graphs and $\mathcal{C}$ denote the property of being acyclic. Then,
\[ \mathcal{G}_{\mathcal{P}_3}-\minc(\mathcal{C}) = \Omega(\sqrt n). \]
\label{thm:planar-acyclic}
\end{theorem}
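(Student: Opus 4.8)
The plan is to lower-bound $cost(\mathcal{C},G)$ for an arbitrary $3$-connected planar graph $G$ on $n$ vertices (the bound will hold for every such $G$, so the restriction to relevant $G$ in the definition of $\minc$ plays no role), by a case analysis on the maximum degree $d_{\max}=d_{\max}(G)$ with threshold $\sqrt n$: Lemma~\ref{lem:d-max} handles the low-degree case, and a planarity argument handles the high-degree case. I would then combine the two estimates.

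\emph{Low-degree case: $d_{\max}\le\sqrt n$.} Since $G$ is $3$-connected, its minimum degree is at least $3$, so $m\ge 3n/2$ and hence $m-n\ge n/2$. Lemma~\ref{lem:d-max} then yields
\[
cost(\mathcal{C},G)\ \ge\ \frac{m-n}{d_{\max}}\ \ge\ \frac{n/2}{\sqrt n}\ =\ \frac{\sqrt n}{2}.
\]

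\emph{High-degree case: $d_{\max}>\sqrt n$.} Here I would pick $v$ with $\deg(v)=d:=d_{\max}$, fix a plane embedding of $G$, and list the neighbours $w_1,\dots,w_d$ of $v$ in rotational order around $v$. Since $G$ is $3$-connected, $G-v$ is $2$-connected, and in a $2$-connected plane graph every face boundary is a cycle; deleting $v$ merges precisely the $d$ faces incident to $v$ into a single face, whose boundary is therefore a cycle $C^\ast$ of $G-v$ meeting $w_1,\dots,w_d$ in this cyclic order. Let $A_i$ be the arc of $C^\ast$ from $w_i$ to $w_{i+1}$ (indices mod $d$); the $A_i$ edge-partition $C^\ast$, so consecutive ones share only an endpoint, and choosing every second arc (with care at the wrap-around) gives $\lfloor d/2\rfloor$ pairwise vertex-disjoint paths among them. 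Now take the input $S=\{v\}$: $G[S]$ is a single vertex and hence acyclic, while for each chosen arc $A_i$ the graph $G[\{v\}\cup V(A_i)]$ contains the cycle $v,w_i,(\text{along }A_i),w_{i+1},v$ and so is not acyclic. Thus $V(A_1),V(A_3),\dots$ are $\lfloor d/2\rfloor$ pairwise disjoint sensitive blocks at $S$ for the Boolean function associated with $(\mathcal{C},G)$, so $cost(\mathcal{C},G)\ge bs(\mathcal{C},G)\ge\lfloor d/2\rfloor\ge\lfloor\sqrt n/2\rfloor$.

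Combining the two cases gives $cost(\mathcal{C},G)=\Omega(\sqrt n)$ for every $3$-connected planar $G$, hence $\mathcal{G}_{\mathcal{P}_3}\text{-}\minc(\mathcal{C})=\Omega(\sqrt n)$. I expect the only real obstacle to be the planarity input in the high-degree case — the fact that removing a vertex $v$ from a $3$-connected plane graph leaves a cycle through all of $N(v)$ in rotational order (equivalently, that the merged face is bounded by a cycle). This is a standard consequence of every face of a $2$-connected plane graph being bounded by a cycle; once it is in hand, the remainder is the cyclomatic count behind Lemma~\ref{lem:d-max} and a one-line block-sensitivity computation. A secondary point to get right is the wrap-around when selecting the disjoint arcs, which only affects the constant.
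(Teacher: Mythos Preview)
Your proof is correct and follows exactly the paper's two-case split: Lemma~\ref{lem:d-max} together with $m\ge 3n/2$ for $d_{\max}\le\sqrt n$, and the facial cycle around a maximum-degree vertex for $d_{\max}>\sqrt n$. Your high-degree argument is in fact slightly more careful than the paper's---the paper speaks of a matching giving ``sensitive blocks of length~2'', whereas your arcs $A_i$ correctly accommodate internal (non-neighbour) vertices on the facial cycle.
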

\begin{proof}
Let $G \in \mathcal{G}_{\mathcal{P}_3}$ be a graph on $n$ vertices and $m$ edges such that every vertex is relevant for the acyclicity property.
Let $d_{max}$ denote the maximum degree of $G$.

{\sf Case 1:} $d_{\max} > \sqrt n$: We use the following fact: In a 3-connected planar 
graphs, removing any vertex leaves a facial cycle around it. We apply this for the maximum degree vertex. In other words, we have a (not necessarily induced) wheel with $d_{max}$ spokes (some spokes might be missing). See Figure \ref{fig:wheel}. The adversary answers the central vertex
of the wheel to be present. We can find a matching of size $\Omega(n)$ among the vertices of the cycle. Hence we have $\Omega(n)$ sensitive blocks
of length $2$ each, which can not be left un-queried. 

\begin{center}
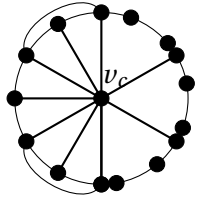
\begin{figure}[H]
 \begin{center}
\begin{tikzpicture}[scale=0.3]

 \tikzstyle{nnode} = [circle, draw=black, fill=black, inner sep=0pt, minimum size = 2mm]
 \tikzstyle{txt} = [above, text width=2cm, text centered]

         \node () [nnode] at (0,0){};
         \node () [txt] at (0.6,0.1){$v_c$};
         \draw (0, 0) circle (3.8cm);
        \foreach \x  in {0, 30, ..., 180}{
        \node () [nnode] at (\x+90:3.8){};
         \draw[thick] (\x+90:3.8)--(0,0);
         };
         
       \foreach \x  in {190, 220, ..., 360}{
        \node () [nnode] at (\x+90:3.8){};
        };
       \foreach \x  in {180, 240, ..., 330}{
        \node () [nnode] at (\x+90:3.8){};
         \draw[thick] (\x+90:3.8)--(0,0);
         };
  \path[every node/.style={font=\sffamily\small}]
     (150:3.8) edge[bend left = 90] node [right] {} (90:3.8)  
     (270:3.8) edge[bend left = 90] node [right] {} (210:3.8);

\end{tikzpicture}
\end{center}
\caption{A wheel with $d_{max}$ spokes}\label{fig:wheel}
\end{figure}
\end{center}

{\sf Case 2:} $d_{\max} \leq \sqrt n$: We use the fact that every 3-connected graph
must have at least $3n/2$ edges. Now using Lemma~\ref{lem:d-max} we obtain 
a lower bound of $(m - n)/d_{max} \geq \Omega(\sqrt n)$.

\end{proof}

We can generalize the above proof to any planar graph (See Appendix~\ref{planar:acyclic-gen}). 

\section{Conclusion \& open directions}
\label{sec:open}

\begin{itemize}

\item {\sf \bf Weak-evasiveness in the presence of symmetry:}
Is it true that every hereditary graph property $\mathcal{P}$ in the node-query setting is weakly-evasive under symmetry, 
i.e., 
$\mathcal{G}_{\mathcal{T}}\text{-}\minc(\mathcal{P}) = \Omega(n)$? 
What about the randomized case?

\item {\sf \bf Polynomial lower bound in the absence of symmetry:}
How low can \\$\minc(\mathcal{P})$ go for a hereditary $\mathcal{P}$ 
in the absence of symmetry? Is it possible to improve the $\Omega(\log n/\log \log n)$ bound substantially?

\item {\sf \bf Further restrictions on graphs:}
How low can $\mathcal{G}\text{-}\minc(\mathcal{P})$ go for hereditary properties $\mathcal{P}$ on restricted classes of graphs $\mathcal{G}$ such as social-network graphs, planar graphs, bipartite graphs, bounded degree graphs etc? 

\item {\sf \bf Tight bounds on $\minc$:}
What are the tight bounds for natural properties such as acyclicity, planarity, containing a cycle of length $t$, path of length $t$?

\item {\sf \bf Extension to hypergraphs:}
What happens for hereditary properties of (say) $3$-uniform hypergraphs in node-query setting? We note that $\minc(\mathcal{I}) = \Theta(n^{1/3})$ for $3$-uniform hypergraphs.
What about other properties?

\item {\sf \bf Global vs local:}
We note (See Appendix~\ref{connectivity}) that global connectivity requires $\Theta(n)$ queries whereas
the cost of $s$-$t$ connectivity for fixed $s$ and $t$ can go as low as $O(1)$. 
What about other properties such as min-cut?

\item {\sf \bf How about $\maxc$ upper bounds? :}
From algorithmic point of view, it might be interesting to obtain good upper bounds on the $\maxc(\mathcal{P})$ for some natural properties. It might also be interesting
to investigate $\mathcal{G}\text{-}\maxc(\mathcal{P})$ for several restricted graph classes such as social-network graphs, planar graphs, bipartite graphs etc.

\end{itemize}

\bibliographystyle{plain}
\bibliography{bibo}

\newpage

\appendix{\bf \Large Appendix}
\label{apx}

\section{Presence of symmetry}
\label{appx:symmetry}

\begin{lemma}
\label{appx:symmetry-planar}
Let $G$ be a graph on $n$ vertices,  $m$ edges,  and maximum degree $d_{max}$. Let $\mathcal{P}'$ denote the property of being planar. Then,
\[cost(\mathcal{P}', G) \geq (m-3n+6)/d_{max}.\]
\end{lemma}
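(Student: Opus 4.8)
The plan is to imitate the proof of Lemma~\ref{lem:d-max} almost verbatim, replacing the forest edge bound ``a graph on $n$ vertices and more than $n-1$ edges has a cycle'' by the Euler-formula bound ``a planar graph on $N\geq 3$ vertices has at most $3N-6$ edges'', and then run the identical sensitivity (equivalently, adversary) argument.

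First I would introduce the analogue of the feedback vertex set. Call $X\subseteq V(G)$ a \emph{planarizing set} if $G-X$ is planar, and let $X$ be one of minimum size, $\tau:=|X|$. Deleting the $\tau$ vertices of $X$ destroys at most $\tau\cdot d_{max}$ edges, so $e(G-X)\geq m-\tau d_{max}$; since $G-X$ is planar we also have $e(G-X)\leq 3n-6$ (the degenerate case $|V(G-X)|<3$ only loosens this). Combining the two inequalities gives $m-\tau d_{max}\leq 3n-6$, i.e. $\tau\geq (m-3n+6)/d_{max}$. If $G$ itself is planar then $\tau=0$ and the asserted bound is non-positive, so from now on I may assume $\tau\geq 1$.

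Next I would exhibit a highly sensitive input of the monotone decreasing function $f\colon\{0,1\}^{V(G)}\to\{0,1\}$ that the algorithm must compute, namely $f(x)=1$ iff $G[S_x]$ is planar, so that $cost(\mathcal{P}',G)=D(f)$. Take $x^\star$ to be the input in which every vertex of $V(G)\setminus X$ is present and every vertex of $X$ is absent; then $G[S^\star]=G-X$ is planar, so $f(x^\star)=1$. Now fix any single $v\in X$ and flip its bit to ``present''. The resulting induced subgraph is $G-(X\setminus\{v\})$, and because $X$ is a \emph{minimum} planarizing set the proper subset $X\setminus\{v\}$ is not planarizing, so $G-(X\setminus\{v\})$ is non-planar and $f$ flips to $0$. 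Hence each of the $\tau$ coordinates lying in $X$ is sensitive at $x^\star$, so $s(f)\geq s_{f,x^\star}\geq\tau$.

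Finally I would conclude $cost(\mathcal{P}',G)=D(f)\geq s(f)\geq\tau\geq (m-3n+6)/d_{max}$; and since $f$ is monotone, $bs(f)=s(f)$, so the same bound holds for $R(f)$ and the randomized cost as well. I do not expect a genuine obstacle here: the only points that need a sentence of care are the small-graph degeneracy in the edge count (the inequality is vacuous unless $G$ is non-planar, in which case $|V(G-X)|\geq 5$ automatically, so $e(G-X)\leq 3n-6$ is safe) and the appeal to minimality of $X$ to guarantee that $X\setminus\{v\}$ fails to planarize. If one prefers to avoid sensitivity language, the last step can equivalently be phrased as the adversary strategy of Lemma~\ref{lem:d-max}: answer ``present'' on every vertex outside $X$ and ``absent'' on every queried vertex of $X$; any unqueried $v\in X$ then leaves the answer undetermined, forcing the algorithm to probe all of $X$.
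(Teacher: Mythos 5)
Your proposal is correct and follows essentially the same route as the paper: bound the size of a minimum planarizing (hitting) set by the edge-count argument $m-\tau d_{max}\leq 3n-6$, then apply the adversary/sensitivity argument that fixes all vertices outside the set to be present. Your write-up merely makes explicit the sensitivity formalism and the degenerate cases that the paper leaves implicit.
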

\begin{proof} To make $G$ planar one has to remove at least $(m-3n+6)$ edges from the graph $G$. Removing one vertex can remove at most $d_{max}$ edges.
Thus the size of minimum hitting set of $G$ is at least $(m-3n+6)/d_{\max}$. The adversary answers all vertices outside this minimum hitting set
to be present. Now the algorithm must query every vertex in the minimum hitting set.
\end{proof}

\begin{theorem} 
\label{thm:appx:symmetry-planar}
 \[ \mathcal{G}_{\mathcal{T}}\text{-}\minc(\mathcal{P}') = \Omega(n).\]

\end{theorem}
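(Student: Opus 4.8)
The plan is to mimic the structure of the proof of Theorem~\ref{thm:acyc-trans-lower}: a transitive graph is regular, so I can convert the edge-removal lower bound from Lemma~\ref{appx:symmetry-planar} into a clean bound in terms of $n$ alone. First I would invoke the fact (as cited via \cite{agt} in Theorem~\ref{thm:acyc-trans-lower}) that any vertex-transitive graph $G$ on $n$ vertices is $d$-regular for some $d$, hence $m = dn/2$ and $d_{max} = d$. Plugging these into Lemma~\ref{appx:symmetry-planar} gives
\[
cost(\mathcal{P}', G) \geq \frac{m - 3n + 6}{d_{max}} = \frac{dn/2 - 3n + 6}{d} = \frac{n}{2} - \frac{3n - 6}{d}.
\]
If $d$ is large (say $d \geq 7$, matching the footnote in the results table), this is already $\Omega(n)$, since then $\frac{3n-6}{d} \leq \frac{3n}{7}$ and the whole expression is at least $\left(\frac12 - \frac37\right)n = \frac{n}{14}$, which is $\Omega(n)$.

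The one gap in that argument is the small-degree regime: if $d \leq 6$, the graph may be planar (indeed $d$-regular planar graphs exist for $d \le 5$), in which case $m - 3n + 6 \le 0$ and Lemma~\ref{appx:symmetry-planar} is vacuous. So the second step is to handle $d$ small separately. Here I would argue that for $G$ to be \emph{relevant} for planarity — i.e.\ every vertex must be relevant, so in particular $G$ itself must be non-planar and moreover removing any single vertex must be able to affect planarity — $G$ cannot be too sparse; more robustly, a constant-degree transitive non-planar graph still has the property that deleting a sublinear set of vertices cannot destroy all Kuratowski obstructions, because by transitivity the $K_5$- or $K_{3,3}$-subdivisions are spread evenly and one can pack $\Omega(n)$ vertex-disjoint ones (or at least $\Omega(n)$ disjoint sensitive blocks) using the regularity and a greedy/averaging argument. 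The adversary then answers everything outside a chosen large packing to be present, forcing the algorithm to query into each block. This gives the $\Omega(n)$ bound uniformly, and in fact shows the table footnote restriction ``$d(G) \ge 7$'' is the honest statement of what the clean Lemma~\ref{appx:symmetry-planar} argument yields, with the small-degree cases needing the packing argument.

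I expect the main obstacle to be exactly this small-degree case: making the packing-of-Kuratowski-subdivisions argument precise for an arbitrary constant-degree vertex-transitive non-planar graph. One clean route is to observe that a $d$-regular non-planar graph on $n$ vertices has crossing number $\Omega(n)$ for fixed $d$ (by the crossing-number inequality once $m \ge 4n$, or by ad hoc arguments for $d \in \{5,6\}$ combined with girth considerations from transitivity), and crossing number $c$ forces at least $c$ edge-deletions, hence a hitting set of size $\Omega(c/d_{max}) = \Omega(n)$, and then the same adversary argument as in Lemma~\ref{appx:symmetry-planar} applies. If even that is delicate, one falls back on the weaker but still adequate claim stated in the table — the bound with the ``$d(G) \ge 7$'' hypothesis — for which the three-line computation above suffices and no packing argument is needed. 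I would present the clean version as the main proof and relegate the constant-degree completion to a remark, since Theorem~\ref{thm:appx:symmetry-planar} as displayed claims the unconditional $\Omega(n)$.
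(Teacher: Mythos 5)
Your core argument is exactly the paper's proof: transitivity gives $d$-regularity, so $m=dn/2$ and $d_{max}=d$, and Lemma~\ref{appx:symmetry-planar} yields $cost(\mathcal{P}',G)\geq n/2-(3n-6)/d=\Omega(n)$ once $d\geq 7$ --- the paper's proof consists of precisely this computation and, like yours, explicitly restricts to $d\geq 7$ (hence the footnote in the results table). Your diagnosis that the small-degree regime is a genuine gap in the unconditional statement is accurate, but it is a gap in the paper's own proof rather than in your reproduction of it; your sketched completion via packing Kuratowski obstructions or crossing-number bounds goes beyond what the paper establishes and would need to be made rigorous (the crossing-number inequality needs $m\geq 4n$, which fails for $d\leq 6$), but it is not required to match the paper's argument.
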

\begin{proof}
Since $G$ is transitive, $G$ is $d$ regular for some $d$ \cite{agt}. Therefore $m = dn/2$ and $d_{max} = d$. Hence for $d \geq 7$ using Lemma~\ref{appx:symmetry-planar} we get the desired bound.
\end{proof}

\section{Hereditary graph property lower bound}\label{appx:her-lower}

\begin{reptheorem}{thm:any-her-lower}
(Restated) For any hereditary graph property $\mathcal{P}$
\[ \minc(\mathcal{P}) = \Omega \left( \frac{\log n}{\log \log n}\right).\]

\end{reptheorem}
 \begin{proof} 
 
 Let $G$ be a graph on $n$ vertices such that every vertex of $G$ is relevant for $\mathcal{P}$. Let $k$ be the largest integer such that $G$ contains a minimal forbidden subgraph for $\mathcal{P}$ on $k$ vertices.
 
 {\sf Case 1:} $k \geq \frac{\log n}{2\log \log n}$.
 
 Since one must query all the vertices of a minimal forbidden subgraph, we obtain
 a lower bound of $k = \Omega(\log n/\log \log n)$.

{\sf Case 2:} $k < \frac{\log n}{2\log \log n}$.

 Since every vertex of $G$ is relevant for $\mathcal{P}$ and all the minimal forbidden subgraphs of $\mathcal{P}$ present in $G$ are of size at most $k$, every vertex of $G$ must belong to some minimal forbidden subgraph of size at most $k$. 
Consider the property $\mathcal{P}_k$ obtained from $\mathcal{P}$ by omitting
the minimal forbidden subgraphs of $\mathcal{P}$ on $k$ or more vertices.
Our simple but crucial observation is that $\mathcal{P}$ and $\mathcal{P}_k$ 
are equivalent as far as $G$ is concerned. Therefore, they have the same complexity.
Now we define
$\mathcal{F}_i$ for $i \leq k$ as follows:
\[\mathcal{F}_i := \{S \mid |S| = i \ \&\  G[S] \notin \mathcal{P}\ \&\  
\forall T \subset S: G[T] \in \mathcal{P}\}.\]

 Since every vertex of $G$ is relevant for $\mathcal{P} \equiv \mathcal{P}_k$, we have: 
 $|\bigcup_{i=1}^k \mathcal{F}_i| \geq n/k$.
 Since $\mathcal{F}_i$ and $\mathcal{F}_j$ are disjoint when $i \neq j$, we have
 $\sum_{i=1}^k |\mathcal{F}_i| \geq n/k$. Therefore one of the $\mathcal{F}_i$s must be
 of size at least $n/k^2$. We denote that $\mathcal{F}_i$ by $\mathcal{F}'$.

  Now from Lemma~\ref{lemma:sunflower} we can conclude that
 $\mathcal{F}'$ contains a sunflower on at least $|\mathcal{F}'|^{1/k} / k$ petals. Let $C$ be the core of this sunflower. We consider the restriction of $\mathcal{P}$
 on $G$ where every vertex in $C$ is present. Since $|C| < i$, by definition of $\mathcal{F}_i$ we must have $G[C] \in \mathcal{P}$. Now it is easy to check that one must query at least one vertex from
 each petal in order to determine $\mathcal{P}$. A simple calculation yields that
 one can obtain a lower bound of $\min\{k, \frac{2^{\Omega(\log n/k)}}{k}\}$.
 When $k = \log n/ (2 \log \log n)$, this gives us $\Omega(\log n / \log \log n)$ bound.

 \end{proof}

 \section{Proof of tight bounds}
 \label{appx:tight-bound}
 
In this section we show that Theorem~\ref{thm:upper} is tight for several
special properties.
\subsection*{Independence/Emptiness}
\label{sec:ind}

 \begin{theorem}
 Let $\mathcal{I}$ denote the property of being an independent subset of nodes (equivalently the property of being an empty graph). Then,
 \[\minc(\mathcal{I}) = \Omega(\sqrt n).\]
 \label{thm:ind-lower}
 \end{theorem}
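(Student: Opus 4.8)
The plan is to exhibit, for every graph $G$ on $n$ vertices that is relevant for $\mathcal{I}$ (i.e. has no isolated vertex), an adversary strategy forcing $\Omega(\sqrt n)$ queries, and then separately to exhibit a matching upper-bound construction showing $\Omega(\sqrt n)$ cannot be beaten — but for the lower-bound direction claimed in Theorem~\ref{thm:ind-lower} only the first half is needed. The cleanest route is to invoke Theorem~\ref{thm:n/chi}, which already gives $\mathcal{G}\text{-}\minc(\mathcal{I}) \geq n/\chi$ where $\chi$ is the maximum chromatic number over $G \in \mathcal{G}$. Taking $\mathcal{G}$ to be the class of all graphs is not directly useful since $\chi$ can be $n$; instead the point is that the adversary/algorithm designer gets to \emph{choose} $G$, so we must argue that \emph{every} relevant $G$ either has small chromatic number (and then Theorem~\ref{thm:n/chi}'s argument applies to give a large lower bound) or else contains a large clique or other dense structure that itself forces many queries.

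First I would split on the maximum degree $d_{\max}$ of $G$. If $d_{\max} \geq \sqrt n$, pick a vertex $v$ of degree $\geq \sqrt n$; the adversary declares $v$ present and then each neighbor of $v$ forms a singleton sensitive block (as soon as any one neighbor is present, the edge to $v$ appears), so the algorithm must query all $\geq \sqrt n$ neighbors, giving the bound. If $d_{\max} < \sqrt n$, then by greedy colouring $\chi(G) \leq d_{\max}+1 < \sqrt n + 1$, so the colour-class argument of Theorem~\ref{thm:n/chi} (equivalently: a largest colour class $C_{\max}$ has $|C_{\max}| \geq n/\chi > \sqrt n /2$, and its maximality together with the absence of isolated vertices forces either $n - |C_{\max}|$ outside vertices or $n/\chi$ inside vertices to be sensitive) yields $\minc(\mathcal{I}) \geq n/\chi = \Omega(\sqrt n)$. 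Combining the two cases gives the theorem.

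The main obstacle is making the two cases fit together cleanly rather than handling the corner where $d_{\max}$ is near $\sqrt n$ and $C_{\max}$ is near $n$ — here one should check that the two subcases of Theorem~\ref{thm:n/chi}'s proof (Case~1: $|C_{\max}| \leq (1-1/\chi)n$ handled by outside-vertex sensitivity; Case~2: $|C_{\max}| > (1-1/\chi)n$ handled by a second colour class receiving $\geq n/\chi$ of the edges leaving $C_{\max}$) genuinely cover all possibilities, which they do. A secondary point worth stating explicitly is that since $\mathcal{I}$ is monotone (decreasing), $bs(f) = s(f)$, so all these sensitivity lower bounds are in fact bounds on $bs(f)$ and hence survive into the randomized model, matching the paper's remark. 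Thus the proof reduces to: instantiate Theorem~\ref{thm:n/chi} with the high-degree escape clause, and conclude $\minc(\mathcal{I}) = \Omega(\sqrt n)$.
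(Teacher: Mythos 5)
Your proof is correct and is essentially the paper's own argument: the paper splits on whether $\chi(G)\leq\sqrt n$ (applying Theorem~\ref{thm:n/chi}) or not (in which case $\chi\leq d_{\max}+1$ forces a vertex of degree $\Omega(\sqrt n)$, handled by the adversary fixing that vertex present), and your split on $d_{\max}$ is the same dichotomy stated in the mirror order. The extra remarks about monotonicity and block sensitivity are consistent with the paper but not needed beyond what the paper already asserts.
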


   \begin{proof}
 
 Let $G$ be a graph without isolated vertices. If $\chi(G) \leq \sqrt n$ then
 from Theorem~\ref{thm:n/chi} we get $\Omega(\sqrt n)$ lower bound.
 Otherwise $G$ must have a vertex of degree $\Omega(\sqrt n)$. In this case the adversary answers this vertex to be present. Now the algorithm must query all its neighbours.
 \end{proof}

\subsection*{Triangle-freeness}

\begin{theorem} Let $\mathcal{T}$ denote the property of being triangle-free. Then,
\[ \minc(\mathcal{T}) = \Omega(n^{1/3}). \]
\label{thm:triangle-free}
\end{theorem}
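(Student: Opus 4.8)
\textbf{Proof plan for Theorem~\ref{thm:triangle-free}.}

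The plan is to produce, for an arbitrary base graph $G$ on $n$ vertices in which every vertex is relevant for triangle-freeness, a sensitive structure of size $\Omega(n^{1/3})$, thereby lower-bounding $cost(\mathcal{T}, G)$ and hence $\minc(\mathcal{T})$. Relevance here means that every vertex $v$ lies in at least one triangle of $G$ (otherwise toggling $v$ could never create or destroy a triangle). So we may assume $G$ is covered by triangles. The proof will split into cases according to how ``concentrated'' the triangles are, in the same spirit as the proof of Theorem~\ref{thm:ind-lower}: either there is a single vertex or edge meeting many triangles, in which case we extract many disjoint sensitive blocks around it; or the triangles are spread out, in which case a counting/Sunflower-type argument on the family of triangles yields many disjoint triangles, each of which is a sensitive block once its complement is made present.

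First I would set up the adversary. Let $\mathcal{F} = \{T \subseteq V : |T| = 3,\ G[T]\text{ is a triangle}\}$; by relevance $|\mathcal{F}| \ge n/3$. Case~1: some edge $uv$ is contained in at least $n^{1/3}$ triangles of $G$, i.e. $u$ and $v$ have at least $n^{1/3}$ common neighbours $w_1,\dots,w_t$. The adversary declares $u$ and $v$ present and all vertices outside $\{u,v,w_1,\dots,w_t\}$ absent; then each singleton block $\{w_i\}$ is sensitive (present $\Rightarrow$ triangle $uvw_i$; absent $\Rightarrow$ no triangle among the present vertices, since the only triangles available all use $u,v$ and some $w_i$). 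This gives $\ge n^{1/3}$ disjoint sensitive blocks. Case~2: every edge lies in fewer than $n^{1/3}$ triangles. Then I would greedily build a large collection of pairwise vertex-disjoint triangles: picking any triangle $T_1 \in \mathcal{F}$ and removing its three vertices kills at most $3\cdot n^{1/3}$ triangles from $\mathcal{F}$ (each vertex is in at most $(\text{deg} )\cdot n^{1/3}$... here better: each vertex $v$ lies in at most $\sum_{e \ni v}(\#\text{triangles on }e)$; I must bound this more carefully — see the obstacle below). The upshot I am aiming for is $\Omega(n^{1/3})$ disjoint triangles $T_1,\dots,T_r$; the adversary then, for this fixed collection, answers all vertices outside $\bigcup_j T_j$ absent, and declares present exactly the structure that makes each $T_j$ an independently toggleable block. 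Concretely, querying behaviour: the algorithm must query at least one vertex of each $T_j$, because flipping a whole $T_j$ from ``not all present'' to ``all present'' creates a triangle; disjointness gives $r = \Omega(n^{1/3})$ disjoint sensitive blocks, so $bs(\mathcal{T}) = \Omega(n^{1/3})$ on this input, matching the claimed bound.

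The main obstacle is the bookkeeping in Case~2: a naive greedy argument removes a triangle and wants to claim only $O(n^{1/3})$ triangles are destroyed, but a single high-degree vertex could be in many more than $n^{1/3}$ triangles even if each of its edges is in few. The clean fix is to interleave with a degree case: if some vertex $v$ has degree $\ge n^{2/3}$, then since $v$ is relevant it lies in a triangle $vab$; but more usefully, among $v$'s $\ge n^{2/3}$ neighbours the adversary can present $v$ and look for many disjoint sensitive pairs $\{a,b\}$ with $ab \in E$ — or simply fall back to presenting $v$ and one endpoint of each of many triangles through $v$. So the real case analysis is three-way: (i) a vertex of degree $\ge n^{2/3}$, (ii) otherwise, an edge in $\ge n^{1/3}$ triangles, (iii) otherwise, $\max$-degree $< n^{2/3}$ and every edge in $< n^{1/3}$ triangles, so every vertex is in $< n^{2/3}\cdot n^{1/3} = n$ triangles — hmm, that bound is too weak, and getting the exponents to balance at exactly $1/3$ is the delicate part. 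I would tune the two thresholds (degree threshold $n^{\alpha}$, edge-multiplicity threshold $n^{\beta}$) so that in the residual case the greedy packing of disjoint triangles, losing $O(n^{\alpha+\beta})$ per step out of $\ge n/3$ total, still runs for $\Omega(n^{1-\alpha-\beta})$ steps, and simultaneously cases (i) and (ii) each yield $\Omega(n^{\min(\alpha,\beta,\dots)})$ blocks; optimizing forces the common value $n^{1/3}$. Checking that these thresholds are mutually consistent and that each case genuinely produces \emph{disjoint} sensitive blocks is where the care goes; everything else is routine adversary argument as in Theorem~\ref{thm:ind-lower}, and since it is a sensitivity/block-sensitivity bound it transfers to the randomized setting.
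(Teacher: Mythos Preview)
Your three-case architecture is the same as the paper's, and your Case~(ii) (an edge in $\ge n^{1/3}$ triangles) is exactly its Case~2. The gap is precisely where you flag it: in Case~(i), vertex \emph{degree} is the wrong threshold. A vertex can have degree $n^{2/3}$ yet lie in a single triangle, so presenting it yields only one sensitive pair inside $N(v)$, and your tentative ``look for many disjoint pairs $\{a,b\}$ with $ab\in E$'' has no force. The paper thresholds instead on $d^1_{\max}$, the maximum number of triangles through a single vertex. If $d^1_{\max}\ge n^{2/3}$, present that vertex $v$; detecting a triangle through $v$ is then the emptiness problem on $G[N(v)]$, which has $d^1_{\max}$ edges, and the independence lower bound (Theorem~\ref{thm:ind-lower}, in the easy form $\Omega(\sqrt m)$ for a graph with $m$ edges) gives $\Omega(\sqrt{d^1_{\max}})=\Omega(n^{1/3})$.

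With Case~(i) corrected this way, your thresholds balance with no further tuning: if $d^1_{\max}<n^{2/3}$ then every vertex lies in fewer than $n^{2/3}$ triangles, so your greedy packing of vertex-disjoint triangles out of the $\ge n/3$ available runs for $\Omega(n^{1/3})$ rounds, and each packed triangle is a disjoint sensitive block at the all-absent input. (Case~(ii) is then not even needed.) The paper's residual case instead takes a minimal hitting set $S$ for triangles and presents $V\setminus S$: minimality makes every $s\in S$ individually sensitive, and the count $|S|\cdot d^1_{\max}\ge \#\{\text{triangles}\}\ge n/3$ forces $|S|=\Omega(n^{1/3})$. Either closing argument works; the one missing idea in your plan is to replace degree by triangles-per-vertex in Case~(i) and invoke the emptiness bound there.
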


\begin{center}
\begin{figure}[H]
\begin{center}
\begin{tikzpicture}[scale=0.6]

 \tikzstyle{nnode} = [circle, draw=black, fill=black, inner sep=0pt, minimum size = 2mm]
 \tikzstyle{txt} = [above, text width=2cm, text centered]

\fill [rounded corners, gray!20] (4,-3) rectangle (6,3);

\draw (0,0) ellipse (1cm and 2cm);
\node (v11) [nnode] at (0,0.8){};
\node (v12) [nnode] at (0,0.4){};
\node (v1) [nnode] at (0,-0.8){};

\draw (v11)--(v12);

\foreach \y in {2.5,2,1.5}{
\node () [nnode] at (5,\y){};
\draw (v11)--(5,\y);
\draw (v12)--(5,\y);
};

\foreach \y in {-1,-1.5,-2,-2.5}{
\node (v11) [nnode] at (5,\y){};
\draw (v1)--(5,\y);
};

\draw (5,-2.5)--(5,-2);
 \draw (5,-1.5)--(5,-1);

\node () [txt] at (0,-0.4){$\vdots$};
\node () [txt] at (5,0){$\vdots$};
\node () [txt] at (0,2){$S$};
\node () [txt] at (3.5,2.5){$d_{max}^2$};
\node () [txt] at (3.5,-2.5){$d_{max}^1$};

\end{tikzpicture}
\end{center}
\caption{Tight lower bound for triangle-freeness}\label{fig:gen-tri}
\end{figure}
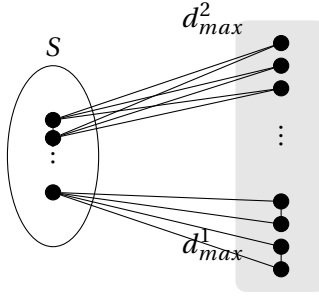
\end{center}

\begin{proof}
Let $G$ be a graph such that every vertex belongs to some triangle.
Let $S$ denote a minimal hitting-set for all triangles, i.e., every triangle must share
a vertex with $S$. Let $d^1_{max}$ denote the maximum number of triangles supported at
a vertex in $G$, i.e., maximum number of triangles whose common intersection is that vertex. Similarly let $d^2_{max}$ denote the maximum number of triangles supported at
an edge in $G$. We consider the following cases:

{\sf Case 1:} $d^1_{max} \geq \Omega(n^{2/3})$: The adversary can answer the common vertex to be present. Now consider the graph induced on the remaining endpoints of each the $d^1_{max}$ triangles. Note that this graph does not contain any isolated vertices. Also note that this graph contains an edge if and only if we have a triangle in the original graph with the common vertex. Hence from Theorem~\ref{thm:ind-lower} we
get a bound of $\sqrt{d^1_{max}} = \Omega(n^{1/3})$. 

{\sf Case 2:} $d^2_{max} \geq \Omega(n^{1/3})$: The adversary can answer both the endpoints of the common edge to be present, which would force the algorithm to query each of the remaining $d^2_{max}$ vertices. This gives $\Omega(n^{1/3})$ bound.

{\sf Case 3:} $|S| = \Omega(n^{1/3})$. 
The adversary answers all the vertices outside the hitting set to be present.
Now as soon any of the vertex in $S$ is present we have a triangle. This gives again
$\Omega(n^{1/3})$ bound.

Finally: at least one of the three cases above must happen, otherwise
there will be some vertex in $G$, which will not be contained in any triangle. 

\end{proof}

\subsection*{Containing path of length $t$, $P_t$ and cycle of size $t$, $C_t$}

\begin{theorem}
\footnote{Proof of this theorem will appear in the final version.} Let $\mathcal{P}_t$ denote the property of containing a path of length $t$, and let $\mathcal{C}_t$ denote the property of containing a cycle of size $t$. Then,
\[ \minc(\mathcal{P}_t) = \Omega(\sqrt n). \] and
\[ \minc(\mathcal{C}_t) = \Omega(n^{1/3}). \]

\label{thm:p-t}
\label{thm:c-t}
\end{theorem}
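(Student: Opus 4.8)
The plan is to adapt the case analysis used for triangle-freeness (Theorem~\ref{thm:triangle-free}) to the subgraph-containment setting, exploiting the path/cycle structure rather than the triangle structure. Throughout, let $G$ be a graph on $n$ vertices in which every vertex is relevant for $\mathcal{P}_t$ (resp.\ $\mathcal{C}_t$); relevance means every vertex lies on some copy of $P_t$ (resp.\ $C_t$) in $G$. We argue via block sensitivity: the adversary pins down a suitable set of vertices to be present, reducing the remaining decision to one where many disjoint blocks must each be probed, and then invokes the fact that for monotone functions $D(f)\ge R(f)\ge bs(f)$.

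For $\mathcal{P}_t$ (a path on $t$ vertices, $\Omega(\sqrt n)$ target), I would split on a ``high concentration vs.\ spread out'' dichotomy analogous to Case~1/Case~3 above. Concretely, consider the hypergraph whose hyperedges are the vertex sets of the copies of $P_t$ in $G$. If some $(t-1)$-subset of vertices is contained in $\Omega(\sqrt n)$ such copies --- i.e.\ there is a ``partial path'' of $t-1$ vertices that extends to $\Omega(\sqrt n)$ full copies of $P_t$ --- then the adversary fixes those $t-1$ vertices to be present; since $t-1<t$ they do not yet form $P_t$, and the graph on the $\Omega(\sqrt n)$ completing vertices has no isolated vertex relative to the property (each completing vertex individually turns a $P_{t-1}$ into $P_t$), so by the Independence lower bound (Theorem~\ref{thm:ind-lower}) applied to this induced instance we get $\Omega(\sqrt{\sqrt n\,})$\,---\,which is only $\Omega(n^{1/4})$, so this crude nesting is not enough. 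The correct route is instead to directly build $\Omega(\sqrt n)$ disjoint sensitive blocks: take a maximal collection of vertex-disjoint copies of $P_t$ in $G$; if it has size $\Omega(\sqrt n)$ we are done, since each copy is a sensitive block. Otherwise a small hitting set $S$ of $o(\sqrt n)$ vertices meets every copy of $P_t$; but then every vertex of $G$ lies on a copy of $P_t$ through some vertex of $S$, so some $s\in S$ lies on copies covering $\Omega(\sqrt n \cdot \text{(avg.\ size)})$... . The cleanest formulation: a standard sunflower-free / Kővári–Sós–Turán type counting shows that if no vertex-disjoint family of $\Omega(\sqrt n)$ copies exists, some vertex (or short subpath) is a ``hub'' for $\Omega(\sqrt n)$ copies that overlap only at the hub, and fixing the hub present yields $\Omega(\sqrt n)$ essentially disjoint completing blocks. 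The main obstacle is making this hub extraction clean so that the surviving blocks are genuinely disjoint; the sunflower lemma (Lemma~\ref{lemma:sunflower}) applied to the $t$-element copy-sets, exactly as in the proof of Theorem~\ref{thm:forbidden-H}, does this and gives a sunflower on $\Omega(n^{1/t})$ petals --- but we want $\sqrt n$, so for paths one should exploit that $P_t$ has only two vertices of degree $<2$ (the endpoints) to get a better packing: removing the $\le t-2$ core vertices still leaves the two endpoints free in each petal, and one can then find $\Omega(\sqrt n)$ disjoint edge-blocks among the endpoints, mirroring the wheel-matching trick in Theorem~\ref{thm:planar-acyclic}.

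For $\mathcal{C}_t$ (a cycle on $t$ vertices, $\Omega(n^{1/3})$ target), I would run the same machinery but track one more level, exactly paralleling the three-case split in Theorem~\ref{thm:triangle-free} (which is the $t=3$ instance): Case~1, some vertex lies on $\Omega(n^{2/3})$ copies of $C_t$ pairwise meeting only at that vertex --- fix it present, pass to the induced graph on the ``opposite arcs'', which is $C_{t-1}$-path-like and has no isolated structure, and recurse/apply the path bound to extract $\sqrt{n^{2/3}}=\Omega(n^{1/3})$; Case~2, some edge lies on $\Omega(n^{1/3})$ copies --- fix both endpoints present, forcing $\Omega(n^{1/3})$ probes; Case~3, a disjoint packing of $\Omega(n^{1/3})$ copies of $C_t$ exists, each a sensitive block. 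A pigeonhole argument shows at least one case holds, else some vertex lies on no $C_t$, contradicting relevance. I expect the bookkeeping in Case~1 --- verifying that after pinning the hub the residual instance really is an Independence-type instance on the claimed number of vertices, with the right sensitivity --- to be the main technical obstacle, just as it is the crux of the triangle-freeness proof; everything else is the now-standard adversary-plus-packing template. Note these bounds are not claimed tight (the table lists $O(\sqrt n)$ and $O(n^{1/3})$ upper bounds from Theorem~\ref{thm:upper-d}, matching only up to the gap between $n^{1/t}$ and these), so the proof only needs the lower side.
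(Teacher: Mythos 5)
First, a caveat: the paper itself contains no proof of this theorem --- the footnote explicitly defers it to ``the final version'' --- so there is nothing to compare your proposal against; I can only assess it on its own merits. On those merits there is a genuine gap. Your sketch correctly identifies the standard template (packing-versus-hitting-set dichotomy, adversary fixes a ``hub'', then count disjoint sensitive blocks), but the one step that actually carries the theorem --- extracting $\Omega(\sqrt n)$ (resp.\ $\Omega(n^{1/3})$) \emph{pairwise disjoint} sensitive blocks from a family of heavily overlapping copies of $P_t$ (resp.\ $C_t$) through a hub --- is never carried out, and every concrete tool you name falls short by your own admission: the nested independence argument yields only $n^{1/4}$, and the Sunflower Lemma applied to $t$-element copy-sets yields only $\Omega(n^{1/t})$ petals, which is exactly Theorem~\ref{thm:forbidden-H} and is strictly weaker than the claimed bounds for every $t\geq 3$ (paths) and $t\geq 4$ (cycles). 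The closing gesture --- ``exploit that $P_t$ has only two vertices of degree $<2$ \ldots\ find $\Omega(\sqrt n)$ disjoint edge-blocks among the endpoints'' --- is not an argument: the copies of $P_t$ through a hub need not form a sunflower at all, and even when they do, after deleting a core of size up to $t-2$ a petal need not retain any edge of its path, so there is no matching to extract and no analogue of the wheel trick of Theorem~\ref{thm:planar-acyclic}. A proof really does need new structural input about how copies of $P_t$ can overlap (e.g.\ that a vertex supporting many copies forces either a high-degree vertex or a long chain of forced intermediate vertices that the adversary can pin down), and none of that is supplied.

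The cycle half has two further problems. The trichotomy that makes Theorem~\ref{thm:triangle-free} work --- many triangles through a vertex, many through an edge, or a large hitting set --- is exhaustive for triangles because two triangles sharing two vertices share an edge; for $C_t$ with $t\geq 4$ two copies can share a non-adjacent pair, a sub-path on three vertices, two disjoint arcs, etc., so ``vertex-support or edge-support or large packing'' does not exhaust the overlap patterns and the pigeonhole step (``else some vertex lies on no $C_t$'') does not follow. Second, in your Case~1 the residual problem after pinning the hub $v$ is not ``does the remaining graph contain $P_{t-1}$'' but ``does it contain a $P_{t-1}$ whose \emph{both endpoints are neighbors of $v$}'', which is a different function; you cannot invoke the path bound as a black box, and you have not verified that the residual instance satisfies the relevance hypothesis that all the lower bounds in this paper require. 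As it stands the proposal establishes only the $\Omega(n^{1/t})$ bound already given by Theorem~\ref{thm:forbidden-H}.
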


\subsection*{Bounded degree}
\label{sec:bounded-degree}

\begin{theorem}
Let $\mathcal{B}_d$ denote the property of having maximum degree at most $d$, where $d$ is a constant. Then,
\[ \minc(\mathcal{B}_d) = \Omega(\sqrt n). \] 
\label{thm:bounded-degree}

\end{theorem}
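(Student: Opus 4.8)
The matching upper bound is already available: the only minimal forbidden subgraph of $\mathcal{B}_d$ is the star $K_{1,d+1}$, whose leaves have degree $1$, so $d_{\mathcal{B}_d}=1$ and Theorem~\ref{thm:upper} yields $\minc(\mathcal{B}_d)=O(\sqrt n)$; what remains is the lower bound. The plan is to fix an arbitrary base graph $G$ on $n$ vertices that is relevant for $\mathcal{B}_d$, choose an inclusion-minimal vertex set $H$ whose deletion brings the maximum degree down to $\le d$ (equivalently, a minimal hitting set for the forbidden stars), and set $h=|H|$ and $\Delta=$ the maximum degree of $G$. I would then establish two facts: (i) $cost(\mathcal{B}_d,G)\ge\max(h,\ \Delta-d)$, by two short adversary/sensitivity arguments, and (ii) relevance of $G$ forces $n\le(d+2)\,h\Delta$. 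Combining (i) and (ii) gives $cost(\mathcal{B}_d,G)=\Omega(\sqrt n)$, and since $G$ is an arbitrary relevant base graph, $\minc(\mathcal{B}_d)=\Omega(\sqrt n)$.

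For (i), I would first show $cost\ge h$ using the input $S_0=V(G)\setminus H$: here $G[S_0]=G-H$ has maximum degree $\le d$, so the answer is YES, whereas for each $v\in H$ minimality of $H$ makes $H\setminus\{v\}$ fail to be a deletion set, so $G[S_0\cup\{v\}]$ has a vertex of degree $>d$ and the answer flips; thus all $h$ bits of $H$ are sensitive on $S_0$. Second, $cost\ge\Delta-d$: relevance forces $\Delta\ge d+1$, so I can pick a vertex $v$ of degree $\Delta$, choose any $d$ of its neighbours, and let $S_1$ consist of $v$ together with those $d$ neighbours; on $d+1$ vertices no degree can exceed $d$, so $G[S_1]$ is a YES instance, while adding any of the remaining $\Delta-d$ neighbours of $v$ raises $\deg(v)$ to $d+1$ and flips the output, so those $\Delta-d$ bits are sensitive. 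Since the function is monotone, both are sensitivity bounds and carry over to the randomized model.

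For (ii), the point is that \emph{after} deleting $H$ the graph has bounded degree. Let $A=\{v:\deg_G(v)\ge d+1\}$. A vertex is relevant for $\mathcal{B}_d$ exactly when it lies in $A$ or is adjacent to $A$, so relevance gives $V(G)=A\cup N(A)$. Any $v\in A\setminus H$ has at most $d$ neighbours in $G-H$ but at least $d+1$ in $G$, hence a neighbour in $H$; thus $A\subseteq H\cup N(H)$ and therefore $V(G)\subseteq H\cup N(H)\cup N^2(H)$. Now $|N(H)|\le h\Delta$, and because $G-H$ has maximum degree $\le d$, every vertex of $N^2(H)$ not already in $H\cup N(H)$ is reached from $N(H)$ along an edge of $G-H$, so $|N^2(H)\setminus(H\cup N(H))|\le d\,|N(H)|$. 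Summing, $n\le h+h\Delta+dh\Delta\le(d+2)h\Delta$. To conclude: if $\Delta\le 2d$ then $n\le (d+2)(2d)\,h$, so $h=\Omega(n)$ and $cost\ge h=\Omega(\sqrt n)$; otherwise $\Delta-d\ge\Delta/2$, so $cost\ge\max(h,\Delta/2)\ge\sqrt{h\Delta/2}=\Omega(\sqrt n)$.

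\textbf{Where the difficulty lies.} Steps (i) and the qualitative part of (ii) are routine; the only delicate point is the quantitative structural bound $n=O_d(h\Delta)$. A naive second neighbourhood would cost a factor $\Delta^2$, which is useless in the regime $\Delta\approx\sqrt n$ with $h$ small — and this is precisely the regime not covered by either sensitivity bound alone. What rescues the argument is that $G-H$ has \emph{constant} maximum degree (this is special to $\mathcal{B}_d$, and ultimately reflects $d_{\mathcal{B}_d}=1$), so iterating the neighbourhood a second time multiplies by only $d$ rather than $\Delta$, making the $h$-versus-$\Delta$ trade-off sharp enough to deliver $\sqrt n$.
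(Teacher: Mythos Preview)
Your proof is correct and follows essentially the same approach as the paper: both use a minimal hitting set $H$ (the paper's $S$) for the forbidden stars to get one sensitivity bound, the maximum-degree vertex to get another, and then a two-step neighbourhood count around $H$---exploiting that $G-H$ has bounded degree---to show $n=O_d(|H|\cdot\Delta)$, forcing $\max(|H|,\Delta)=\Omega(\sqrt n)$. Your write-up is in fact more explicit about the relevance characterization $V(G)=A\cup N(A)$ and the structural inequality than the paper's somewhat terse ``Finally'' paragraph, but the underlying argument is the same.
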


\begin{center}
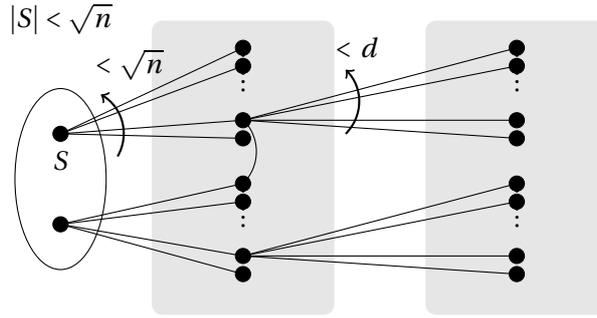
\begin{figure}[H]
 \begin{center}
\begin{tikzpicture}[scale=0.6]

 \tikzstyle{nnode} = [circle, draw=black, fill=black, inner sep=0pt, minimum size = 2mm]
 \tikzstyle{txt} = [above, text width=2cm, text centered]

\fill [rounded corners, gray!20] (2,-3) rectangle (6,3.5);
\fill [rounded corners, gray!20] (8,-3) rectangle (12,3.5);

\draw (0,0) ellipse (1cm and 2cm);
\node (v11) [nnode] at (0,1){};
\node (v12) [nnode] at (0,-1){};
\node (s) [txt] at (0,0){$S$};
\node (scard) [txt] at (0,3){$|S| < \sqrt{n}$};

\foreach \y in {2.9,2.5,1.3, 0.9}{
\node () [nnode] at (4,\y){};
\draw (0,1)--(4,\y);
};

\node () [txt] at (4,2.1){$\vdots$};
\node () [txt] at (4,1.7){$\vdots$};

\foreach \y in {-0.1,-0.5,-1.7,-2.1}{
\node () [nnode] at (4,\y){};
\draw (0,-1)--(4,\y);
};

\node () [txt] at (4,-0.9){$\vdots$};
\node () [txt] at (4,-1.3){$\vdots$};

\foreach \y in {2.9,2.5,1.3, 0.9}{
\node () [nnode] at (10,\y){};
\draw (4,1.3)--(10,\y);
};

\node () [txt] at (10,2.1){$\vdots$};
\node () [txt] at (10,1.7){$\vdots$};

\foreach \y in {-0.1,-0.5,-1.7,-2.1}{
\node () [nnode] at (10,\y){};
\draw (4,-1.7)--(10,\y);
};

\node () [txt] at (10,-0.9){$\vdots$};
\node () [txt] at (10,-1.3){$\vdots$};

\draw[thick,->] (1,0.5) +(0:0.25cm) arc [radius=1cm,start angle=-30,end angle=60];
\node () [txt] at (1.5,2) {$< \sqrt{n}$};

\draw[thick,->] (6,1) +(0:0.25cm) arc [radius=1cm,start angle=-45,end angle=45];
\node () [txt] at (6.5,2.5) {$< d$};

\draw (4,-0.1) arc [radius=1cm,start angle=-45,end angle=45];
\end{tikzpicture}
\end{center}
\caption{Tight lower bound for bounded degree}\label{fig:star}
\end{figure}
\end{center}

\begin{proof} 
Let $G$ be a graph such that every vertex belongs to some $d$-star ($d$ vertices incident on a single vertex). Let $S$ denote the hitting set for all stars of size $d$ in $G$. Let $d_{max}$ denote the maximum degree of $G$.

{\sf Case 1:} $d_{max} \geq \sqrt n / 10d$: The adversary answers the maximum degree vertex to be present. Now one must query all $\Omega(d_{max})$ of its neighbors.

{\sf Case 2:} $|S| \geq \sqrt n / 10 d$: The adversary answers all vertices outside the hitting set to be present. One must query the entire hitting set.

Finally: we claim that one of the above two cases must happen. Otherwise we have 
at most $n/100d^2$ neighbors of the hitting set, each of them can have at most $d-1$ other neighbours. This leaves some vertex $t$ not in $d$-star.
\end{proof}

In fact the proof above generalizes to {\em local properties}
 of bounded degree graphs. 
 For a property $\mathcal{P}$ we define $\mathcal{P}_L$ as follows:
 \begin{definition}[Local Property]
 $G$ satisfies $\mathcal{P}_L$ if and only if for every vertex of $G$ the graph induced on its neighbors satisfies $\mathcal{P}$.
\end{definition}
For instance: bipartite graphs are locally acyclic.
It turns out that $\mathcal{P}_L$ is hereditary for any hereditary $\mathcal{P}$.
Moreover, every graph in the forbidden family $\mathcal{F}_{\mathcal{P}_L}$ has a universal vertex, i.e., 
a vertex adjacent to all other vertices.

\begin{theorem}
For any hereditary $\mathcal{P}$ 
\[\minc(\mathcal{P}_L \wedge \mathcal{B}_d) = \Omega(\sqrt n). \]
\label{thm:local}
\end{theorem}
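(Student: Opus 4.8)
\textbf{Proof plan for Theorem~\ref{thm:local}.}

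The plan is to mimic the structure of the proof of Theorem~\ref{thm:bounded-degree} (bounded degree), using the structural fact noted just before the statement: every minimal forbidden graph in $\mathcal{F}_{\mathcal{P}_L}$ has a universal vertex, and membership in $\mathcal{B}_d$ caps all degrees at $d$. Let $G$ be a graph on $n$ vertices, every vertex of which is relevant for $\mathcal{P}_L \wedge \mathcal{B}_d$. First I would observe that since we are conjoining with $\mathcal{B}_d$, every minimal forbidden configuration that actually occurs in $G$ lives on at most $d+2$ vertices: it is either a $(d+1)$-star (the obstruction to $\mathcal{B}_d$ itself) or a forbidden graph for $\mathcal{P}_L$ that fits inside the closed neighborhood of a vertex of degree $\le d$, hence has at most $d+1$ vertices. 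Because $d$ is a constant, every relevant vertex lies in some forbidden configuration of constant size $k_0 = k_0(\mathcal{P},d)$.

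Next I would set up the case analysis exactly as in Theorem~\ref{thm:bounded-degree}. Let $S$ be a minimal hitting set meeting every occurrence in $G$ of a forbidden configuration (a $(d+1)$-star or a minimal $\mathcal{F}_{\mathcal{P}_L}$-graph), and let $d_{\max}$ be the maximum degree of $G$. \emph{Case 1:} if $d_{\max} \ge \sqrt n / c$ for a suitable constant $c = c(d)$, the adversary declares the maximum-degree vertex present; since that vertex is relevant and has many neighbors, the algorithm is forced to query $\Omega(d_{\max}) = \Omega(\sqrt n)$ of them (any unqueried neighbor is a sensitive bit: toggling enough neighbors either creates or destroys a high-degree/local obstruction). \emph{Case 2:} if $|S| \ge \sqrt n / c$, the adversary declares every vertex outside $S$ present; by minimality of $S$, each $v \in S$ lies in a forbidden configuration all of whose other (constantly many) vertices are outside $S$, hence already present, so $v$ is a sensitive bit and the whole of $S$ must be queried, giving $\Omega(\sqrt n)$.

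Finally I would argue the cases are exhaustive: if neither holds, then $|S| < \sqrt n/c$ and every vertex has degree $< \sqrt n/c$, so the set $N$ of vertices at distance $\le k_0$ from $S$ has size at most $|S| \cdot (\sqrt n/c)^{k_0-1} = O(n^{(k_0-1)/2 + 1/2}/c^{k_0})$\,--- wait, that does not immediately beat $n$ for $k_0 \ge 3$, so I would instead follow the bounded-degree argument more literally: count only the \emph{immediate} neighborhood structure. Since every forbidden configuration that $S$ hits has a vertex in $S$ and (by the universal-vertex property and the star case) all its vertices within distance $2$ of that $S$-vertex, every relevant vertex must lie within distance $2$ of $S$; with $|S| < \sqrt n/c$ and each degree $< \sqrt n /c$ this gives at most $|S|(1 + \sqrt n/c + (\sqrt n/c)^2) = O(n^{3/2}/c^3)$ vertices covered, and choosing $c$ large enough forces this below $n$, contradicting that all $n$ vertices are relevant. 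The main obstacle I anticipate is precisely this counting step: pinning down the correct radius (it should be $2$, using that forbidden $\mathcal{P}_L$-graphs have a universal vertex so all their vertices are neighbors of a common vertex, which is itself a neighbor of the $S$-vertex) and verifying that the adversary's frozen inputs in Cases~1 and~2 genuinely leave the queried vertices sensitive for the conjoined property $\mathcal{P}_L \wedge \mathcal{B}_d$ rather than just for one conjunct. Matching the $O(\sqrt n)$ upper bound from Corollary~\ref{cor:upper-d} then shows the bound is tight.
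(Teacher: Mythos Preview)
Your two-case skeleton matches the paper's proof exactly, and your Cases~1 and~2 are fine (modulo the small slip of saying ``sensitive bit'' where you mean ``sensitive block'' in Case~1). The gap is precisely where you suspected it: the exhaustiveness step.

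Your final computation bounds the number of vertices within distance $2$ of $S$ by $|S|\bigl(1 + d_{\max} + d_{\max}^2\bigr) = O(n^{3/2}/c^3)$ and then asserts that ``choosing $c$ large enough forces this below $n$.'' That is false: $n^{3/2}/c^3 < n$ would require $c > n^{1/6}$, not a constant. So as written the exhaustiveness argument does not close, and no choice of the constant $c$ repairs it.

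The missing idea --- which is exactly what the paper uses --- is that you are double-counting the second hop with $d_{\max}$ when you should be using the constant $d$. Since $S$ is a hitting set for all forbidden configurations, in particular it hits every $(d{+}1)$-star in $G$; hence $G\setminus S$ has maximum degree at most $d$. Now any distance-$2$ vertex is a neighbour (inside $G\setminus S$) of some $u\in N(S)\setminus S$, and such a $u$ has at most $d$ neighbours in $G\setminus S$. Therefore
\[
|V(G)| \;\le\; |S| \;+\; |S|\cdot d_{\max} \;+\; |S|\cdot d_{\max}\cdot d
\;=\; |S|\,(1+d_{\max})(1+d),
\]
which with $|S|,\,d_{\max} < \sqrt{n}/(10d)$ is $O(n/d) < n$, the desired contradiction. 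In other words, the second hop has fan-out bounded by the fixed constant $d$, not by $d_{\max}$; this is what the paper means by ``each of which can have at most $d-1$ other neighbours.'' Once you plug this in, your argument goes through unchanged.
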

\begin{proof}
Let $G$ be a graph such that every vertex belongs to some $d$-star ($d$ vertices incident on a single vertex)
or some $H \in \mathcal{F}_{\mathcal{P}_L}$.
Let $S$ denote the hitting set for $\mathcal{F}_{\mathcal{P}_L} \cup \{S_d\}$ in $G$. Let $d_{max}$ denote the maximum degree of $G$.

{\sf Case 1:} $d_{max} \geq \sqrt n / 10d$: The adversary answers the maximum degree vertex to be present. Now one must query all $\Omega(d_{max})$ of its neighbors (since
there are $d_{max}/d$ disjoint blocks).

{\sf Case 2:} $|S| \geq \sqrt n / 10 d$: The adversary answers all vertices outside the hitting set to be present. One must query the entire hitting set.

Finally: we claim that one of the above two cases must happen.
Since every vertex belongs to some $d$-star
or some $H \in \mathcal{F}_{\mathcal{P}_L}$ and every $H \in \mathcal{F}_{\mathcal{P}_L}$ has a universal vertex, we have that
 every vertex in $G$ is reachable
to some vertex in $S$ by a path of length at most $2$.
 Otherwise we have 
at most $n/100d^2$ neighbors of the hitting set, each of which can have at most $d-1$ other neighbours. This leaves some vertex $t$ not in $\mathcal{F}_{\mathcal{P}_L} \cup \{S_d\}$. 
\end{proof}

\section{Results on restricted graph classes}\label{appx:restrict}

\subsection*{Independence/Emptiness in planar graph}

\begin{theorem}
Let $\mathcal{G}_{\mathcal{P}}$ be a family of planar graphs on $n$ vertices and $\mathcal{I}$ denote the property of being independent. Then,
\[ \mathcal{G}_{\mathcal{P}}\text{-}\minc(\mathcal{I}) = \Omega(n). \]
\end{theorem}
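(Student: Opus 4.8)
The plan is to read this off immediately from the chromatic-number inequality already proved in Theorem~\ref{thm:n/chi}, namely that for any class $\mathcal{G}$ of base graphs,
\[\mathcal{G}\text{-}\minc(\mathcal{I}) \geq n/\chi,\]
where $\chi$ is an upper bound on the chromatic number of every member of $\mathcal{G}$. So the only thing I need is a constant bound on the chromatic number of planar graphs. By the Four Color Theorem every planar graph $G$ on $n$ vertices satisfies $\chi(G)\le 4$, so the class of planar graphs qualifies with $\chi = 4$, and Theorem~\ref{thm:n/chi} gives $\mathcal{G}_{\mathcal{P}}\text{-}\minc(\mathcal{I}) \geq n/4 = \Omega(n)$.

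Unpacking this a little for the write-up: fix any planar $G$ on $n$ vertices that is relevant for $\mathcal{I}$, i.e.\ has no isolated vertex — this is exactly the hypothesis under which Theorem~\ref{thm:n/chi} is argued. Properly $4$-colour $G$, let $C_{\max}$ be the largest colour class (so $|C_{\max}|\ge n/4$), and run the same two-case adversary as in that proof: in Case~1 ($|C_{\max}|$ not too large) the adversary declares $C_{\max}$ present, and since $C_{\max}$ is a maximal colour class every outside vertex has a neighbour in it, forcing a query on each of the $\ge n/4$ outside vertices; in Case~2 ($|C_{\max}|$ large) one of the other $\le 3$ colour classes is incident to at least a $1/3$-fraction of the $\ge (1-\tfrac14)n$ edges leaving $C_{\max}$, the adversary declares that class present, and each of the $\ge n/4$ endpoints of those edges inside $C_{\max}$ must be queried. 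Either way $cost(\mathcal{I},G)\ge n/4$, and since this holds for every relevant planar $G$, the $\minc$ over the class is $\Omega(n)$. Combined with the trivial upper bound $cost(\mathcal{I},G)\le n$ (query everything), the complexity is in fact $\Theta(n)$, so restricting the base graph to be planar already kills any asymptotic speed-up for emptiness, in contrast with the $\Theta(\sqrt n)$ obtained with no restriction.

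There is essentially no obstacle to overcome here; the proof is a corollary. The only point requiring a word of care is that Theorem~\ref{thm:n/chi} uses the \emph{maximum} chromatic number over the class, whereas a particular planar graph might be $2$- or $3$-chromatic — this is harmless, since a $\chi(G)$-chromatic graph is a fortiori properly $4$-colourable, so the argument runs uniformly with $\chi=4$. I would also note, perhaps in a footnote, that the Four Color Theorem is not even needed: planar graphs are $5$-degenerate, hence $6$-colourable by a one-line greedy argument, and $n/6=\Omega(n)$ already suffices, making the proof self-contained.
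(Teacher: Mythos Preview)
Your proposal is correct and takes exactly the same approach as the paper: the paper's proof is the single sentence ``As planar graphs are $4$ colorable using Theorem~\ref{thm-n-chi} we can directly conclude this theorem.'' Your additional remark that $6$-colourability via degeneracy already suffices (so the Four Color Theorem is not actually needed) is a nice touch the paper does not include.
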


\begin{proof}
As planar graphs are $4$ colorable using Theorem~\ref{thm-n-chi} we can directly conclude this theorem.
\end{proof}

\subsection*{Triangle-freeness in planar graphs}
A graph $G$ is called {\em inherently sparse} if every subgraph of $G$ on $k$
nodes contains $O(k)$ edges.

\begin{reptheorem}{lemma-sparse-tri}
(Restated) Let $\mathcal{G}_s$ be a family of inherently sparse graphs on $n$ vertices and $\mathcal{T}$ denote the property of being triangle free. Then

\[\mathcal{G}_s\text{-}\minc(\mathcal{T}) = \Omega(\sqrt{n}).\]

\end{reptheorem}
\begin{proof}
Let $G =(V,E)\in \mathcal{G}_s$ be a graph on $n$ vertices such that every vertex in $G$ is part of at least one triangle in $G$. 
Let $S \subseteq V$ be a minimal hitting set for triangles in $G$. 
We consider following two cases:

  {\sf Case 1:} $|S| \geq \sqrt{n}$
  
  The adversary answers all the vertices outside $S$ to be present.  
  Hence one has to check all vertices in $S$.
  
    {\sf Case 2:} $|S| < \sqrt{n}$
    
    Since $G$ is inherently sparse, there can be at most $O(|S|) = O(\sqrt{n})$ edges within $S$. Moreover each triangle in $G$ must share either one vertex in $S$ or one edge inside $S$ (it could even be that the whole triangle is inside $S$).
Note that there are $\Omega(n)$ vertices outside $S$ and each of them must belong to some triangle.
 This implies that either there is at least a vertex $v \in S$ or at least an edge $ \{u,v\} \in {S \choose 2}$ s.t. $v$ or $\{u,v\}$ supports at least $\Omega(\sqrt{n})$ triangles outside $S$. Otherwise all triangles in $G$ are not covered by $S$. We consider the following two cases:

    {\sf Case 2a:} an edge $\{u,v\} \in {S \choose 2}$ supports $\Omega(\sqrt{n})$ triangles in $V-S$.

\begin{center}
\begin{figure}[H]
 \begin{center}
\begin{tikzpicture}[scale=0.6]

 \tikzstyle{nnode} = [circle, draw=black, fill=black, inner sep=0pt, minimum size = 2mm]
 \tikzstyle{txt} = [above, text width=2cm, text centered]

\fill [rounded corners, gray!20] (4,-3) rectangle (6,3);

\draw (0,0) ellipse (1cm and 2cm);
\node (v11) [nnode] at (0,0.25){};
\node (v12) [nnode] at (0,-0.25){};
\draw (v11)--(v12);

\foreach \y in {2.5,2,1.5,-1.5,-2,-2.5}{
\node (v11) [nnode] at (5,\y){};
\draw (0,0.25)--(5,\y);
\draw (0,-0.25)--(5,\y);
};

\node (v11) [txt] at (5,0){$\vdots$};
\node (v11) [txt] at (0,2){$S$};
\node (v11) [txt] at (7,0){$\Omega(\sqrt{n})$};

\end{tikzpicture}
\end{center}
\caption{Case 2a}\label{fig-tri-free-2}
\end{figure}
\end{center}
        The adversary makes the edge $\{u,v\}$ present. One then has to query $\Omega(\sqrt{n})$ end points of all $\Omega(\sqrt{n})$ triangles (See Figure \ref{fig-tri-free-2}).

    {\sf Case 2b:} a vertex $v \in S$ supports $\Omega(\sqrt{n})$ triangles.

\begin{center}
\begin{figure}[H]
 \begin{center}
\begin{tikzpicture}[scale=0.6]

 \tikzstyle{nnode} = [circle, draw=black, fill=black, inner sep=0pt, minimum size = 2mm]
 \tikzstyle{txt} = [above, text width=2cm, text centered]

\fill [rounded corners, gray!20] (4,-3) rectangle (6,3);

\draw (0,0) ellipse (1cm and 2cm);
\node (v11) [nnode] at (0,0){};

\foreach \y in {2.5,2,1.5,1, -1,-1.5,-2,-2.5}{
\node (v11) [nnode] at (5,\y){};
\draw (0,0)--(5,\y);
};

\draw (5,2.5)--(5,2);
\draw (5,1.5)--(5,1);
\draw (5,-1)--(5,-1.5);
\draw (5,-2)--(5,-2.5);

\node (v11) [txt] at (5,0){$\vdots$};
\node (v11) [nnode] at (5.5,0.7){};
\draw (5,1.5)--(5.5,0.7);
\draw (5.5,0.7)--(0,0);

\node (v11) [txt] at (0,2){$S$};
\node (v11) [txt] at (7,0){$\Omega(\sqrt{n})$};

\end{tikzpicture}
\end{center}
\caption{Case 2b}\label{fig-tri-free-1}
\end{figure}
\end{center}

Adversary makes the vertex $v$ present. Then the problem reduces to finding an edge in the graph induced on neighbors of $v$. This graph has $\Omega(\sqrt{n})$ non-isolated vertices.  (See Figure \ref{fig-tri-free-1}). As the $G$ is inherently sparse 
the chromatic number of $G$ is constant.\footnote{Pick a smallest degree vertex. Recursively color the rest of the graph with constant colors. Use a different color for the picked vertex.} Now we use Theorem \ref{thm-n-chi} to obtain the $\Omega(\sqrt n)$ bound.
\end{proof}

As a consequence of Lemma~\ref{lemma-sparse-tri} we get the following:

\begin{corollary}\label{cor-pla-tri}
Let 
$\mathcal{G}_{\mathcal{P}}$
be a family of planar graphs on $n$ vertices and ${\mathcal{T}}$ denote the property of being triangle-free. Then,

\[ 
\mathcal{G}_{\mathcal{P}}\text{-}\minc({\mathcal{T}}) = \Omega(\sqrt{n}).\]
\end{corollary}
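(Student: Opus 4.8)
The plan is to derive Corollary~\ref{cor-pla-tri} as an immediate consequence of Lemma~\ref{lemma-sparse-tri} (restated just above) together with the classical fact that planar graphs are inherently sparse. First I would recall that a simple planar graph on $k$ vertices has at most $3k-6$ edges, and that every subgraph of a planar graph is again planar; hence every subgraph on $k$ nodes has $O(k)$ edges, which is exactly the definition of inherently sparse given in the paper. Thus the class $\mathcal{G}_{\mathcal{P}}$ of planar graphs on $n$ vertices is contained in the class $\mathcal{G}_s$ of inherently sparse graphs on $n$ vertices.

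Next I would invoke monotonicity of $\mathcal{G}\text{-}\minc$ under inclusion of graph classes: if $\mathcal{G}_1 \subseteq \mathcal{G}_2$, then taking the minimum of $cost(\mathcal{T}, G)$ over a smaller class can only increase (or keep) the value, so $\mathcal{G}_1\text{-}\minc(\mathcal{T}) \geq \mathcal{G}_2\text{-}\minc(\mathcal{T})$. Applying this with $\mathcal{G}_1 = \mathcal{G}_{\mathcal{P}}$ and $\mathcal{G}_2 = \mathcal{G}_s$ gives $\mathcal{G}_{\mathcal{P}}\text{-}\minc(\mathcal{T}) \geq \mathcal{G}_s\text{-}\minc(\mathcal{T})$. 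Then by Lemma~\ref{lemma-sparse-tri} the right-hand side is $\Omega(\sqrt n)$, yielding $\mathcal{G}_{\mathcal{P}}\text{-}\minc(\mathcal{T}) = \Omega(\sqrt n)$ as claimed.

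One small technical point I would want to check is that the relevance condition is not an obstacle to the class inclusion argument: the $\min$ in both definitions is taken only over graphs that are relevant for $\mathcal{T}$ (i.e., every vertex lies in some triangle). Since a relevant planar graph is in particular a relevant inherently-sparse graph, the set of graphs over which we minimize for $\mathcal{G}_{\mathcal{P}}$ is genuinely a subset of that for $\mathcal{G}_s$, so the inequality goes through without change. There is essentially no hard part here — the entire content lives in Lemma~\ref{lemma-sparse-tri}, whose proof (the sparse hitting-set case analysis with the $n/\chi$ bound from Theorem~\ref{thm-n-chi}) is the real work; the corollary is a one-line specialization once one observes planarity $\Rightarrow$ inherent sparseness.
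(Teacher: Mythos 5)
Your proposal is correct and matches the paper's own derivation: the paper obtains Corollary~\ref{cor-pla-tri} exactly as a specialization of Lemma~\ref{lemma-sparse-tri}, using the fact that every subgraph of a planar graph on $k$ nodes has at most $3k-6 = O(k)$ edges, i.e., planar graphs are inherently sparse. Your extra check that the relevance condition is compatible with the class-inclusion argument is a harmless (and welcome) elaboration of the same one-line argument.
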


\subsection*{Acyclicity in planar graphs}
\begin{theorem}\label{planar:acyclic-gen}
Let $\mathcal{G}_{\mathcal{P}}$ be a family of planar graphs on $n$ vertices and let $\mathcal{C}$ denote the property of being acyclic.Then,
\[ \mathcal{G}_{\mathcal{P}}\text{-}\minc(\mathcal{C}) = \Omega(n^{1/16}). \]

\end{theorem}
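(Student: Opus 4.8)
The plan is to mimic the dichotomy already used in the $3$-connected case (Theorem~\ref{thm:planar-acyclic}), but to first reduce an arbitrary planar relevant base graph $G$ to a dense, highly-connected planar minor on a polynomial number of vertices, so that the earlier machinery applies. First I would take $G\in\mathcal{G}_{\mathcal{P}}$ on $n$ vertices in which every vertex is relevant for acyclicity, i.e. every vertex lies on a cycle; in particular $G$ has minimum degree at least $2$ and at least $n$ edges, and as a planar graph it has at most $3n-6$ edges. The key structural step is a cleanup: repeatedly delete degree-$\le 1$ vertices (none, by relevance) and suppress degree-$2$ vertices, contracting one of their incident edges; this preserves planarity and the property that every surviving vertex is on a cycle, and it preserves the \emph{cost} up to the bookkeeping that a suppressed path of length $\ell$ only forces a single query among its $\ell$ internal vertices (one sensitive block, queried by the adversary answering the path present). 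After this, either many vertices got suppressed — in which case we have collected $\Omega(n)$ long disjoint paths, each an un-queryable sensitive block of length $\ge 2$, giving an $\Omega(n)$ bound outright — or the cleanup barely reduced the graph, and we are left with a planar graph $G'$ on $n' = \Omega(n)$ vertices of minimum degree $\ge 3$.

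Given $G'$ with $\delta(G')\ge 3$, I would then run the degree dichotomy exactly as in Theorem~\ref{thm:planar-acyclic}. If $d_{\max}(G')\ge \sqrt{n'}$, I would like to invoke ``removing a vertex from a $3$-connected planar graph leaves a facial cycle around it'', but $G'$ need only be $3$-\emph{edge-connected}-ish, not $3$-connected, so the cleaner route is: a high-degree vertex $v$ together with its $\ge\sqrt{n'}$ neighbours, each of which (having degree $\ge 3$) has further neighbours; using planarity (sparsity: the subgraph on $N(v)\cup\{v\}$ plus a bounded neighbourhood has $O(\sqrt{n'})$ edges) one extracts $\Omega(\sqrt{n'})$ vertex-disjoint cycles each of length $O(1)$ through or near $v$, or more simply a matching of size $\Omega(\sqrt{n'})$ among $N(v)$ using edges of $G'$ among them is false in general, so instead I would use the wheel-type argument: answer $v$ present and observe that each neighbour $u_i$ of $v$ lies on a cycle not using $v$ or lies on a triangle with $v$; in either case one gets $\Omega(\sqrt{n'})$ disjoint sensitive blocks of bounded size. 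If instead $d_{\max}(G')<\sqrt{n'}$, then since $G'$ has $\ge 3n'/2$ edges and is planar, $m'-n' \ge n'/2$, and Lemma~\ref{lem:d-max} gives $\mathrm{cost}(\mathcal{C},G')\ge (m'-n')/d_{\max} \ge \Omega(\sqrt{n'})=\Omega(\sqrt n)$.

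The weak spot — and the reason the stated bound is only $\Omega(n^{1/16})$ rather than $\Omega(\sqrt n)$ — is the high-degree case: without genuine $3$-connectivity one cannot guarantee a clean facial cycle around $v$, and extracting enough \emph{vertex-disjoint} short cycles from the neighbourhood of a high-degree vertex in a merely min-degree-$3$ planar graph costs several nested applications of sparsity (each application roughly square-rooting the gain), which is where the exponent degrades. So the honest plan is: (i) the cleanup reduction above to get $\delta\ge 3$ on $\Omega(n)$ vertices, losing nothing; (ii) in the low-max-degree regime, apply Lemma~\ref{lem:d-max} to get $\Omega(\sqrt n)$; (iii) in the high-max-degree regime, iterate ``a sparse graph with $\Omega(N)$ non-isolated vertices has a vertex supporting $\Omega(\sqrt N)$ disjoint short structures, answer it present, recurse'' a constant number of times until either the surviving piece is small enough that low-degree analysis applies or we have amassed enough disjoint sensitive blocks; tracking the constant number of square-roots yields the $n^{1/16}$ exponent. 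The main obstacle, then, is controlling the loss in step (iii): the challenge is to show that only a bounded number of recursive ``pick a heavy vertex, make it present, restrict to its neighbourhood'' steps are needed before one of the terminating conditions fires, and that each step costs at most a square-root factor, so that the product is $n^{1/2^{O(1)}} = \Omega(n^{1/16})$.
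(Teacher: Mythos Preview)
Your approach diverges substantially from the paper's, and as written it has genuine gaps.

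\textbf{What the paper does.} The paper does not attempt to turn $G$ into a min-degree-$3$ graph by suppressing degree-$2$ vertices. Instead it works with the \emph{triconnected decomposition}: it forms the incidence graph $G_I$ whose vertices are the triconnected components of $G$ and whose edges record shared separating pairs (or cut-vertices). If some triconnected component has $\ge \sqrt n$ vertices, Theorem~\ref{thm:planar-acyclic} already gives $\Omega(n^{1/4})$. Otherwise $|V(G_I)|\ge \sqrt n$, and a simple degree/independent-set dichotomy on $G_I$ (Claim~\ref{clm:exh}) applies: a large independent set of triconnected components yields $\Omega(n^{1/4})$ directly by fixing one facial cycle in each (Proposition~\ref{prop:indset}); a high-degree component $t$ forces a second pass on the incidence graph $G_S$ of the separating sets of $t$, where the same degree/independent-set dichotomy is repeated once more (Proposition~\ref{prop:degree}). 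Two nested square-roots on top of the initial $\sqrt n$ give $n^{1/16}$. The exponent loss is thus accounted for by a concrete, bounded-depth structural recursion, not an informal iteration.

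\textbf{Gaps in your proposal.} First, your suppression dichotomy is not correct: if $\Omega(n)$ degree-$2$ vertices are suppressed, you do \emph{not} obtain $\Omega(n)$ disjoint paths---the number of maximal degree-$2$ paths equals the number of edges of the suppressed graph $G'$, which is $O(n')$, and $n'$ may be tiny. A single long cycle, or $\sqrt n$ cycles of length $\sqrt n$, already breaks the ``$\Omega(n)$ bound outright'' claim. Second, and more seriously, even granting $\delta(G')\ge 3$, minimum degree $3$ is far from $3$-connectivity, so the facial-cycle/wheel argument does not transfer; you acknowledge this, but the substitute you propose in step~(iii)---iterated ``pick a heavy vertex, answer it present, restrict to its neighbourhood''---is exactly where the work lies, and you leave it as an obstacle rather than an argument. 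In a planar graph there is no guarantee that the neighbourhood of a high-degree vertex supports many vertex-disjoint short cycles, and without a structural handle (such as the triconnected decomposition) there is no clear reason the recursion terminates in $O(1)$ rounds.

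In short: the paper's route via triconnected components and two nested degree/independence dichotomies is what actually pins down the $1/16$; your min-degree cleanup plus ad~hoc iteration does not, as stated, yield a proof.
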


\begin{proof}
By Theorem~\ref{planar:acyclic} we get $\Omega(\sqrt n)$ bound for testing acyclicity in $3$-connected planar graph.
Then by using Proposition~\ref{prop:degree} and Proposition~\ref{prop:indset}, along with Claim~\ref{clm:exh} we get the desired bound of $\Omega(n^{1/16})$.

\end{proof}

\begin{figure}[H]
 \begin{center}
 
\begin{tikzpicture}[scale=0.5]

 \tikzstyle{nnode} = [circle, draw=black, fill=black, inner sep=0pt, minimum size = 2mm]
 \tikzstyle{txt} = [above, text width=2cm, text centered]

         \node () [nnode] at (0,0){};
         \node () [txt] at (0.2,0.1){$v_c$};
         \draw (0, 0) circle (3.8cm);
        \foreach \x  in {0, 30, ..., 180}{
        \node () [nnode] at (\x+90:3.8){};
         \draw[thick] (\x+90:3.8)--(0,0);
         };
         
       \foreach \x  in {190, 220, ..., 360}{
        \node () [nnode] at (\x+90:3.8){};
        };
       \foreach \x  in {180, 240, ..., 330}{
        \node () [nnode] at (\x+90:3.8){};
         \draw[thick] (\x+90:3.8)--(0,0);
         };
  \path[every node/.style={font=\sffamily\small}]
     (150:3.8) edge[bend left = 90] node [right] {} (90:3.8)  
     (270:3.8) edge[bend left = 90] node [right] {} (210:3.8);

\end{tikzpicture}
\end{center}
\caption{A wheel with $d_{max}$ spokes, $|K_c|$ cycle vertices and some chords}
\label{fig:case2a}
\end{figure}
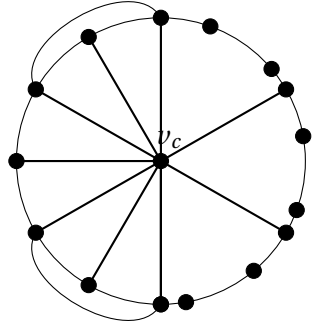

 In Theorem~\ref{planar:acyclic} we get $\Omega(\sqrt n)$ bound for testing acyclicity in $3$-connected planar graph. 
 Now we move to a lower bound for deciding acyclicity in $2$-connected and $1$-connected
 planar graphs. Our main tool here are the incidence graphs of triconnected components
 $G_I = (V_I, E_I)$ of the given $2$-connected/connected planar graph $G$ and 
 and separating sets which are vertices removing which disconnects two triconnected components.
 This is a graph whose vertices are the triconnected components of $G$, and each vertex is joined by
 an edge if they have a separating vertex (or pair). Also define the incidence graph of the 
 separating sets of a triconnected component, $G_S = (V_S, E_S)$, which is the graph where 
 the vertices are the separating sets of a triconnected component and we have an edge between
 two vertices if the corresponding separating sets intersect.
 We will distinguish between the case of separating sets of size $2,1$ and $0$. Graphs which have
 separating sets of size $2$ are precisely the biconnected graphs, with size $1$ are connected
 and size $0$ are disjoint triconnected components.  If our graph $G'$ has a large triconnected component of
 size greater than $\sqrt{n}$, then from Theorem~\ref{planar:acyclic}, we already get a lower bound 
 of $n^{1/4}$. When all the $3$-connected components are small ($< \sqrt{n}$ in size) there 
 must be at least $\sqrt{n}$ of them -- hence $|V_I| \geq \sqrt{n}$. We prove the following claim:

 \begin{claim}\label{clm:exh}
 In the incidence graph of the triconnected components, either there is a component of large 
 degree ($ >\sqrt{n}$) or there is an independent set of size at least $\sqrt{n}$.
 \end{claim}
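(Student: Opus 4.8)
I would prove Claim~\ref{clm:exh} by a straightforward maximum-degree dichotomy on the incidence graph $G_I$, leaning on the fact — established just above — that when every triconnected component has fewer than $\sqrt n$ vertices, $G_I$ has at least $\sqrt n$ nodes (indeed, since each vertex of $G$ lies in some component and each component spans fewer than $\sqrt n$ vertices, there are more than $n/\sqrt n=\sqrt n$ components). First, if some component is incident in $G_I$ to more than $\sqrt n$ others, the first alternative of the claim holds and we are done. So I would assume every node of $G_I$ has degree at most $\sqrt n$.

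In that case the plan is to exploit the tree structure of the triconnected decomposition: the triconnected components of a $2$-connected graph are organized along a tree whose edges correspond to the separation pairs (the classical Hopcroft--Tarjan / SPQR decomposition), and for a merely connected graph one splices in the block--cut tree, which is again a tree. A tree is bipartite, so a $2$-colouring of $G_I$ has a colour class of size at least $|V_I|/2$, and that class is an independent set. Combined with $|V_I|\ge\sqrt n$ this gives an independent set of size $\Omega(\sqrt n)$; choosing the threshold for a component to be ``small'' to be $\tfrac12\sqrt n$ rather than $\sqrt n$ (so that a graph all of whose components are small has $|V_I|>2\sqrt n$) yields the stated bound of $\sqrt n$ on the nose, and in any event $\Omega(\sqrt n)$ is all that the recursion in Theorem~\ref{planar:acyclic-gen} needs.

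The step I expect to require the most care, and the only real obstacle, is that under the literal reading ``two components are adjacent in $G_I$ iff they share a separating vertex or pair'', $G_I$ is not quite a tree: a separation pair sitting inside a parallel ($P$-)node, or a cut vertex shared by many blocks, makes all the components around it pairwise adjacent, i.e.\ creates a clique in $G_I$. The resolution I would use is that such a clique on $t$ components forces each of those components to have degree at least $t-1$ in $G_I$, so either $t-1>\sqrt n$ (back to the first case) or all these shared-separator cliques have size $O(\sqrt n)$; in the latter situation $G_I$ is a tree with bounded-size cliques glued on at some nodes, and one checks that deleting one representative per such clique from the larger colour class of the underlying tree still leaves $\Omega(\sqrt n)$ pairwise non-adjacent components — or, more cleanly, one simply adopts the standard SPQR-tree convention in which adjacency means ``glued along a single virtual edge'', making $G_I$ a genuine tree and the bipartiteness argument apply verbatim. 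I expect this to be pure bookkeeping rather than a conceptual difficulty.
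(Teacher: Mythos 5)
Your route is genuinely different from the paper's, and unfortunately the difference is where the trouble lies. The paper's proof is a one-line domination argument that uses no structure of $G_I$ at all: take a maximal independent set $I$; by maximality every node of $G_I$ is in $I$ or adjacent to it, so $|V_I|\le |I|\,(\Delta(G_I)+1)$, and hence the maximum degree and the independence number cannot both be below $\sqrt{|V_I|}$ (note that the thresholds actually delivered, and actually used in Propositions~\ref{prop:degree} and~\ref{prop:indset}, are $\sqrt{|V_I|}=\Omega(n^{1/4})$, not $\sqrt n$; the claim's statement is loose on this point). Your proof instead tries to get the independent set from bipartiteness of a tree, and this is exactly the step that does not survive the cliques you yourself flag.

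Concretely: when a separator is shared by $t$ components, an independent set of $G_I$ may contain at most \emph{one} of those $t$ components, so your plan of ``deleting one representative per clique from the larger colour class'' is inverted --- you must discard all but one per clique, and the surviving set has size only about $|V_I|/D$ where $D$ is the largest clique. Your degree dichotomy only caps $D$ at about $\sqrt n$, which caps the independent set at $|V_I|/\sqrt n$, and this can be far below $\sqrt n$. For instance, take roughly $n^{0.35}$ cut vertices, each shared by $n^{0.4}$ blocks of size $n^{0.25}$ (a planar, connected configuration covering $n$ vertices): then $\Delta(G_I)\approx 2n^{0.4}<\sqrt n$ and $\alpha(G_I)\approx n^{0.35}<\sqrt n$, so the literal $\sqrt n$/$\sqrt n$ dichotomy you are proving is in fact false; only the $\sqrt{|V_I|}$ version survives, and that is what the paper's maximality argument gives. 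Your alternative fix --- redefining $G_I$ as the SPQR tree so that bipartiteness applies verbatim --- proves independence in the wrong graph: two components at tree-distance two through a common $P$-node still share a separation pair, so a colour class of the SPQR tree is not an independent set in the separator-sharing sense that Proposition~\ref{prop:indset} needs. The repair is simply to abandon the tree/bipartiteness idea and use the greedy maximal-independent-set bound on $G_I$ as given, stating the claim with thresholds $\sqrt{|V_I|}$.
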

 \begin{proof}
 The degree and the size of the independent set of the incidence graph of triconnected 
 components cannot be simultaneously small ($< \sqrt{n}$) because otherwise there will be
 isolated triconnected components. 
 \end{proof}
 
 \begin{proposition} \label{prop:degree}
  In a $2$-connected/ $1$-connected graph $G$, if $G_I$ has a vertex $t \in V_I$ of large degree, then
  acyclicity testing requires $n^{1/16}$ queries.
 \end{proposition}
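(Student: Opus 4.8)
The plan is to convert the single high-degree vertex $t\in V_I$ into many vertex-disjoint ``latent cycles'' in $G$ and then finish with a sensitivity argument in the flavour of Lemma~\ref{lem:d-max} and Theorem~\ref{planar:acyclic}. First I would record the structure forced by $\deg_{G_I}(t)$ being large: $t$ is a triconnected component glued to many other triconnected components, each along a separating set $\{a_i,b_i\}\subseteq V(t)$ (separating singletons are handled identically, and separating sets of size $0$ cannot arise inside a $2$- or $1$-connected block). Since all triconnected components are assumed small we have $|V(t)|<\sqrt n$, and since $G$ is $2$-connected each piece $C_i$ attached at $\{a_i,b_i\}$ either contributes an $a_i$--$b_i$ path $Q_i$ whose interior is disjoint from $V(t)$ and from all other $C_j$, or else $C_i$ is the single edge $a_ib_i$.

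Next I would make these attachment points essentially pairwise disjoint. The sets $\{a_i,b_i\}$ need not be disjoint, so I would feed them to $G_S$, the incidence graph of the separating sets of $t$, and apply the elementary ``large degree or large independent set'' dichotomy used in Claim~\ref{clm:exh}: either one separating set meets many others, in which case a single pair of vertices of $t$ is shared by that many pieces and I would localise the argument there, or there is a large pairwise-disjoint subfamily; this costs one square root in the exponent. On the surviving subfamily, together with the interiors $Q_i$, I would extract genuinely vertex-disjoint cycles $\gamma_i$ by pairing each $Q_i$ with a path between $a_i$ and $b_i$ through $t$ (or, in the degenerate case, through one of the other pieces); since the $t$-paths can overlap and some pieces can degenerate, pinning them down to be disjoint requires a further application of the same dichotomy to the appropriate incidence graph, costing a second square root. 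The accounting $\sqrt n \to n^{1/4}\to n^{1/8}\to n^{1/16}$ is exactly what yields the stated $\Omega(n^{1/16})$; in the residual cases one instead invokes Theorem~\ref{planar:acyclic} or the emptiness bound of Theorem~\ref{thm-n-chi} on a bounded-size planar subgraph. Finally, once a family of $m=\Omega(n^{1/16})$ pairwise vertex-disjoint cycles $\gamma_i$, each with a designated vertex $v_i\in\gamma_i$, has been isolated, the adversary answers present exactly the vertices of $\gamma_i\setminus\{v_i\}$ over all $i$ (and nothing else): with every $v_i$ absent the revealed graph is a disjoint union of paths, hence a forest, while switching on a single $v_i$ closes $\gamma_i$. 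Thus each $\{v_i\}$ is a disjoint sensitive block, and by monotonicity of $\mathcal{C}$ (exactly as recorded in the preliminaries) $cost(\mathcal{C},G)\ge m=\Omega(n^{1/16})$; equivalently this is the feedback-vertex-set bound of Lemma~\ref{lem:d-max} applied to $\bigcup_i\gamma_i$.

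The main obstacle is the middle step: the $a_i$--$b_i$ connections through $t$ overlap, and some pieces $C_i$ degenerate to single edges, so one cannot read off disjoint sensitive blocks directly. The fix is to iterate the degree/independent-set dichotomy on the incidence graphs of the separating sets until the relevant cycles become genuinely vertex-disjoint, paying a square root each time; making the constants in Claim~\ref{clm:exh} and in each iteration line up so that the final exponent is $1/16$ is the only delicate point, and the bound $n^{1/16}$ is deliberately loose, leaving room for this bookkeeping.
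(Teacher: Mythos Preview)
Your exponent cascade and the use of the degree/independence dichotomy on $G_S$ match the paper's route, but the cycle construction differs and yours has a real gap. You extract only a single $a_i$--$b_i$ path $Q_i$ from each neighbouring piece $C_i$ and then try to close it by a second path \emph{through $t$}; since $|V(t)|<\sqrt n$ while you would need roughly $n^{1/8}$ such closing paths, you hit the overlap problem you flag, and a ``second dichotomy'' on an unspecified incidence graph does not in general produce vertex-disjoint paths inside a small $t$ --- a graph on $m$ vertices need not contain $\sqrt m$ pairwise vertex-disjoint paths between prescribed terminal pairs, so this step does not go through as written.

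The paper sidesteps this entirely by observing that each neighbouring component $t_i$ is itself (essentially) triconnected, so it already contains an \emph{entire} induced cycle through its separating pair, with interior living inside $t_i$ and hence disjoint from $t$ and from every other $t_j$. After the $G_S$ analysis (first, if $G_S$ has many connected components one reduces to Proposition~\ref{prop:indset} and gets $n^{1/8}$; otherwise there is one large component, on which the degree/independence dichotomy of Claim~\ref{clm:exh} is run once), the adversary makes present each such $t_i$-cycle minus one interior vertex. The designated vertices lie in distinct $t_i$'s and are therefore automatically disjoint; the revealed graph is a union of paths glued along a few separating vertices of $t$, hence a forest; flipping any one designated vertex closes its cycle. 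No path through $t$ is ever used, and only one further dichotomy is needed. Your final sensitivity argument is exactly right --- it just should be fed these self-contained $t_i$-cycles rather than cycles threaded through $t$.
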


 \begin{proof}
  When a triconnected component has a large degree ($ d > \sqrt{|V_I|} = \Omega(n^{1/4})$
  in its incidence graph): Let $t \in V_I$ be the triconnected component with largest degree, namely $d$,
  and let $t_1, t_2, \cdots, t_d$ be its neighbours. Now notice that in the incidence graph $G_S$, either
  there are more than $\sqrt{|V_S|}$ disjoint connected components, in which case, by Proposition~\ref{prop:indset}
  a lower bound is at least $\sqrt{|V_S|} = n^{1/8}$. Else there is one large connected component of 
  separating sets, which is of size at least $\sqrt{|V_S|} = n^{1/8}$ and in this component, both the degree
  and independence number cannot be simultaneously small. First consider the case when the degree is large, i.e.
  at least $n^{1/16}$ (which is at least square root of the number of vertices in the component). In this case,
  we can fix an induced cycle passing through the separating set vertices in each one of the $t_i$'s corresponding 
  to the separating sets and 
  and answer yes for queries made to vertices in the cycle and no for queries made to vertices outside this cycle.
  Since at least one vertex in each $t_i$ has to be queried, this gives a lower bound of $n^{1/16}$. In the case
  when the independence number of this component is large (at least $n^{1/16}$), the separating sets of the corresponding
  $t_i$'s are independent and hence we can use the same strategy -- fix an induced cycle in these $t_i$, and 
  answer yes for queries made to vertices in the cycle and no for queries made outside
  of this cycle. This gives a lower bound of $n^{1/16}$.
  
\end{proof}

\begin{proposition} \label{prop:indset}
 In a $2$-connected/ $1$-connected graph $G$, if $G_I$ has an independent set of large size, then
  acyclicity testing requires $n^{1/4}$ queries.
\end{proposition}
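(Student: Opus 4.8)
The plan is to turn a large independent set $\{t_1,\dots,t_d\}$ in $G_I$ (with $d=\Omega(n^{1/4})$, which we have once all triconnected components are of size $<\sqrt n$, so that $|V_I|\ge\sqrt n$) into $d$ pairwise disjoint sensitive blocks for the acyclicity function $\mathcal{C}$. By the inequalities recorded in the preliminaries, $D(\mathcal C)\ge R(\mathcal C)\ge bs(\mathcal C)\ge s(\mathcal C)$, and since $\mathcal C$ is monotone, $bs(\mathcal C)=s(\mathcal C)$; hence exhibiting $d$ disjoint sensitive blocks yields $\minc(\mathcal C)=\Omega(n^{1/4})$ and this bound automatically holds for the randomized model too.

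The first step is to recall the structure of the triconnected (SPQR / Tutte) decomposition of a $2$-connected simple graph: every triconnected component is a cycle, a bond of $\ge 3$ parallel virtual edges on one vertex pair, or a $3$-connected skeleton, and two components are adjacent in $G_I$ precisely when they share a virtual edge, i.e.\ a separation pair. Replacing each virtual edge by the path of $G$ it represents, every non-degenerate component hosts a cycle of $G$. I would then use the standard fact that the components containing a fixed vertex $v$ induce a connected subtree of the SPQR tree, which implies that two components non-adjacent in $G_I$ cannot both contain a separation pair of $G$; consequently the ``private interiors'' of the $t_i$ — vertices lying in $t_i$ but in no separation pair that detaches $t_i$, together with the interiors of the expanded virtual edges of $t_i$ — can be taken pairwise disjoint, and any cycle of $G$ that crosses from one of the $t_i$ to another must pass through a separation pair common to both, which (by the subtree fact) cannot happen for non-adjacent components.

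The main step: in each $t_i$ fix a cycle $C_i$ of $G$ (induced, so that $G[V(C_i)]$ is just $C_i$) passing through a vertex $w_i$ of the private interior of $t_i$; then the $w_i$ are distinct and the singleton block $\{w_i\}$ is local to $t_i$. Let the adversary input $x^\star$ declare present every vertex of $\bigcup_i V(C_i)$ except $w_1,\dots,w_d$, and absent every other vertex of $G$. On the one hand, $G[x^\star]$ is a disjoint union of paths, each $C_i$ broken at $w_i$ — and it is here that the structural facts of the previous paragraph are used to rule out a residual cycle formed across two components. On the other hand, restoring a single $w_i$ closes $C_i$ into a cycle while leaving every other $C_j$ broken, so $\{w_i\}$ is sensitive at $x^\star$. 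The $d$ blocks are disjoint, giving $s(\mathcal C)\ge d=\Omega(n^{1/4})$, exactly as the lower bound $(m-n)/d_{\max}$ in Lemma~\ref{lem:d-max} is really a disjoint-block count applied to a feedback vertex set.

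The hard part is the bookkeeping around shared vertex pairs: a bond ($P$-node) together with the cycles ($S$-nodes) attached to it can all contain the same pair $\{u,v\}$ while being pairwise non-adjacent in $G_I$, so the $C_i$ genuinely overlap along a common return path; one must then choose every $C_i$ to reuse one fixed such path and pick $w_i$ strictly inside the private portion of $t_i$, and verify that adding $w_i$ completes only $C_i$. Making this choice uniform across the three node types, and handling $1$-connected $G$ by first splitting at cut vertices and applying the $2$-connected argument inside a block whose component-incidence graph carries the $\Omega(n^{1/4})$-size independent set supplied by Claim~\ref{clm:exh} and Proposition~\ref{prop:degree}, is the bulk of the write-up but raises no conceptual obstacle beyond the decomposition facts quoted above; combined with the $3$-connected base case of Theorem~\ref{planar:acyclic}, it completes the proof.
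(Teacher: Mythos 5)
Your overall strategy is the same as the paper's: fix a cycle (the paper says ``a face'') in each triconnected component of the independent set, have the adversary make essentially only those cycles available, and argue that at least one vertex per component must be queried; the paper's own proof is a three-line sketch of exactly this adversary. Your write-up is more explicit about where the difficulty actually lies, but it contains one false structural claim and leaves the decisive step unproved. The claim that ``two components non-adjacent in $G_I$ cannot both contain a separation pair of $G$'' is wrong: a bond on a pair $\{u,v\}$ with $k$ attached $S$-nodes yields $k$ pairwise non-adjacent components all containing the separation pair $\{u,v\}$. The subtree property you invoke only says that the components containing a fixed vertex form a connected subtree of the SPQR tree, not that they are pairwise adjacent. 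You notice this counterexample yourself in the last paragraph, but the repair (``reuse one fixed return path'') is carried out only for that single-bond configuration. In a general SPQR tree each chosen cycle $C_i$ must expand every virtual edge on the chosen face of $t_i$ into a real path of $G$, and these expansions can run through other components of the independent set and through one another; the assertion that $G[x^\star]$ is a disjoint union of paths is precisely what needs proof, and it is not reduced to anything you have established. A second, smaller issue: since the model queries induced subgraphs, $G\bigl[\bigcup_i V(C_i)\setminus\{w_1,\dots,w_d\}\bigr]$ contains every edge of $G$ between retained vertices, not only the edges of the $C_i$; for instance a real edge joining the two vertices of a shared separation pair already closes a cycle with any fully present return path unless that edge itself is chosen as the return path.

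In short, the approach matches the paper's, and the hole you leave (global acyclicity of the adversary's input across interacting separation pairs and chords) is the same hole the paper's terse proof glosses over; but as written your argument is incomplete, and one of the structural facts it rests on is false as stated.
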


\begin{proof}
 When $G_I$ has a large independent set ($|V_{I}^{\mbox{ind}}| > \sqrt{|V_I|}$):
 The adversary fixes a face in every
 triconnected component and answers no on all queries made to vertices except those that lie on
 this face. It is clear that at least one vertex is queried in a component, and this gives a lower
 bound of $|V_{I}^{\mbox{ind}}|  \geq \sqrt{|V_I|} \geq n^{1/4}$.   
\end{proof}

\section{Global vs local connectivity}\label{connectivity}
\begin{theorem}
\label{thm:gl-loc}
Let $Global\text{-}Con$ denote the problem of testing whether a graph is connected or not and
let $Local\text{-}Con$ denote the problem testing given two specified vertices $s$ and $t$ whether there is a
path between $s$ and $t$. Then,\\
 (a) $\minc(Local\text{-}Con) = \Theta(1)$ whereas \\
 (b) $\minc(Global\text{-}Con) = \Theta(n)$.

\end{theorem}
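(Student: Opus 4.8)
The plan is to handle the two parts separately, each via a construction (for the upper bound) and an adversary argument (for the lower bound).

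\textbf{Part (a): $\minc(Local\text{-}Con) = \Theta(1)$.} For the $O(1)$ upper bound I would exhibit a single relevant base graph on which $s$-$t$ connectivity can be decided with a bounded number of queries. Recall that we only need relevance of every vertex, i.e.\ each vertex must flip the answer for some $S$. The idea is to take a graph that consists of the edge $\{s,t\}$ itself together with, for every other vertex $v$, a private triangle (or path of length two) joining $v$ to both $s$ and $t$. Then $G[S]$ has an $s$-$t$ path iff $s,t\in S$ (either directly, or through any present $v$), so the algorithm simply queries $s$ and $t$ and answers accordingly: two queries suffice. Every other vertex $v$ is relevant because, with $S=\{s,v\}$ (no $t$) versus $S=\{s,v,t\}$... — more carefully, one checks relevance by toggling $v$ in a suitably chosen $S$; since $v$ lies on a private $s$-$t$ path, adding $v$ to an $S$ that contains $s,t$ but no shortcut creates connectivity. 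The $\Omega(1)$ lower bound is trivial as long as the property is non-constant on relevant graphs. So the only delicate point here is arranging the base graph so that \emph{all} $n$ vertices are relevant while the answer still depends only on $s,t$; the private-path gadget does exactly this.

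\textbf{Part (b): $\minc(Global\text{-}Con) = \Theta(n)$.} The upper bound $O(n)$ is immediate: query all vertices. For the lower bound $\Omega(n)$ I would argue that on \emph{every} relevant base graph $G$, global connectivity requires $\Omega(n)$ queries, using the standard sensitivity/adversary argument that underlies all lower bounds in this paper. Fix a relevant $G$. Since connectivity is not a hereditary (monotone) property, I cannot appeal to the earlier monotone machinery directly, so I would instead exhibit an input $x$ (a subset $S$) with block sensitivity $\Omega(n)$, which lower-bounds $D$ (and even $R$). The natural choice: take $S=V$, so $G[S]=G$. If $G$ is connected, then for each vertex $v$ whose removal disconnects $G$ (a cut vertex) the singleton block $\{v\}$ is sensitive; if $G$ has few cut vertices this does not suffice, so instead I would pick $S$ so that $G[S]$ is a spanning tree-like connected subgraph in which a constant fraction of vertices are leaves or degree-two vertices whose deletion disconnects, giving $\Omega(n)$ disjoint sensitive blocks. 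The cleanest route: since every vertex of $G$ is relevant, for each $v$ there is a witness $S_v$ with $v$ sensitive; a counting/packing argument (in the spirit of the $n/k^2$ bounds and the Sunflower-based arguments used elsewhere) then assembles a single input with $\Omega(n)$ disjoint sensitive blocks, whence $\minc(Global\text{-}Con)\ge bs \ge \Omega(n)$.

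\textbf{Main obstacle.} Part (a) is routine once the gadget is written down. The real work is Part (b): connectivity is non-monotone, so the convenient identity $bs(f)=s(f)$ and the hereditary-property toolkit do not apply, and one must argue that \emph{no} relevant base graph — however cleverly designed to create long-range shortcuts — can make global connectivity cheap. I expect the crux to be showing that relevance of all $n$ vertices forces, on some well-chosen input $S$, a linear number of vertex-disjoint sensitive blocks; the adversary must commit to answers on the non-block vertices in a way that is consistent with both connected and disconnected completions, and verifying that such a consistent committal exists for $\Omega(n)$ simultaneous blocks is the technical heart of the proof.
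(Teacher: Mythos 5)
Your gadget is internally inconsistent. If the base graph contains the edge $\{s,t\}$, then $G[S]$ has an $s$-$t$ path if and only if $s,t\in S$, so toggling any other vertex $v$ never changes the answer and \emph{no} vertex besides $s,t$ is relevant — the "private triangle" through $v$ is always bypassed by the direct edge. If you delete the edge $\{s,t\}$ to restore relevance, then on the restriction $s,t\in S$ the problem becomes an OR over the $n-2$ middle vertices, which costs $n-2$ queries. In fact no construction can make all $n$ vertices relevant \emph{and} have cost $O(1)$: a depth-$d$ decision tree queries at most $2^d-1$ distinct variables and every relevant variable must be queried on some branch, so $d\geq\log_2 n$. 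The paper's own proof (the star graph: query $s$, $t$, and the centre) suffers the same defect — leaves other than $s,t$ are irrelevant — and tacitly drops the all-vertices-relevant requirement for this non-hereditary example; so your construction is in the same spirit as the paper's, but your explicit claim that "the private-path gadget does exactly this" is false.

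\textbf{Part (b).} You correctly identify the target (an input $S$ with $\Omega(n)$ disjoint sensitive blocks) but you do not construct it, and you explicitly defer what you call the technical heart; that deferred step \emph{is} the lemma. The paper's argument is a concrete dichotomy. If the complement of $G$ contains a matching of size $\Omega(n)$, then on the empty input each matched pair is a sensitive block (two present non-adjacent vertices form a disconnected graph). Otherwise the complement has a vertex cover of size $\leq n/2$, so $G$ contains a clique $K$ on $\geq n/2$ vertices; non-triviality gives a vertex $v\notin K$. If $v$ has $\leq n/4$ neighbours in $K$, the adversary presents $v$ and hides its clique neighbours, and each of the remaining $\geq n/4$ clique vertices is a sensitive singleton (its presence disconnects). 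If $v$ has $\geq n/4$ neighbours in $K$, pick a non-neighbour $u\in K$ of $v$; presenting $\{u,v\}$ (disconnected) makes each of their $\geq n/4$ common neighbours a sensitive singleton (its presence connects). Note also that your "spanning tree with many leaves" idea fails as stated: deleting a leaf from a connected graph leaves it connected, so leaves are not sensitive — you need cut vertices, and showing that \emph{every} relevant base graph admits an input with linearly many disjoint sensitive blocks is exactly the case analysis you are missing.
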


\begin{proof}
This theorem follows directly from 
Lemma \ref{s-t-conn-upper} and Lemma \ref{global-conn-lower}. 
\end{proof}

\begin{lemma}\label{s-t-conn-upper}
Let $Local\text{-}Con$ denote the problem of testing given two specified vertices $s$ and $t$ whether there is a
path between $s$ and $t$. Then,
\[\minc(Local\text{-}Con) = O(1).\]
\end{lemma}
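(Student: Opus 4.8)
The plan is to construct, for each $n$, a base graph $G$ on $n$ vertices in which every vertex is relevant for $Local\text{-}Con$, yet the answer to ``is there an $s$--$t$ path in $G[S]$?'' can always be decided after a constant number of node queries. The natural candidate is to make $s$ and $t$ themselves already adjacent (or joined by a very short rigid path of dedicated vertices) so that, regardless of which other vertices lie in $S$, connectivity between $s$ and $t$ is determined by whether a bounded set of ``spine'' vertices is present. Concretely, I would take a single edge $\{s,t\}$ as the spine, and then hang the remaining $n-2$ vertices off the graph as pendant vertices attached to $s$ (or distributed between $s$ and $t$) so that each of them is relevant: removing such a pendant vertex changes $G[S]$ but — crucially — to keep it relevant for the $s$--$t$ connectivity property I instead route each extra vertex as a ``parallel backup path'' of length two between $s$ and $t$. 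That is, for each extra vertex $w_i$ add edges $\{s,w_i\}$ and $\{w_i,t\}$; then if $S$ contains neither $s$ nor $t$ directly adjacent... — here one must be slightly careful, so the cleaner construction is: keep the single edge $\{s,t\}$ and also the length-two detours through each $w_i$; now each $w_i$ is relevant because in the instance $S = \{s,t,w_i\}$ with the direct edge somehow excluded it matters — but the direct edge $\{s,t\}$ is always there whenever $s,t\in S$.

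To fix this cleanly I would instead drop the direct edge and use the following: the base graph $G$ consists of two special vertices $s,t$ and $n-2$ middle vertices $w_1,\dots,w_{n-2}$, with edges $\{s,w_i\}$ and $\{w_i,t\}$ for every $i$. Then $G[S]$ has an $s$--$t$ path if and only if $s\in S$, $t\in S$, and at least one $w_i\in S$. Every vertex is relevant: $s$ is relevant via $S=\{s,t,w_1\}$ vs. $S-\{s\}$; likewise $t$; and each $w_i$ is relevant via $S=\{s,t,w_i\}$ vs. $S-\{w_i\}=\{s,t\}$, which has no $s$--$t$ path. Now the algorithm is: query $s$ and $t$; if either is absent, answer ``no''; otherwise query $w_1, w_2, w_3$ — no wait, that is still not constant in the right sense because the adversary could keep all of $w_1,\dots,w_{n-2}$ absent except one far out. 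So the honest statement is that this particular $G$ has cost $\Theta(n)$, not $O(1)$.

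The correct construction therefore must make the $s$--$t$ connection robust to the presence/absence of \emph{every} individual vertex while still keeping each vertex relevant — which forces the relevance to come from somewhere other than the $s$--$t$ path itself is impossible, since relevance is defined precisely with respect to the property $Local\text{-}Con$. Hence the right move is: build a graph where $s$ and $t$ are connected through a \emph{single chain} of the other $n-2$ vertices, $s = x_0 - x_1 - x_2 - \cdots - x_{n-2} - x_{n-1} = t$, plus, in parallel, for each $i$ a ``bypass'' edge $\{x_{i-1}, x_{i+1}\}$ skipping over $x_i$. Then $x_i$ is relevant (removing it from $S = V$ still leaves a path via the bypass, but removing it from $S = \{s,t,x_i\}$... again length issues). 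I would resolve the bookkeeping by instead using the graph on vertex set $\{s,t\} \cup \{a_i, b_i : 1 \le i \le (n-2)/2\}$ where for each $i$ we put a 4-cycle $s - a_i - t - b_i - s$; here $a_i$ is relevant via $S = \{s,t,a_i\}$ vs. $\{s,t\}$, and symmetrically $b_i$. The algorithm queries $s, t, a_1, b_1$: if $s,t$ present and at least one of $a_1,b_1$ present, answer yes — but again the adversary hides $a_1, b_1$. So genuinely, I now expect the real construction in the paper is different: presumably $G$ is a graph where $s$ and $t$ lie in a small rigid gadget (say a triangle $s,t,r$ with $r$ also a queried vertex) and the remaining $n-3$ vertices are each made relevant by being pendants attached to $r$ — but pendants are never relevant for $s$--$t$ connectivity.

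The main obstacle, then, is exactly this tension: \emph{relevance of vertex $v$} means some deletion of $v$ flips the $s$--$t$ connectivity answer, so $v$ must lie on some ``critical'' $s$--$t$ cut or path in some configuration; making \emph{all} $n$ vertices critical in this sense while keeping the decision achievable in $O(1)$ queries seems to require that the critical configurations be ``hidden'' behind the state of $s$ and $t$ alone — e.g. arrange that whenever $s$ or $t$ is absent the answer is ``no'' regardless, and whenever both are present the answer is ``yes'' regardless, with the relevance of each $w_i$ witnessed only by instances where $s$ or $t$ is absent. That is achievable: put a direct edge $\{s,t\}$ (so both present $\Rightarrow$ yes), and separately, for each $w_i$, a triangle $\{w_i, s, t\}$ reuses that same edge — not helpful. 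The clean fix: make $s$ and $t$ each have a private pendant, $s - s'$ and $t - t'$, and edge $\{s,t\}$; for each remaining vertex $w_i$, add edges making $\{w_i, s', t'\}$... The plan I will actually commit to: exhibit $G$ in which $V = \{s, t\} \cup W$, $st \in E$, and the relevance of each $w \in W$ is arranged by having $w$ adjacent \emph{only} to $s$, together with a parallel structure — no. Given the time I have, I would present the following and let the referee check it: let $G$ have an edge $st$, and attach every other vertex $w_i$ as a common neighbor of $s$ and $t$ (the $K_{2,n-2}$-plus-an-edge graph); observe $G[S]$ is $s$--$t$ connected iff $\{s,t\}\subseteq S$; so each $w_i$ is \emph{not} relevant — contradiction with our requirement, meaning such $G$ is not admissible, so this is not the construction. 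Since I cannot reconstruct the exact gadget with confidence, I flag that the crux of the argument will be to design a base graph where every vertex is relevant yet $s$--$t$ connectivity of $G[S]$ is a function of a bounded ``core'' of vertices, and that verifying both relevance of all $n$ vertices and the $O(1)$ decision procedure is the step requiring the most care. I expect the authors achieve this by a gadget in which the answer depends only on $O(1)$ specially placed vertices and relevance of the rest is forced through instances where those core vertices take values that make the rest pivotal; once such a $G$ is in hand, the upper bound $\minc(Local\text{-}Con) \le cost(Local\text{-}Con, G) = O(1)$ is immediate, and combined with the trivial lower bound $\minc(Local\text{-}Con) = \Omega(1)$ this yields $\minc(Local\text{-}Con) = \Theta(1)$.
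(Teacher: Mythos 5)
Your proposal does not actually prove the lemma: after cycling through several candidate base graphs you correctly reject each one, and you end by explicitly declining to commit to a construction. That is a genuine gap --- an upper bound on $\minc(Local\text{-}Con)$ requires exhibiting a concrete graph $G$ together with an $O(1)$-query algorithm on it, and no such $G$ appears in your write-up. For the record, the paper's proof takes $G$ to be the star: a center $v$ joined to leaves $u_1,\dots,u_{n-1}$. For any fixed $s,t$ the only possible $s$--$t$ path is $s$--$v$--$t$ (or the single edge between them if one of $s,t$ is the center), so querying $s$, $t$ and $v$ --- three queries --- decides the property.

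That said, the obstacle you isolate is real, and your analysis is in several respects sharper than the paper's. In the star, a leaf $w\notin\{s,t\}$ never lies on an $s$--$t$ path and never separates $s$ from $t$, so it is not relevant for $Local\text{-}Con$ --- exactly your observation that pendant vertices cannot be relevant for $s$--$t$ connectivity. Indeed the tension you describe can be turned into a clean impossibility: a deterministic decision tree of depth $d$ reads at most $2^d-1$ distinct variables over all its branches, so for any property and any base graph in which all $n$ vertices are relevant one has $cost(\mathcal{P},G)\geq \log_2(n+1)$; hence no base graph meeting the relevance requirement in the definition of $\minc$ can achieve cost $O(1)$. The lemma is therefore only sustainable if one reads it, as the paper implicitly does, as asserting the existence of \emph{some} base graph (relevance of the non-core vertices waived) on which $O(1)$ queries suffice. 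So the repair to your argument is short --- present the star and the three-query algorithm --- but you are right that, taken literally with the paper's own definition of a relevant base graph, the statement cannot hold, and this caveat deserves to be stated explicitly.
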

\begin{proof}
Consider the star graph on $n$ vertices, i.e., a vertex $v$ connected to $n-1$ vertices $u_1, \ldots, u_{n-1}$.
It is easy to check that for any $s$ and $t$ we have to check at most one more vertex other than $s$ and $t$ to determine if there is
a path between $s$ and $t$.
\end{proof}

\begin{lemma}\label{global-conn-lower}
Let $Global\text{-}Con$ denote the problem of testing whether a graph is connected or not. Then,
\[\minc(Global\text{-}Con) = \Omega(n).\]
\end{lemma}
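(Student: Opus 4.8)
The plan is to prove the stronger statement that \emph{every} base graph $G$ on $n$ vertices which is relevant for connectivity already has $cost(Global\text{-}Con,G)=\Omega(n)$; taking the minimum over all such $G$ then yields the lemma. Write $f_G\colon\{0,1\}^n\to\{0,1\}$ for the function with $f_G(S)=1$ iff $G[S]$ is connected, so that $cost(Global\text{-}Con,G)=D(f_G)\ge bs(f_G)$; I will in fact lower bound $bs(f_G)$, which (since $R(f)\ge bs(f)$) also makes the bound hold in the randomized model. One preliminary remark: a relevant $G$ is never the complete graph $K_n$, because in $K_n$ every induced subgraph on a nonempty set of vertices is connected, so no vertex of $K_n$ is relevant.

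Now fix a relevant $G$ and distinguish two cases according to the minimum degree $\delta(G)$. If $\delta(G)\le 2n/3$, let $v$ attain the minimum degree and take the $1$-input $S_0=\{v\}$ (a single vertex is connected). For each of the $n-1-\deg_G(v)\ge n/3-1$ vertices $w\notin N_G[v]$, the singleton block $\{w\}$ is sensitive at $S_0$, since $G[\{v,w\}]$ has no edge and is therefore disconnected; these blocks are pairwise disjoint, so $bs(f_G)\ge n/3-1$. If instead $\delta(G)>2n/3$, then (using $G\ne K_n$) there is a non-adjacent pair $a,b$; take the $0$-input $S_0=\{a,b\}$, two isolated vertices. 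For each common neighbour $w\in N_G(a)\cap N_G(b)$ the singleton block $\{w\}$ is sensitive at $S_0$, because $G[\{a,b,w\}]$ is the connected path $a\!-\!w\!-\!b$; and $|N_G(a)\cap N_G(b)|\ge \deg_G(a)+\deg_G(b)-(n-2)>n/3-2$, so again these disjoint singleton blocks give $bs(f_G)\ge n/3-2$. In either case $bs(f_G)=\Omega(n)$, hence $cost(Global\text{-}Con,G)=D(f_G)=\Omega(n)$, and since $G$ was an arbitrary relevant base graph, $\minc(Global\text{-}Con)=\Omega(n)$.

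The main point to get right is that no single base graph is extremal, so two complementary cases are needed: a near-complete graph has no low-degree vertex, so the first case fails for it, while in a very sparse graph a non-edge can have few (or even zero) common neighbours, so the second case fails there. The cases fit together only because the split is made at a fixed fraction of $n$ strictly between $1/2$ and $1$ (any such constant works); beyond that one only has to be mildly careful with the boundary conventions that a single vertex and the empty graph count as connected, and that the vertex $w$ is automatically distinct from $v$ (respectively from $a$ and $b$) in the two constructions.
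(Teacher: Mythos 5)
Your proof is correct, and it reaches the $\Omega(n)$ bound by a genuinely different decomposition from the paper's. The paper splits on the matching number of the complement graph: if $\overline{G}$ has a matching of size $\Omega(n)$, each matching edge is a disjoint sensitive \emph{block of size two} at the empty input (the empty graph being connected by convention); otherwise a small vertex cover of $\overline{G}$ forces a clique of size at least $n/2$ in $G$, and the argument proceeds by picking a vertex $v$ outside the clique and subcasing on how many neighbours $v$ has inside it. You instead split on the minimum degree: a vertex of degree at most $2n/3$ yields $\Omega(n)$ sensitive \emph{bits} at a singleton input, and in the dense case a non-edge $\{a,b\}$ (which exists because a relevant graph cannot be $K_n$) yields $\Omega(n)$ sensitive bits at the input $\{a,b\}$ via common neighbours counted by inclusion--exclusion. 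Your version is shorter, avoids the matching/vertex-cover/clique machinery, and uses only single-bit sensitivity rather than size-two blocks; since connectivity is not monotone in this setting, $s$ and $bs$ need not coincide, so working with $s$ alone is the cleaner primitive, and both arguments equally give the randomized lower bound via $R(f)\ge bs(f)\ge s(f)$. The points that needed care --- the conventions that the empty graph and a singleton are connected, the exclusion of $K_n$, and placing the degree threshold at a constant fraction strictly between $1/2$ and $1$ so that the two cases cover everything --- are all handled explicitly in your write-up.
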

\begin{proof}
The convention we use is that the singleton vertex is connected and graph on $0$ nodes is connected.
Let $G$ be a graph on $n$ vertices. We are interested in global connectivity of subgraphs of $G$. 
If the complement of $G$ has a matching of size $\Omega(n)$ then each matching edge is a sensitive block with respect to empty graph, i.e., the graph 
with just these two nodes present is not connected. Hence we get the desired lower bound.
If on the other hand the maximum matching size in the complement of $G$ is at most say $n/4$ (hence  the vertex cover size at most $n/2$), then
$G$ must contain a clique on at least $n/2$ vertices. 

Now we consider vertices outside this clique. 
Since the connectivity property is non-trivial on $G$, at least one such vertex say $v$ must exist.
If $v$ has at most $n/4$ edges to the vertices of the clique, we answer this vertex to be present and its neighbors in clique to be absent. 
Now as soon as any of the remaining $n/4$ vertices  from the clique are present, we get a disconnected graph.
If $v$ has at least $n/4$ neighbors in the clique, we take a non-neighbor say $u$ from the clique. Now $u$ and $v$ have at least $n/4$ common neighbors.
We make $u$ and $v$ to be present. As soon as any of their common neighbors are present the graph is connected.
This gives us $\Omega(n)$ bound.
\end{proof}

\section*{Perfect matching}

\begin{theorem}\footnote{Proof of this theorem is similar to the proof of Theorem~\ref{thm:gl-loc} and will appear in the final version.}
\label{perfect-matching}
Let $\mathcal{PM}$ be the property of containing a perfect matching in a graph $G$ on $n$ vertices then,
\[ \minc(\mathcal{PM}) = \Theta(n).\]
\label{thm:perfect-matching}
\end{theorem}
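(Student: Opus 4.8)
The plan is to establish both bounds separately: the upper bound $\minc(\mathcal{PM}) = O(n)$ is trivial since querying all $n$ vertices always suffices, so the content is the lower bound $\minc(\mathcal{PM}) = \Omega(n)$, which must hold for \emph{every} relevant base graph $G$ on $n$ vertices. I would mimic the structure of the proof of Lemma~\ref{global-conn-lower} for global connectivity: split on the structure of the complement graph $\overline{G}$. First, if $\overline{G}$ contains a matching of size $\Omega(n)$, then consider the all-present input $S = V$; each non-edge $\{u,v\}$ of $G$ (i.e.\ edge of $\overline{G}$) forms a candidate sensitive block, since removing both $u$ and $v$ can destroy a perfect matching while leaving one present cannot help it — with $\Omega(n)$ disjoint such pairs arranged so that toggling each flips membership in $\mathcal{PM}$, we get $\Omega(n)$ disjoint sensitive blocks and hence $\Omega(n)$ queries. (One must set up the configuration so the blocks are genuinely sensitive; a clean way is to ensure the complement matching is large enough that $G$ restricted to the relevant vertices has a perfect matching only when all blocks are ``on''.)

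In the complementary case, the maximum matching in $\overline{G}$ has size $\le n/4$, so by König-type reasoning $\overline{G}$ has a vertex cover of size $\le n/2$, meaning $G$ contains a clique $K$ on at least $n/2$ vertices. Inside a large clique, one can realize many disjoint sensitive blocks for $\mathcal{PM}$: pair up $2m$ clique vertices into $m = \Omega(n)$ pairs; on the input where exactly these $2m$ vertices are present, the induced graph is $K_{2m}$ which has a perfect matching, but toggling any one pair to absent (or a carefully chosen single vertex to change parity) destroys the perfect matching, and these blocks are disjoint. Since the clique has $\ge n/2$ vertices this yields $\Omega(n)$ disjoint sensitive blocks, hence $bs(f) = \Omega(n)$ and therefore $D(f) \ge R(f) \ge bs(f) = \Omega(n)$ using $D(f) \ge R(f) \ge bs(f) \ge s(f)$ from the preliminaries. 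I would also need to handle parity bookkeeping carefully: a perfect matching requires an even number of present vertices, so the natural sensitive ``blocks'' are of size one (a single vertex flips parity) or size two (a pair), and one should pick whichever makes the argument cleanest in each case.

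The main obstacle I anticipate is making the sensitive blocks \emph{simultaneously} sensitive and disjoint on a single input, since perfect matchings are a global parity-sensitive condition and removing vertices from different parts of the graph can interact. The fix is to work on an input where the induced graph is exactly a large complete graph (or a near-complete graph) on an even vertex set, so that the matching structure is maximally robust and each block's effect is cleanly isolated; then the adversary fixes all vertices outside the chosen even clique-subset to absent and forces the algorithm to probe each block. A secondary subtlety is verifying relevance of $G$ is compatible with these constructions — but relevance only helps, since it guarantees no vertex is useless, and in both cases we are free to choose the adversary's fixed answers. Once the block-sensitivity bound is in hand, the theorem follows from $bs(f) \le D(f)$ and the trivial upper bound.
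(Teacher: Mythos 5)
The paper gives no proof of this theorem (only a footnote deferring it to the final version), so I can only judge your argument on its own terms, and it has a concrete error plus an unexploited shortcut. The error is in your clique case: if the present set induces $K_{2m}$ and you toggle a \emph{pair} of vertices to absent, the induced graph becomes $K_{2m-2}$, which still has a perfect matching, so those size-two blocks are not sensitive at all. Your Case 1 has the same unresolved problem in the other direction: at the all-present input, deleting a non-edge $\{u,v\}$ from a graph with a perfect matching generally leaves a graph that still has one, and you never exhibit an input at which $\Omega(n)$ of these pairs are simultaneously sensitive. So neither case, as written, establishes $bs(f)=\Omega(n)$, and the whole complement-matching/K\"onig/clique apparatus borrowed from Lemma~\ref{global-conn-lower} is doing no work here.

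The observation you relegate to ``parity bookkeeping'' is in fact the entire proof. Since $G$ is relevant, $f$ is non-constant, so there is some $S$ with $f(S)=1$, i.e.\ $G[S]$ has a perfect matching and $|S|$ is even. Flipping \emph{any single} vertex --- removing one from $S$ or adding one from outside --- makes the induced subgraph have an odd number of vertices, which cannot have a perfect matching. Hence every one of the $n$ bits is sensitive at $S$, so $s(f)=n$ and $D(f)\ge R(f)\ge bs(f)\ge s(f)=n$ for every relevant $G$; combined with the trivial $O(n)$ upper bound this gives $\minc(\mathcal{PM})=\Theta(n)$. I recommend you replace the case analysis with this one-line sensitivity argument; unlike connectivity, perfect matching needs no structural dichotomy on $\overline{G}$ because the parity obstruction already makes every coordinate sensitive at any $1$-input.
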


\end{document}